\setlist[enumerate]{itemsep=0mm}
\newcommand{\ket}[1]{\left|#1\right\rangle}
\newcommand{\ketbra}[2]{\left|#1\right\rangle\!\left\langle #2\right|}
\newcommand{\up}[1]{\push{\raisebox{6pt}{$#1$}}}
\newcommand{\suppress}[1]{}
\def\squareforqed{\hbox{\rlap{$\sqcap$}$\sqcup$}}
\def\qed{\ifmmode\squareforqed\else{\unskip\nobreak\hfil
\penalty50\hskip1em\null\nobreak\hfil\squareforqed
\parfillskip=0pt\finalhyphendemerits=0\endgraf}\fi}
    \newcommand{\inlinecomment}[1]{\Comment {\footnotesize #1} \normalsize}
\newtheorem{theorem}{Theorem}
\newtheorem{lemma}[theorem]{Lemma}
\newtheorem{corollary}[theorem]{Corollary}
\newenvironment{proofof}[1]{\begin{proof}[Proof of~#1.]}{\end{proof}}
\newcommand{\eq}[1]{\hyperref[eq:#1]{(\ref*{eq:#1})}}
\renewcommand{\sec}[1]{\hyperref[sec:#1]{Section~\ref*{sec:#1}}}
\newcommand{\app}[1]{\hyperref[app:#1]{Appendix~\ref*{app:#1}}}
\newcommand{\fig}[1]{\hyperref[fig:#1]{Figure~\ref*{fig:#1}}}
\newcommand{\thm}[1]{\hyperref[thm:#1]{Theorem~\ref*{thm:#1}}}
\newcommand{\lem}[1]{\hyperref[lem:#1]{Lemma~\ref*{lem:#1}}}
\newcommand{\cor}[1]{\hyperref[cor:#1]{Corollary~\ref*{cor:#1}}}
\newcommand{\defn}[1]{\hyperref[def:#1]{Definition~\ref*{def:#1}}}
\newcommand{\tab}[1]{\hyperref[tab:#1]{Table~\ref*{tab:#1}}}
\newcommand{\alg}[1]{\hyperref[alg:#1]{Algorithm~\ref*{alg:#1}}}
\newcommand{\R}{\mathbb{R}}
\newcommand{\GB}{{\rm GB}}
\newcommand{\PAR}{{\rm PAR}}
\newcommand{\qw}[1][-1]{\ar @{-} [0,#1]}
\newcommand{\qwx}[1][-1]{\ar @{-} [#1,0]}
\newcommand{\gate}[1]{*+<.6em>{#1} \POS ="i","i"+UR;"i"+UL **\dir{-};"i"+DL **\dir{-};"i"+DR **\dir{-};"i"+UR **\dir{-},"i" \qw}
\newcommand{\meter}{*=<1.8em,1.4em>{\xy ="j","j"-<.778em,.322em>;{"j"+<.778em,-.322em> \ellipse ur,_{}},"j"-<0em,.4em>;p+<.5em,.9em> **\dir{-},"j"+<2.2em,2.2em>*{},"j"-<2.2em,2.2em>*{} \endxy} \POS ="i","i"+UR;"i"+UL **\dir{-};"i"+DL **\dir{-};"i"+DR **\dir{-};"i"+UR **\dir{-},"i" \qw}
\newcommand{\control}{*!<0em,.025em>-=-<.2em>{\bullet}}
\newcommand{\controlo}{*+<.01em>{\xy -<.095em>*\xycircle<.19em>{} \endxy}}
\newcommand{\ctrl}[1]{\control \qwx[#1] \qw}
\newcommand{\ctrlo}[1]{\controlo \qwx[#1] \qw}
\newcommand{\targ}{*+<.02em,.02em>{\xy ="i","i"-<.39em,0em>;"i"+<.39em,0em> **\dir{-}, "i"-<0em,.39em>;"i"+<0em,.39em> **\dir{-},"i"*\xycircle<.4em>{} \endxy} \qw}
\newcommand{\multigate}[2]{*+<1em,.9em>{\hphantom{#2}} \POS [0,0]="i",[0,0].[#1,0]="e",!C *{#2},"e"+UR;"e"+UL **\dir{-};"e"+DL **\dir{-};"e"+DR **\dir{-};"e"+UR **\dir{-},"i" \qw}
\newcommand{\ghost}[1]{*+<1em,.9em>{\hphantom{#1}} \qw}
\newcommand{\push}[1]{*{#1}}
\newcommand{\lstick}[1]{*!R!<.5em,0em>=<0em>{#1}}
\newcommand{\Qcircuit}{\xymatrix @*=<0em>}
\global\def \arxivmode {}
  \newcommand\arxivonly[1]{#1}
  \newcommand\prlonly[1]{}
  \newcommand\arxivonly[1]{}
  \newcommand\prlonly[1]{#1}
  \renewcommand{\section}[1]{}
\begin{document}

\title{Quantum arithmetic and numerical analysis using Repeat-Until-Success circuits}
\author{Nathan Wiebe}
\author{Martin Roetteler}

\affiliation{Quantum Architectures and Computation Group, Microsoft Research, Redmond, WA (USA)}

\begin{abstract}
We develop a method for approximate synthesis of single--qubit rotations of the form $e^{-i f(\phi_1,\ldots,\phi_k)X}$ that is based on the Repeat-Until-Success (RUS) framework for quantum circuit synthesis.  We demonstrate how smooth computable functions $f$ can be synthesized from two basic primitives. This synthesis approach constitutes a manifestly quantum form of arithmetic  that differs greatly from the approaches commonly used in quantum algorithms.
The key advantage of our approach is that it requires far fewer qubits than existing approaches: as a case in point, we show that using as few as $3$ ancilla qubits, one can obtain RUS circuits for approximate multiplication and reciprocals.  We also analyze the costs of performing multiplication and inversion on a quantum computer using conventional approaches and find that they can require too many qubits to execute on a small quantum computer, unlike our approach.
\end{abstract}
\maketitle

\section{Introduction}
Classical arithmetic has a long history in quantum computation.  Shor's algorithm~\cite{shor:97}, linear systems algorithms~\cite{HHL09} and general purpose quantum simulation algorithms~\cite{BACS07} all deeply rely on  arithmetic functions, which are traditionally implemented using reversible circuits (see, e.g., \cite{Bennett:73,SBM:2006,MMD:2007,LJ:2014}). Beyond direct applications, there is a substantial body of literature that focuses specifically on implementing arithmetic functions in a reversible fashion on a quantum computer, see e.g., \cite{VBE96,BCDP:96,DKR+06,CDKM:2004,TK05,bea02,vMI:2005}.  Unfortunately, the inability of reversible circuits to forget previous parts of the calculation carries with it a heavy price:  the number of qubits required can be large.  For example, the number of qubits required to use Newton's method to compute reciprocals in linear systems algorithms can easily stretch to several hundred qubits if ten or more bits of precision are required~\cite{CPP+13}.  This result is interesting because it suggests that the number of qubits required to implement certain inexpensive classical algorithms (such as Newton's method) may be far greater than the number of qubits needed for the remainder of the quantum algorithm.  Thus new methods for performing arithmetic and function synthesis may be needed to enable computationally useful examples of linear systems, or related algorithms, to be run on a small scale quantum computer.

We address this problem by introducing an alternative method for computing functions on quantum computers.  The idea behind our approach is to encode numbers in the amplitudes of a qubit, or more properly as polar angles on the Bloch sphere.  For example, we represent the number  $\phi$ as the quantum state  $e^{-i \phi X}\ket{0}$ where $X$ is the Pauli--X operator. 
Our approach, in effect, consumes copies of resource states prepared using operations drawn from a set of inputs inputs $\{e^{-i\phi_1 X},\ldots,e^{-i \phi_k X}\}$ to approximate a unitary $e^{-if(\phi_1,\ldots,\phi_k) X}$ for some smooth computable function $f:\mathbb{R}^k\mapsto \mathbb{R}$.  Here $f$ can be an elementary function such as multiplication or it can be a more complicated function like a trigonometric function or a reciprocal.  In this sense, existing circuit synthesis results~\cite{KMM13,WK13,PS13,BRS14,RS14} reduce to solving this problem for cases where $f$ is the constant function.  If we think of this task as arithmetic, rather than a unitary synthesis task, then outputting the result as a rotation may at first glance seem unnatural; however, several important algorithms including quantum linear systems and fitting algorithms~\cite{HHL09,CPP+13,WBL12,W14} require the result to be output in exactly this fashion.  

Computing arithmetic in amplitudes circumvents the use of a qubit representation for the output and removes many of the ancillas needed in the computation.  This results in a substantial reduction in the number of qubits used relative to conventional approaches.  A further advantage of our approach is that $\phi$ need not be input as a qubit string: a quantum circuit that is promised to output $e^{-i \phi X}$ suffices.  A drawback is that it requires the use of amplitude estimation~\cite{BHM+00} or phase estimation when a qubit representation of $f(\phi_1,\ldots,\phi_k)$ is needed.


We leverage the ``Repeat-Until-Success'' (RUS) paradigm of circuit design.  This paradigm broadly has two main features: (a) it allows probabilistic execution of unitaries, in particular the conditional application of unitary operations to parts of the quantum data depending on the outcomes of earlier measurements, and (b) all failures must be detectable and lead to an error that is correctable by a Clifford circuit.   In this sense, our approach is reminiscent of the KLM proposal for performing a CNOT gate in linear optical quantum computing~\cite{KLM01}.   See also~\fig{RUS} for a visualization of an RUS protocol.  The repeat until success moniker is thus earned because such algorithms do not fail but rather can be corrected and repeated until a successful outcome is obtained.

Most of the work on RUS circuits focuses on the  Clifford$+T$ gate set which is the universal gate set \cite{NC00} given by $\{H, S, T, {\rm CNOT}\}$, where $H$ denotes the Hadamard gate, $S={\rm diag}(1,i)$ is the phase gate, the $T=\sqrt{S}$, ${\rm CNOT}$ is the controlled--not gate, and we are allowed to apply these generators to any pair of qubits. The most costly gate in the gate set is assumed to be the $T$ gate because it is by far the most expensive operation to perform fault tolerantly in error correcting codes such as the surface code.  For these reasons, we also use the number of $T$--gates used, or $T$--count, as a metric to gauge the time--efficiency of our methods.

\begin{figure}[t]
\includegraphics[height=5cm]{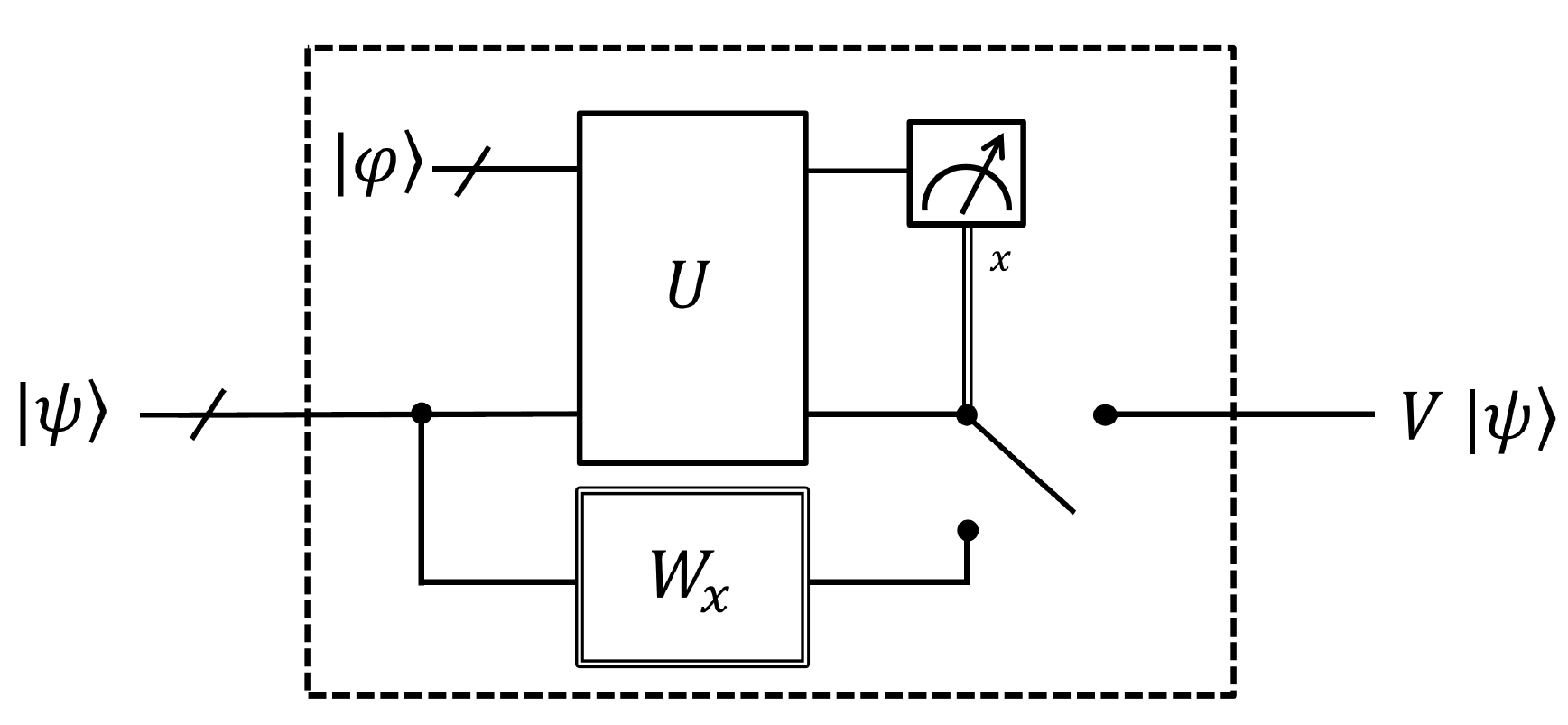}
\caption{\label{fig:RUS} Implementation of a unitary transformation $V$ via a Repeat-Until-Success (RUS) circuit protocol. The unitary transformation $U$ acts on the input state $\ket{\psi}$ and an ancilla state $\ket{\varphi}$ which has been prepared independent of the input. After application of $U$ the register holding the ancilla state is measured and in case the desired result $x$ is obtained---which, without loss of generality, can be chosen to be the outcome ``$(0,\ldots, 0)$''---the state $V \ket{\psi}$ is propagated to the output. Otherwise, a state is obtained that differs from the input $\ket{\psi}$ only by an application of a Clifford operations $W_x$ which itself might depend on the measurement outcome $x$. Hence we can ``route'' the output back to the input of $U$, together with a fresh copy of $\ket{\varphi}$. The routing is indicated by the switch that is controlled by $x$ being different from ``$(0,\ldots,0)$''. After $W_x$ has been applied, another attempt to compute $V \ket{\psi}$ can be started. The classical control of the Clifford gate $W_x$ is also indicated by double lines. The large dashed box indicates that this procedure is repeated until for the first time the desired outcome is measured, in which case the classically controlled ``switch'' is set in such a way that the output state $V\ket{\psi}$ can exist the RUS circuit.}
\end{figure}
The RUS paradigm that has gained interest after it was realized that it offers advantages for synthesizing single qubit unitaries \cite{WK13,PS13,BRS14}.
In particular: \cite{WK13} shows that this paradigm provides very low depth circuits for approximating small rotations and also elementary circuits that can approximately implement the square of the product of the input rotation angles. The circuits used there are RUS in the sense of \fig{RUS}, i.e., in the failure case they implement a Clifford gate that can be inexpensively corrected. In \cite{PS13,BRS14} RUS circuits are used for single qubit synthesis and the probabilistic nature of these circuits can be used to achieve an expected $T$-count of $1.15 \log_2(1/\varepsilon)$ for approximating a single qubit axial--rotation using the Clifford$+T$ gate set, up to an error of $\varepsilon$. This beats a lower bound on the average $T$--count of roughly $3\log_2(1/\varepsilon)$ that is known for the ancilla--free synthesis using Clifford + $T$ gates. 

The repeat until success paradigm has several features that are especially valuable for quantum arithmetic.  First, they allow non--linear functions of the input angles to be computed.  This is very useful for arithmetic because multiplication is itself a non--linear function of its inputs.  Second, the correctability of the circuits allow them to be applied deterministically to an unknown quantum state (although the run time required to apply the transformation will vary).  This is significant because their success probability would shrink exponentially if such failures were not easily correctable.  Third, the transformations are irreversible.  This means that intermediate computations do not have to be retained for the entire calculation.  Finally, the qubit overheads of this form of arithmetic are extremely low since (in principle) the input, output and intermediate results can be stored in amplitudes rather than qubit strings.



The paper is organized as follows.  We discuss the two circuits that form the core of our approximations in~\sec{circuits}.  We show that they can be used to provide an arbitrarily accurate approximation to $e^{-if(x)X}$ for a piecewise continuous function $f$ in~\sec{complete}.  We then apply the ideas in~\sec{complete} to implement multiplication of rotation angles in~\sec{multiplication} and then use both of these ideas to show how to implement reciprocals in~\sec{reciprocal}.  We discuss the use of caching strategies in~\sec{cache}, which provides a way to reduce the cost of recursive function evaluations using RUS circuits and also a way of storing the output as a qubit string.  We introduce an alternative method for implementing functions in~\sec{square} that we call square wave synthesis and apply it to computing reciprocals before discussing the problem of entangled inputs in~\sec{entangle} and then concluding.

\section{RUS Circuit Elements}\label{sec:circuits}
The core idea behind RUS arithmetic (or function synthesis) is to utilize measurement in clever ways to implement non--linear mappings between sets of input and output rotation angles.  This allows us to perform approximate arithmetic (or more general function synthesis tasks) in the rotation angles of qubits.  We need these circuits to have four properties.
\begin{enumerate}
\item The circuit performs a non--linear mapping that takes single qubit rotations whose angles are in $\mathbb{R}^k$ and maps these to a single qubit rotation whose angle is in $\mathbb{R}$ upon success.
\item The circuit will succeed with non--zero probability for all inputs.
\item The action of the circuit on the target state can be inexpensively corrected when the measurements fail.
\item The family of circuits considered must be able to exactly implement functions that scale as $O(\phi_1^{p_1}\ldots\phi_k^{p_k})$ for non-negative integers $p_k$ in the limit as $\max_k\{|\phi_k|\} \rightarrow 0$.
\end{enumerate}
The gearbox circuit~\cite{WK13} is a natural first guess for a class of circuits that satisfies these properties.
It is one of the earliest known classes of repeat until success circuits and has the following action upon success for input rotation angles $(\phi_1,\ldots,\phi_k)$
\begin{equation}
\GB~:~(\phi_1,\ldots,\phi_k)\mapsto \arctan(\tan^2(\arcsin(|\sin(\phi_1)|\cdots|\sin(\phi_k)|))).\label{eq:implicit}
\end{equation}
A diagram of the corresponding circuit is given in~\fig{GB} and the success probability of the gearbox circuits and their correction operations are given in~\tab{succprob}.
To be clear, the gearbox circuit (upon success) implements
\begin{equation}
\GB \left(e^{-i\phi_1 X}\ket{0}\cdots e^{-i\phi_k X}\ket{0} \ket{\psi}\right) = e^{-i \tan^{-1}(\tan^2(\arcsin(|\sin(\phi_1)|\cdots |\sin(\phi_k)|)))X}\ket{\psi}.\label{eq:explicit}
\end{equation}
Since we focus on computing the value of a function in the rotation angle of a state in this paper, we forgo using the more descriptive but cumbersome notation of~\eq{explicit} and instead use that of~\eq{implicit}.

\begin{figure}[t!]
\begin{minipage}{0.45\linewidth}
\[ \Qcircuit @R 1em @C 1.5em { 
\lstick{\ket 0}	&\gate{\phi_1}	&\ctrl{1}		&\gate{-\phi_1}	&\meter\\
		&			&\up{\vdots}  	&				&\\
\lstick{\ket 0}	&\gate{\phi_k}	&\ctrl{1}\qwx	&\gate{-\phi_k}		&\meter\\
		&\qw			&\gate{-iX}	&\qw					&\qw}
\]
\caption{Gearbox circuit with $k$ inputs\label{fig:GB}, where the gate $\phi_j$ denotes $e^{-i \phi_j X}$.  Success is achieved if~\emph{every} measurement reads $0$.}
\end{minipage}
\hspace{0.5cm}
\begin{minipage}{0.45\linewidth}
\[ \Qcircuit @R 1em @C 1.5em { 
\lstick{\ket 0}	&\gate{\phi_1}	&\ctrl{1}		&\multigate{2}{{\rm GHZ}^{-1}}	&\meter\\
		&			&\up{\vdots}  	&				&\\
\lstick{\ket 0}	&\gate{\phi_k}	&\ctrl{1}\qwx	&\ghost{{\rm GHZ}^{-1}}		&\meter\\
		&\qw			&\gate{(i)^{k-1}X}	&\qw					&\qw}
\]
\caption{Generalized $\PAR$ circuit with $k$ inputs\label{fig:PAR}, where ${\rm GHZ}^{-1}$ is the inverse GHZ measurement.  Success is achieved if~\emph{every} measurement reads $0$.}
\end{minipage}
\end{figure}

Although the gearbox circuit satisfies the first three requirements, it does not satisfy the fourth because $\GB(\phi)$ does not have odd terms in its Taylor series expansion.  Odd terms can be introduced by shifting the input angles by $\pi/4$, but doing so reduces the success probability and also slows the convergence of the resulting Taylor series.  This problem can be addressed by introducing a second type of circuit that can inexpensively produce odd Taylor series.  We refer to this second design primitive as the ``generalized PAR circuit''.

\begin{table}[t!]
\[
\begin{tabular}{c@{\qquad}c@{\qquad}c@{\qquad}c}
\hline\\
Circuit & Function & Success probability& Correction circuit\\[1.5ex]
\hline\\
$\phi_1 \circ \phi_2$ & $\phi_1+\phi_2$ & $100\%$& --\\[1.5ex]
$\PAR(\phi_1,\ldots,\phi_k)$ & $\prod_{i=1}^k \phi_i + O(\max_i |\phi_i|^{k+2})$&
$\frac{1}{2}\left(\prod_{i=1}^k \cos(\phi_i)^2 + \prod_{i=1}^k \sin(\phi_i)^2\right)$ & $\openone\text{ or } 2\PAR(\phi_1,\ldots,\phi_k)$\\[1.5ex]
$\GB(\phi_1,\ldots,\phi_k)$ & $\prod_{i=1}^k \phi^2_i + O(\max_i |\phi_i|^{2k+2})$&
$\left(1-\prod_{i=1}^k \sin^2(\phi_i)\right)^2 + \prod_{i=1}^k \sin^4(\phi_i)$ &
$e^{i\pi/4 X}$ (Clifford)\\[1ex]
\hline
\end{tabular}
\]
\caption{Circuits elements that we use to synthesize arbitrary functions: shown are the circuit names, the corresponding functions, the success probabilities, and the corresponding correction circuits.\label{tab:succprob}}
\end{table}

The generalized PAR circuit takes a very similar form to the programmable ancilla rotation circuit (or PAR circuit) proposed by Jones, Whitfield et al~\cite{JWM+12} for teleporting a pre--cached axial rotation that is stored in an ancilla qubit into a target qubit.  The principal differences between our circuits and those of~\cite{JWM+12} are: (a) our circuits are adapted to $X$--rotations and (b) they map multiple inputs to a single output rotation.  Since these circuits are clearly a generalization of the PAR concept, we refer to them collectively as~\PAR.  A diagram for these circuits is given in~\fig{PAR} and they perform the following rotation as a function of input angles:
\begin{equation}
\PAR~~:~ (\phi_1,\ldots,\phi_k) \mapsto \pm \arctan(\tan(\phi_1)\cdots\tan(\phi_k)).\label{eq:parmap}
\end{equation}
It is instructive to note that if $k=1$ then $\PAR~: \phi_1 \mapsto \pm \phi_1$ and thus it reduces to a $X$--rotation version of the PAR concept introduced in~\cite{JWM+12}.  


The circuit for PAR (see~\fig{PAR}) is similar to the gearbox circuit but  uses a GHZ (generalized Bell state) measurement instead of one in the eigenbasis of $\ket{\phi_1}\ldots\ket{\phi_k}$.  The required GHZ measurement can be implemented for $n$ qubits by applying $n-1$ CNOT gates with control on qubit $1$ and each of the remaining qubits as a target and then applying a Hadamard gate to qubit $1$.  This reduces to a Bell state measurement when $n=2$.

The generalized PAR circuit satisfies three of the four required properties but fails the third criterion.  Eq.~\eq{parmap} shows that the third criteria cannot be satisfied because the circuit  either implements $e^{i \tan^{-1}( \tan(\phi_1) \cdots \tan(\phi_k)) X}$ or $\openone$ upon failure and the former error cannot be inexpensively corrected for arbitrary input states.  One approach for correcting such errors involves using the doubling down strategy of~\cite{JWM+12}. This strategy attempts to correct backwards--rotation errors by performing a rotation with twice the original rotation angle.  Should this succeed, the circuit will implement the desired target rotation.  Should it fail, then the correct rotation can be implemented by succeeding on a rotation with four times the initial rotation angle.  The mean number of attempts needed for this strategy to succeed is a constant, which means that it will be a viable error correction strategy in many contexts.  It is not viable here because the cost of doubling the rotation angle can grow rapidly with the size of the input rotation angles.  
We show in~\lem{PAR} that oblivious amplitude amplification can be used to convert the generalized \PAR~circuit into a true repeat until success circuit that requires three times as many copies of $\phi_1,\ldots,\phi_k$ as the original circuit required.

These elements can then be used to implement an arbitrary function using the approach outlined in~\alg{apprx}.  The algorithm is entirely classical and its purpose is to iteratively construct a sequence of gearbox and PAR circuits that approximate the Taylor series of the target function $f$.  We show in~\sec{complete} that this procedure does not need to be iterated to infinite order to make the error arbitrarily small: it suffices to truncate at finite order.  Before showing this result, we first provide a rigorous discussion of the properties of the~\GB~and \PAR~circuits  and discuss how oblivious amplitude amplification can be used to remove the possibility of backwards rotation from~\PAR~circuits.

Note that we focus on $X$--rotations here, but our results can be trivially generalized to any Pauli operator by Clifford conjugating the rotation.  Even more generally, the $X$ operation can be replaced with any operator that is both Hermitian and unitary and all of our results still follow (although the correction operation may not necessarily be inexpensive for all such operations).  We focus on the case of single--qubit rotations for simplicity and also because of their ubiquity in quantum algorithms.
\begin{table}[t!]
\begin{align}
\PAR(\PAR(\phi_1,\ldots,\phi_k),\phi_{k+1})&=\PAR(\phi_1,\dots,\phi_{k+1}). &\text{(Associativity)}\\
\PAR(\phi_1,\phi_2)&=\PAR(\phi_2,\phi_1). &\text{(Commutivity)}\\
\PAR(\phi_1+m\pi,\phi_2+n\pi)&=\PAR(\phi_1,\phi_2). &\text{(Periodicity)}\\
\PAR(\phi_1 (-1)^{p_1}, \ldots, \phi_k (-1)^{p_k}) &=(-1)^{\sum p_k} \PAR(\phi_1, \ldots, \phi_k )&\text{(Oddness)}\\
\PAR(\phi_1,\ldots,\phi_k) &= \phi_1\cdots \phi_k + O(\max_k |\phi_k|)^{k+2}.&\text{(Non--linearity)}\\
\forall \phi_1 \in (-\pi/2,\pi/2), \lim_{k\rightarrow \infty} \PAR^{\circ k}(\phi_1,\phi_1) &= \frac{\pi}{4}(H_{\rm Heaviside}(\phi/\pi -1/4)+1). &\text{(Square Wave Limit)}
\end{align}
\caption{Table of properties of generalized PAR circuits. Here $H_{\rm Heaviside}$ is the Heaviside function.\label{tab:PAR}}
\end{table}

\subsection{Properties of Generalized PAR}\label{sec:genpar}
The programmable ancilla rotation (PAR) circuit is proposed in~\cite{JWM+12} as a means of pre--caching rotations in single--qubit ancilla states that can then be teleported onto a target qubit on demand.  These circuits useful in the context of quantum simulation because the rotations used in such simulations can be pre--cached at the beginning of the algorithm and then used as needed throughout the algorithm as needed without applying any non--Clifford operations.  Here we provide a generalization of this idea to multi--qubit inputs. We show that the resulting circuits perform a non--linear transformation to the pre--cached rotation angles that is analogous to multiplication.  Our specific claim is given in the following theorem:
\begin{theorem}
The operation $\PAR(\theta_1,\ldots,\theta_k)$ performs the operation $\exp(\pm i\arctan(\tan(\theta_1)\cdots \tan(\theta_k))X)$ with probability $\cos^2(\theta_1)\cdots \cos^2(\theta_k)+\sin^2(\theta_1)\cdots \sin^2(\theta_k)$ for $\theta_j\in \mathbb{R}$, and the positive and negative branches of the rotation occur with equal probability.  All other outcomes result in the identity operation being performed on $\ket{\psi}$.\label{thm:PAR}
\end{theorem}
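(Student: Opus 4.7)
First I would expand the ancilla register in the circuit of \fig{PAR}. Each $e^{-i\phi_j X}\ket 0 = \cos\phi_j\ket 0 - i\sin\phi_j\ket 1$, so right before the multi-controlled gate the full state is $\bigl[\prod_{j=1}^{k}(\cos\phi_j\ket 0 - i\sin\phi_j\ket 1)\bigr]\otimes\ket\psi$. The $k$-fold controlled $(i)^{k-1}X$ acts on the target only for the ancilla branch $\ket{1^k}$; setting $a\defeq\prod_j\cos\phi_j$ and $b\defeq\prod_j\sin\phi_j$, the two ``active'' components of the post-gate state are $a\,\ket{0^k}\otimes\ket\psi$ and $(-i)^k b\,(i)^{k-1}\,\ket{1^k}\otimes X\ket\psi = -ib\,\ket{1^k}\otimes X\ket\psi$, while every other computational-basis state $\ket x$ of the ancilla carries $\ket\psi$ (unmodified) on the target. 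The collapse of $(-i)^k(i)^{k-1}$ into $-i$ is precisely why the target phase in \fig{PAR} was chosen to be $(i)^{k-1}X$.

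Next I would evaluate the inverse-GHZ measurement. A one-line trace shows $\mathrm{GHZ}^{-1}\ket{0^k}=\ket{+}\ket{0^{k-1}}$ and $\mathrm{GHZ}^{-1}\ket{1^k}=\ket{-}\ket{0^{k-1}}$, so the only computational-basis outcomes that receive any amplitude from the $\ket{0^k},\ket{1^k}$ sector are $\ket{0^k}$ and $\ket{10^{k-1}}$, corresponding to projecting the ancilla onto $\tfrac{1}{\sqrt 2}(\ket{0^k}\pm\ket{1^k})$ respectively. Every other basis state $\ket x$ with $x\neq 0^k,1^k$ is mapped by $\mathrm{GHZ}^{-1}$ into the subspace $\mathrm{span}\{\ket{z_1\cdots z_k}:(z_2,\ldots,z_k)\neq 0^{k-1}\}$, so the remaining measurement outcomes come entirely from the ``spectator'' sector of the pre-measurement state, on which the target is already $\ket\psi$; this yields the claimed identity action on failure.

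Finally I would extract the success amplitudes. Projecting the ancilla onto $\tfrac{1}{\sqrt 2}(\ket{0^k}\pm\ket{1^k})$ leaves the unnormalized target $\tfrac{1}{\sqrt 2}(a\mp ibX)\ket\psi$. Using $X^2=I$ one has the operator identity $aI\mp ibX=\sqrt{a^2+b^2}\,e^{\mp i\theta X}$ with $\theta=\arctan(b/a)=\arctan(\tan\phi_1\cdots\tan\phi_k)$, so each branch implements $e^{\mp i\theta X}\ket\psi$ with probability $\tfrac12(a^2+b^2)$. Summing over the two branches gives the stated total success probability $\prod_j\cos^2\phi_j+\prod_j\sin^2\phi_j$ with equal weight on $\pm\theta$. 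I expect no substantive obstacle: the only steps that deserve care are the phase bookkeeping $(-i)^k(i)^{k-1}=-i$ and the verification that the measurement outcomes other than $\ket{0^k}$ and $\ket{10^{k-1}}$ really are supported on the orthogonal complement of $\mathrm{span}\{\ket{0^k},\ket{1^k}\}$.
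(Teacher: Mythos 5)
Your proposal is correct and follows essentially the same route as the paper's proof: expand the ancilla product state, apply the multi-controlled $(i)^{k-1}X$, decompose the active sector along $\ket{{\rm GHZ}^\pm}=\tfrac{1}{\sqrt 2}(\ket{0^k}\pm\ket{1^k})$, and read off $aI\mp ibX=\sqrt{a^2+b^2}\,e^{\mp i\arctan(b/a)X}$ with branch probabilities $\tfrac12(a^2+b^2)$ each. Your only additions are bookkeeping the paper leaves implicit --- the phase collapse $(-i)^k(i)^{k-1}=-i$ and the explicit check that $x\neq 0^k,1^k$ forces $(x_2\oplus x_1,\ldots,x_k\oplus x_1)\neq 0^{k-1}$, so failure outcomes are supported entirely on the spectator sector --- which the paper handles by defining the remainder $\ket{\phi}$ to be orthogonal to both GHZ states.
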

\begin{proof}
The initial rotations and the multiply controlled $(i)^{k-1}X$ gate in the PAR circuit perform the following mapping:
\begin{align}
\ket{0}^{k} \ket{\psi} &\rightarrow (\exp(-i \theta_1X) \otimes \cdots \otimes \exp(-i\theta_k X) \ket{0}^{\otimes k} - \cos(\theta_1)\cdots \cos(\theta_k)\ket{0}^{\otimes k} - (-i)^k \sin(\theta_1)\cdots \sin(\theta_k)\ket{1}^{\otimes k})\ket{\psi}\nonumber\\
&\qquad+\ket{0}^{\otimes k} (\cos(\theta_1)\cdots \cos(\theta_k)) \ket{\psi} + \ket{1}^{\otimes k}(\sin(\theta_1)\cdots \sin(\theta_k)) (-iX)\ket{\psi}.\label{eq:PAR1}
\end{align}
Now let us define the following GHZ states:
\begin{align}
{\rm GHZ}\ket{0}\ket{0}^{k-1}:=\ket{{\rm GHZ}^+}&= (\ket{0}^k + \ket{1}^k)/\sqrt{2}\nonumber\\
{\rm GHZ}\ket{1}\ket{0}^{k-1}:=\ket{{\rm GHZ}^-}&= (\ket{0}^k - \ket{1}^k)/\sqrt{2}.
\end{align}
Expressing the result in terms of these GHZ states, we find that there exists a sub--normalized state $\ket{\phi}$ that is orthogonal to $\ket{{\rm GHZ}^+}$ and $\ket{{\rm GHZ}^-}$ such the right hand side of~\eq{PAR1} can be written as 
\begin{align}
\ket{\phi} &+ \frac{1}{\sqrt 2}\ket{{\rm GHZ}^+}(\cos(\theta_1)\cdots \cos(\theta_k) \openone -i \sin(\theta_1)\cdots \sin(\theta_k)X)\ket{\psi}\nonumber\\
&\qquad+ \frac{1}{\sqrt 2}\ket{{\rm GHZ}^-}(\cos(\theta_1)\cdots \cos(\theta_k) \openone +i \sin(\theta_1)\cdots \sin(\theta_k)X)\ket{\psi}.\label{eq:GHZeq}
\end{align}
If either $\ket{{\rm GHZ}^+}$ or $\ket{{\rm GHZ}^-}$ is observed then the outcome is 
\begin{equation}
\exp\left(\mp i\tan^{-1}\left(\frac{\sin(\theta_1)\cdots\sin(\theta_k)}{\cos(\theta_1)\cdots \cos(\theta_k)}\right) X\right)= \exp(\mp i \tan^{-1}(\tan(\theta_1)\cdots \tan(\theta_k))),
\end{equation}
and the probability of either measurement outcome occuring is 
\begin{equation}
P_{\pm} = \frac{\cos^2(\theta_1)\cdots\cos^2(\theta_k)+\sin^2(\theta_1)\cdots\sin^2(\theta_k)}{2}.
\end{equation}
Finally, since
\begin{equation}
\ket{\phi}:=(\exp(-i \theta_1X) \otimes \cdots \otimes \exp(-i\theta_k X) \ket{0}^k - \cos(\theta_1)\cdots \cos(\theta_k)\ket{0}^k - (-i)^k \sin(\theta_1)\cdots \sin(\theta_k)\ket{1}^k)\ket{\psi},
\end{equation}
it is clear that if a measurement outcome other than $\ket{{\rm GHZ}^+}$ or $\ket{{\rm GHZ}^-}$ is observed then the output state is $\ket{\psi}$ and hence no correction operation needs to be applied before repeating the circuit.
\end{proof}

Nonetheless, the PAR circuits do act as RUS circuits on particular inputs.  For example, if $\ket{\psi}=\ket{0}$ and in such cases the direction of the rotation can be switched by applying a $Z$--gate if necessary:
\begin{equation}
e^{-i (\arctan(\tan(\phi_1)\cdots\tan(\phi_k)))X}\ket{0}=Ze^{i\arctan(\tan(\phi_1)\cdots\tan(\phi_k))X}\ket{0}.
\end{equation}
Thus $\PAR$ can be repeated until success, at twice the success probability quoted above, if the generalized PAR gates are applied to a fresh ancilla.
We also show below that generalized PAR circuits (which also includes the PAR circuit~\cite{JWM+12}) can be converted into a repeat until success circuit for arbitrary inputs using oblivious amplitude amplification.
\begin{lemma}
The operation $\PAR(\theta_1,\ldots,\theta_k)$ can be converted to a repeat until success circuit with success probability $$\cos^2(\theta_1)\cdots\cos^2(\theta_k)+\sin^2(\theta_1)\cdots\sin^2(\theta_k)$$ that uses $3$ $\PAR(\theta_1,\ldots,\theta_k)$, and a constant sized Clifford circuit.  Upon failure, the correction operation is the identity gate.\label{lem:PAR}
\end{lemma}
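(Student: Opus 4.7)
The plan is to apply oblivious amplitude amplification (OAA), in the form given by Berry, Childs, Cleve, Kothari, and Somma, to the generalized PAR circuit. First, I would regard the full PAR circuit---including the inverse-GHZ basis change but before the final measurement---as a unitary $U$ acting on the ancilla register, initialized to $\ket{0}^k$, together with the target qubit. Translating the proof of \thm{PAR} from the GHZ basis to the computational basis yields
\[
U\ket{0}^k\ket{\psi} \;=\; \alpha\ket{0}^k V\ket{\psi} \;+\; \alpha\ket{1}\ket{0}^{k-1}V^\dagger\ket{\psi} \;+\; \beta\ket{\Phi^\perp}\ket{\psi},
\]
where $V = e^{-i\arctan(\tan\theta_1\cdots\tan\theta_k)X}$, $\alpha^2 = \tfrac12\bigl(\cos^2(\theta_1)\cdots\cos^2(\theta_k)+\sin^2(\theta_1)\cdots\sin^2(\theta_k)\bigr)$, $\beta^2 = 1-2\alpha^2$, and $\ket{\Phi^\perp}$ is a unit ancilla state orthogonal to both $\ket{0}^k$ and $\ket{1}\ket{0}^{k-1}$. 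Two structural facts are crucial: the target remains exactly $\ket{\psi}$ on the $\ket{\Phi^\perp}$ branch, and the two ``success'' branches (measuring the two GHZ outcomes) are Clifford-equivalent via the identity $Y V^\dagger Y = V$.

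Next, I would form the BCCKS amplification operator $A = -U\,S\,U^\dagger\,S\,U$, where $S = I - 2\Pi\otimes I$ and $\Pi$ is the projector onto the two-dimensional enlarged success subspace spanned by $\ket{0}^k$ and $\ket{1}\ket{0}^{k-1}$. The circuit $A$ uses three PAR calls ($U^\dagger$ is executed by running PAR in reverse, at the same cost as $U$) together with a constant-sized Clifford subcircuit implementing $S$ and a final branch-merging step. This merging step applies a Pauli $Y$ on the target controlled on the first ancilla qubit followed by an unconditional bit-flip on that ancilla; exploiting $Y V^\dagger Y = V$, it coherently aligns the two GHZ branches onto the single ancilla outcome $\ket{0}^k$ with target $V\ket{\psi}$. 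By the BCCKS amplification lemma, the action of $A$ in its associated two-dimensional invariant subspace maps $\ket{0}^k\ket{\psi}$ to amplitude $\sqrt{2}\,\alpha = \sqrt{a^2+b^2}$ on $\ket{0}^k V\ket{\psi}$ and residual amplitude $\beta$ on a failure state whose target qubit is still $\ket{\psi}$, thanks to the structural fact about the $\ket{\Phi^\perp}$ branch.

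Measuring the ancilla then realizes the RUS protocol: observing $\ket{0}^k$ (probability $a^2+b^2$) certifies $V\ket{\psi}$ on the target, while any other outcome (probability $1-a^2-b^2$) leaves the target unchanged, so the correction operation is the identity and the three-PAR round can simply be repeated. The main obstacle will be confirming that the coherent composition of $A$ with the Clifford branch-merging indeed produces the claimed success amplitude $\sqrt{a^2+b^2}$ on $\ket{0}^k V\ket{\psi}$---rather than the generic BCCKS amplified amplitude $\sin(3\theta)$---and that the failure component lies entirely in the subspace where the target is $\ket{\psi}$. The verification reduces to a collection of algebraic identities involving $V$, $V^\dagger$, and $Y$, carried out explicitly in the low-dimensional invariant subspace spanned by $\ket{0}^k\ket{\psi}$, its image under $U$, and their $U^\dagger$-preimages.
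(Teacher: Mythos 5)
Your opening decomposition is exactly the paper's (it matches \eq{PARfinal}, with your $\alpha$ equal to the paper's $\beta/\sqrt{2}$), and oblivious amplitude amplification is indeed the right tool, but the way you deploy it contains two genuine gaps, and the one you flag yourself as ``the main obstacle'' is fatal rather than technical. First, the branch-merging step does not work: the identity $YV^\dagger Y = V$ requires a $Y$ on \emph{both} sides of $V^\dagger$, whereas a controlled-$Y$ applied after the circuit produces $YV^\dagger\ket{\psi} = VY\ket{\psi}$, i.e.\ $V$ applied to the wrong state. No Clifford acting after $U$ can turn the $V^\dagger$ branch into the $V$ branch on an arbitrary unknown $\ket{\psi}$ --- this is precisely why generalized \PAR~fails the RUS criterion in the first place (\sec{circuits}); the $Z$-conjugation trick only works on the known input $\ket{0}$. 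Worse, coherently relabeling both branches onto $\ket{0}^k$ would yield a term proportional to $V(\openone+Y)\ket{\psi}$ whose norm depends on $\ket{\psi}$, destroying obliviousness. Second, with the two-dimensional projector $\Pi$ the good amplitude is $\sin\theta=\sqrt{a^2+b^2}$, which depends on the unknown inputs $\theta_1,\ldots,\theta_k$; one round of amplification gives $\sin(3\theta)$, not $\sqrt{a^2+b^2}$, and this can even vanish (take $a^2+b^2=3/4$, so $\sin(3\theta)=\sin\pi=0$). No fixed number of OAA rounds works uniformly over the inputs, so your hoped-for verification cannot succeed.

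The paper avoids both problems by amplifying a different event: only the first-qubit outcome (${\rm GHZ}^+$ versus ${\rm GHZ}^-$), whose amplitude in \eq{PARfinal} is exactly $1/\sqrt{2}$ \emph{independently} of the input angles. Since $\theta=\pi/4$ is useless for amplification ($\sin(3\pi/4)=\sin(\pi/4)$), it first ``poisons'' the circuit by adjoining one ancilla in $H\ket{0}$, lowering the success amplitude to $\sin(\pi/6)=1/2$; a single OAA round then gives $\sin(3\cdot\pi/6)=1$ exactly (\cor{RUS}), so three uses of \PAR~plus a constant-sized Clifford circuit \emph{deterministically} project onto the branch $\beta\ket{0}^{k-1}V_0\ket{\psi}+\sqrt{1-\beta^2}\ket{\chi_0}\ket{\psi}$; the $V^\dagger$ branch is amplified away rather than merged. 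Measuring the remaining $k-1$ ancillas then supplies the genuinely probabilistic part: outcome $0^{k-1}$ occurs with probability $\beta^2=\prod_j\cos^2(\theta_j)+\prod_j\sin^2(\theta_j)$ and leaves $V_0\ket{\psi}$, while every other outcome leaves $\ket{\psi}$ untouched, giving the identity correction. To repair your argument, replace the two-dimensional $\Pi$ and the controlled-$Y$ merging with this one-qubit flag plus the Hadamard-poisoning step.
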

A proof of this lemma is given in~\app{Lemma2}. Note that the basic case $\PAR(\theta_1)$ is entirely de--randomized by this procedure and all other cases are reduced to repeat until success circuits.  

These primitives can be composed by replacing the rotation used in the above circuits with an RUS sub-circuit.  For example, we can implement $\arctan(\tan^2(\phi_1)\tan(\phi_2))$ using the following circuit:
\begin{align}
\PAR(\GB(\phi_1),\phi_2)\equiv\qquad\qquad\qquad &\Qcircuit @R 1em @C 1.5em {
\lstick{\ket{0}}	&\gate{\phi_1}	&\ctrl{1}				&\gate{-\phi_1}	&\meter\\
\lstick{\ket{0}} 	&\qw 			&\gate{-iX}				&\qw			&\qw			&\ctrl{1}& \multigate{1}{{\rm GHZ}^{-1}}&\meter \\
\lstick{\ket{0}}	&\gate{\phi_2}	&\qw					&\qw			&\qw			&\ctrl{1}&\ghost{{\rm GHZ}^{-1}}&\meter \\
			&\qw			&\qw					&\qw			&\qw			&\gate{iX}	& \qw &\qw
}\label{eq:composedcirc}
\end{align}
The circuit is intended to be implemented by repeating the top--most measurement until a ``successful outcome'' of $0$ is measured.  Unsuccessful attempts can be corrected with a Clifford circuit.
Note that the output of the entire circuit is equivalent to $\PAR(\phi_1,\phi_1,\phi_2)$ upon success; however, the above method  typically requires fewer $T$ gates and has a lower online cost.  Equation \eq{composedcirc} can also be promoted to an RUS circuit using oblivious amplitude amplification since \GB~is an RUS circuit.

\begin{table}[t!]
\begin{align}
\GB(\phi_1)&=\PAR(\phi_1,\phi_1). & \text{(Equivalence)}\\
\GB(\phi_1,\phi_2)&=\GB(\phi_2,\phi_1).& \text{(Commutivity)}\\
\GB^{\circ k}(\phi)&= \arctan(\tan^{2^k} \phi_1). &\text{(Composition)}\\
\GB(\phi_1+m\pi,\phi_2+n\pi)&=\GB(\phi_1,\phi_2). &\text{(Periodicity)}\\
\GB((-1)^p\phi_1,(-1)^q\phi_2)&=\GB(\phi_1,\phi_2). &\text{(Evenness)}\\
\GB(\phi_1,\ldots,\phi_k) &= \phi_1^2\cdots \phi_k^2 + O(\max_k |\phi_k|)^{2k+2}.&\text{(Non--linearity)}\\
\forall \phi_1 \in (-\frac{\pi}{2},\frac{\pi}{2}) \lim_{k\rightarrow \infty} \GB^{\circ k}(\phi_1) & =\frac{\pi}{4}(H_{\rm Heaviside}(\phi/\pi -1/4)+1). &\text{(Square Wave Limit)}\\
\forall n,m\in\mathbb{Z}_+,~ \delta_{m,n}&=\frac{\int_{0}^\pi (\GB(2^mx)-\pi/4)(\GB(2^nx)-\pi/4) \mathrm{d} x}{\sqrt{\int_{0}^\pi (\GB(2^mx)-\pi/4)^2 \mathrm{d} x}\sqrt{\int_{0}^\pi (\GB(2^nx)-\pi/4)^2 \mathrm{d} x}}.& \text{(Orthogonality)}
\end{align}
\caption{Table of properties of Gearbox circuits.  $H_{\rm Heaviside}$ is the Heaviside function.~\label{tab:GB}}
\end{table}

\subsection{Properties of Gearbox Circuits}
The gearbox circuit is introduced in~\cite{WK13} as a means for expediently generating small rotation angles and rescaling the rotation angles output by circuit synthesis methods.
They earned their name because in the circuits can transform coarse input rotations into fine output rotations in analogy to a gear box.  
The following result, proven in~\cite{WK13}, provides justification for this claim as well as those made in~\tab{succprob}.
\begin{theorem}{[Wiebe, Kliuchnikov]}
Given that each measurement in $\GB^{\circ d}(\phi_1,\ldots,\phi_k)$ yields $0$, the circuit enacts the transformation $\ket{\psi}\mapsto e^{-iX\tan^{-1}(\tan^2(\theta))}\ket{\psi}$, where $\sin^2(\theta)= |\sin(\phi_1)|^2\cdots|\sin(\phi_k)|^{2}$.  This outcome occurs with probability $\cos^4(\theta)+\sin^4(\theta)$ and all other measurement outcomes result in the transformation $\ket{\psi}\rightarrow e^{i\pi X/4}\ket{\psi}$, regardless of the choice of $\phi_1,\ldots,\phi_k$.\label{thm:smallrot}
\end{theorem}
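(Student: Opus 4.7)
The plan is to track the joint ancilla-target state through each stage of the gearbox circuit and compute the projections corresponding to each measurement outcome. I will treat the single-level case ($d=1$) with $k$ inputs in detail; the $d$-fold statement follows by iteration once the single-level claim is in hand.

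After the initial rotations the ancilla register is $\bigotimes_{j=1}^k e^{-i\phi_j X}\ket{0}=\bigotimes_j(\cos\phi_j\ket{0}-i\sin\phi_j\ket{1})$. I would then rewrite the action of the multi-controlled $(-iX)$ on the joint state in the suggestive form
\begin{equation*}
\bigotimes_j e^{-i\phi_j X}\ket{0}\otimes\ket{\psi}\;+\;\alpha\,\ket{1}^{\otimes k}\otimes\bigl((-iX)-\openone\bigr)\ket{\psi},
\end{equation*}
with $\alpha=(-i)^k\prod_j \sin\phi_j$ the amplitude on $\ket{1}^{\otimes k}$ before the controlled gate. The key feature of this decomposition is that the first summand is designed to cancel under the subsequent $\bigotimes_j e^{i\phi_j X}$, returning it to $\ket{0}^{\otimes k}\ket{\psi}$, whereas the second summand becomes $\alpha\bigotimes_j(i\sin\phi_j\ket{0}+\cos\phi_j\ket{1})\otimes((-iX)-\openone)\ket{\psi}$ and has support on \emph{every} ancilla basis state.

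For the success outcome $0^k$, summing the two contributions yields $\ket{\psi}+\alpha\prod_j(i\sin\phi_j)\,((-iX)-\openone)\ket{\psi}$. Since $\alpha\prod_j(i\sin\phi_j)=(-i)^k i^k\prod_j\sin^2\phi_j=\sin^2\theta$, this simplifies to $\cos^2\theta\,\ket{\psi}-i\sin^2\theta\,X\ket{\psi}=\sqrt{\cos^4\theta+\sin^4\theta}\,e^{-iX\arctan(\tan^2\theta)}\ket{\psi}$, reproducing both the quoted transformation and the success probability $\cos^4\theta+\sin^4\theta$. For any $x\neq 0^k$, only the second summand contributes, and its target-side factor $((-iX)-\openone)\ket{\psi}=-\sqrt{2}\,e^{i\pi X/4}\ket{\psi}$ is independent of both $x$ and the $\phi_j$. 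Thus every failure branch enacts the Clifford $e^{i\pi X/4}$ on $\ket{\psi}$, up to a (possibly outcome-dependent) global scalar, as asserted.

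The main obstacle is confirming that the ``trivial'' first summand contributes no amplitude to any $\ket{x}$ with $x\neq 0^k$ after the inverse rotations, since otherwise the universal failure correction would be contaminated by $\phi$-dependent junk. This is ensured by the exact cancellation $e^{i\phi_j X}e^{-i\phi_j X}\ket{0}=\ket{0}$ on each ancilla, which guarantees that all non-trivial failure amplitudes come exclusively from the controlled-$(-iX)$ correction term. The $d>1$ case then reduces to iterating the $d=1$ analysis: a successful outer iteration acts on a target that has already been rotated by the inner $\GB^{\circ(d-1)}$ circuit, and any failure at any level is a Clifford that can be propagated through later levels without disturbing the overall structure of the argument.
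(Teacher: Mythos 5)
Your computation is correct---I verified the decomposition of the post-controlled-gate state as $\bigotimes_j e^{-i\phi_j X}\ket{0}\otimes\ket{\psi}+\alpha\ket{1}^{\otimes k}\otimes\bigl((-iX)-\openone\bigr)\ket{\psi}$, the exact cancellation of the first summand under the uncomputation rotations, the identity $\alpha\prod_j(i\sin\phi_j)=\prod_j\sin^2\phi_j=\sin^2\theta$ giving the success branch $\cos^2\theta\ket{\psi}-i\sin^2\theta\, X\ket{\psi}$ of squared norm $\cos^4\theta+\sin^4\theta$, and the outcome- and $\phi$-independent failure factor $(-iX)-\openone=-\sqrt{2}\,e^{i\pi X/4}$---but you should know that the paper contains no proof of this theorem to compare against: it is imported from~\cite{WK13}, so your derivation is a self-contained replacement rather than a variant of an in-paper argument. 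The closest internal analogue is the paper's proof of \thm{PAR}, and your route is essentially that proof transplanted to the gearbox: both isolate the controlled-gate correction term on $\ket{1}^{\otimes k}$ and track an orthogonal residue, differing only in the final measurement step (your uncomputation followed by a computational-basis measurement makes the residue vanish identically on the outcome $0^k$ and collapses \emph{all} failure branches onto the single Clifford $e^{i\pi X/4}$, whereas the inverse-GHZ measurement in \thm{PAR} leaves a surviving residue $\ket{\phi}$ and failures that act as the identity); your explicit observation that the unperturbed summand contributes no amplitude to any $x\neq 0^k$ is exactly the structural reason $\GB$ is a genuine RUS circuit while $\PAR$ is not. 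One caution on your closing sketch: for $d>1$ the all-zeros conditioning does iterate as you say (a successful inner level hands the outer level precisely the rotated ancilla state it expects, yielding $\arctan(\tan^{2^d}\phi)$ in agreement with the composition property in \tab{GB}), but the universal-failure clause must be read per gearbox invocation---an \emph{uncorrected} inner failure feeds a $\pi/4$-shifted angle into the outer controlled gate rather than ``propagating through'' as $e^{i\pi X/4}$ on $\ket{\psi}$, so the Clifford correction must be applied before the outer level proceeds, as in the paper's RUS execution of the composed circuit~\eq{composedcirc}.
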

The $\GB$ circuit is therefore a repeat until success circuit, which means that the user can correct the result and try again  upon failure just like a Las--Vegas algorithm.  This provides a huge benefit here because it means that these circuit elements can be relied upon to produce the desired transformation.  The only downside, apart from low success probability near $\theta=\pi/4$, is that the rotations cannot be pre--cached into qubits and teleported into the system as per~\cite{JWM+12}.  Non--RUS~variants of \GB~that have this property are given in~\app{GB}.  These circuits are Clifford circuits and have the further advantage of lower online costs than $\GB$ circuits but cannot be reliably applied to an unknown quantum state.  Such circuits can also be used to simplify the state factory used in applications of floating point synthesis~\cite{WK13} that are optimized for low online $T$--counts.

\section{Completeness of Gearbox, PAR and Addition}\label{sec:complete}
\begin{algorithm*}[t!]
\caption{\label{alg:approximate} Taylor series based approximation algorithm.}
\rule{\linewidth}{1pt}
\begin{algorithmic}
\Require A smooth function $f\in C^M$, number of function variables $k$ and order of approximation $m \le M$.
\Ensure  A vector of functions $O$ such that  $|f(\phi_1,\ldots,\phi_k) - \sum_{\kappa =0}^k v[\kappa]| \in O(\phi^{k+1})$ where each $v[\kappa]$ can be implemented using gearbox and PAR circuits.

\vskip0.2em
\hrule
\vskip0.2em

\Function{generateApproximant}{$f$, $k$, $m$}
\State $F[0] \gets $ Taylor zeroth order Taylor series expansion of $f(\phi_1,\ldots,\phi_k)$ about $\phi_1 = \cdots = \phi_k =0$.
\State $v[0] \gets f[0]$ \inlinecomment{Store constant offset in first entry of output vector.}
\State $f \gets f - f[0]$
  \For{$i \in 1 \to m$}
	\State $F[i] \gets $ Lowest--order Taylor series expansion of $f(\phi_1,\ldots,\phi_k)$.
	\State $v[i] \gets $ Approximation to $F[i]$ using gearbox and PAR circuits correct to lowest order. 
	\State \inlinecomment{Approximation always exists but is not unique.}
	\State $f\gets f - v[i]$.
 \EndFor
  \State \Return $v$
\EndFunction
\end{algorithmic}
\rule{\linewidth}{1pt}\label{alg:apprx}
\end{algorithm*}

We saw in~\sec{circuits} that the \PAR~and \GB~circuits  have many properties in common with multiplication.  Indeed, the \PAR~circuit implements a normalized component-wise multiplication of two input vectors.  Since multiplication and addition can be used to approximately implement any continuous function on a compact domain, it is natural to expect that compositions of these circuit elements should also.  In order to see this formally, we need to introduce a lemma that shows that monomials can be approximated to within arbitrarily small error  using \GB~and \PAR~via time--slicing ideas reminiscent of those used in Trotter--Suzuki formulas.

\begin{lemma}[``time slicing lemma'']\label{lem:polynomial}
Let $|a|\le 1$, $b$ be a positive integer and $f(x)=ax^b$ be defined on a compact domain $U$. Then for any $\epsilon > 0$ the function $f$ can be implemented as an RUS circuit using \GB, \PAR~and addition within error $\epsilon$, as measured the two--norm.
\end{lemma}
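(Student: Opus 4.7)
The plan is to combine three ingredients: direct leading-order synthesis via \PAR, time-slicing of the target rotation through exact addition, and iterative correction of higher-order residuals in the spirit of \alg{apprx}.

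First I would build a leading-order approximation to $ax^b$. Applying \PAR\ to $b$ copies of the input angle $x$ yields an output angle $x^b + O(x^{b+2})$ by the non-linearity property in \tab{PAR}. To introduce the coefficient $a$, I would exploit the exact addition rule $\phi_1\circ\phi_2=\phi_1+\phi_2$ and time-slice the target rotation: write $ax^b = N\cdot(ax^b/N)$ and synthesize each of the $N$ slices separately. Each slice is produced by a \PAR\ applied to $b$ copies of $x$ together with a single constant rotation by $\arctan(a/N)$; choosing $N$ large enough that $|a|/N$ is small ensures that every input to this \PAR\ is small, so its leading-order output is $(a/N)x^b$ with residual $O(\max(|x|,|a/N|)^{b+3})$ per slice. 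Summing the $N$ slices via addition yields an approximation to $ax^b$ whose residual is uniformly bounded on the compact domain $U\subseteq[-R,R]$.

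Next I would iteratively suppress the remaining residual by invoking \alg{apprx}. Letting $g_0$ denote the leading-order construction above, the residual $f-g_0 = O(x^{b+2})$ is itself a smooth function whose leading Taylor term can be matched by a new \PAR/\GB\ composition of exactly the same flavor; subtracting yields a residual of order $O(x^{b+4})$, and so on. Provided $R$ is small enough, each iteration reduces the maximum residual by a factor of order $R^2$, so the series converges geometrically and finitely many corrections suffice to bring the two-norm error below~$\epsilon$. When $R$ is not already small, I would first precondition by applying \GB, which strictly shrinks its input for $|\phi|<\pi/4$ (see \tab{GB}), in order to pull the effective angle into the convergence regime before invoking the iterative correction.

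The main obstacle will be controlling the error propagation through these recursive compositions: one must verify that once the time-slicing is unrolled, the residual really does decay geometrically and that the overall gate count remains finite for fixed $\epsilon$. I would quantify this using the closed forms $\PAR(\phi_1,\ldots,\phi_k)=\arctan\bigl(\prod_i\tan\phi_i\bigr)$ and the analogous expression for \GB, together with the uniform bounds on derivatives of $\arctan$ and $\tan$ afforded by compactness of $U$. The RUS property of the final composed circuit then follows because \GB\ is natively RUS by \thm{smallrot}, each generalized \PAR\ can be upgraded to an RUS subroutine via \lem{PAR}, and addition is deterministic.
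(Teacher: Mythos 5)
Your construction slices the wrong quantity, and this is where the argument breaks. In the leading-order step you write $ax^b = N\cdot(ax^b/N)$ and produce each slice as $\PAR(x,\ldots,x,\arctan(a/N))$, claiming that taking $N$ large makes ``every input to this \PAR'' small --- but the $b$ copies of $x$ are inputs too, and they are not shrunk by $N$. Concretely, each slice implements $\arctan\bigl((a/N)\tan^b(x)\bigr) = (a/N)x^b + (a/N)\tfrac{b}{3}x^{b+2} + \cdots$, so the per-slice residual is proportional to $a/N$, and after summing the $N$ slices the total residual is $\tfrac{ab}{3}x^{b+2}+O(x^{b+4})$, \emph{independent of $N$}. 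Your time-slicing therefore buys nothing toward the error bound; the entire burden falls on your iterative correction, which you only argue converges for $U\subseteq[-R,R]$ with $R$ small. The paper's proof slices the \emph{argument} instead: it implements $r^b$ copies of $\PAR(x/r,\GB(\arcsin(\sqrt{a}),x/r,\ldots,x/r))$, each contributing angle $a(x/r)^b + O((x/r)^{b+2})$, so that exact addition reassembles $ax^b$ with total error $O(x^{b+2}/r^2)$, which is driven below any $\epsilon$ uniformly on the compact domain by taking $r$ large. This is the step your version is missing, and it is the reason the lemma carries ``time slicing'' in its name: shrinking $x$ itself makes the nonlinear error per slice decay two orders faster than the signal, so no iterative Taylor cancellation is needed at all.

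Your fallback for large domains --- ``preconditioning'' with \GB\ --- is also not sound. \GB\ does not rescale its input; it implements $\phi\mapsto\arctan(\tan^2\phi)$, an even, squaring-type map, so after applying it you no longer hold a rotation by a scaled copy of $x$ but by approximately $x^2$. This destroys oddness when $b$ is odd and leaves no mechanism to recover $ax^b$ from the transformed angle, so the proposal as written only establishes the lemma on sufficiently small domains, not on an arbitrary compact $U$. Two further gaps relative to the paper: the coefficient is there encoded exactly via $\arcsin(\sqrt{a})$ fed into \GB\ (whose action multiplies squared sines), with $a<0$ handled by replacing the controlled $-iX$ by $iX$ and $|a|>1$ handled by the dyadic rescaling $a/2^{\lceil\log_2 a\rceil}$ plus addition --- cases your proposal does not address. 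Your closing RUS argument (native RUS for \GB\ via \thm{smallrot}, upgrading \PAR\ via \lem{PAR}, deterministic addition by serial composition) does match the paper and is correct; note only that for even $b$ the paper uses \GB\ alone, so oblivious amplitude amplification is unnecessary in that case, whereas your use of \PAR\ for all $b$ would incur it everywhere.
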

\begin{proof}
Assume $b$ is even and $0<a\le1$.  Then there exists $k$ such that $b=2k$.  From the non--linearity property of $\GB$ there exists a gearbox circuit with $k+1$ inputs such that
\begin{equation}
{\rm \GB}(\arcsin(\sqrt{a}),x,\ldots,x)=ax^b +O(x^{b+2}).\label{eq:leadingapprox}
\end{equation}
Now assume than $-1\le a <0$.  We can apply the same gearbox circuit with a controlled $iX$, rather than a controlled $-iX$ gate, to implement the appropriate negative rotation.  Thus we can approximate this function to the appropriate order if  $|a|\le 1$.

If $|a|>1$ then $\arcsin(\sqrt{a})$ is not well defined.  We can circumvent this problem by noting that $|a|/2^{\lceil \log_2 a \rceil} \le 1$
and $ax^b = 2^{\lceil \log_2 a \rceil} \left(\frac{a}{2^{\lceil \log_2 a \rceil}} \right)x^b$.
Thus it follows from~\eq{leadingapprox} that
\begin{equation}
2^{\lceil \log_2 a \rceil}{\rm \GB}\left(\arcsin\left(\sqrt{\frac{a}{2^{\lceil \log_2 a \rceil}}}\right),x,\ldots,x\right)=ax^b +O(x^{b+2}).
\end{equation}
Thus we can assume without loss of generality that $|a|\le 1$ since all other cases can be found by adding the results of several gearbox circuits with $|a|\le 1$.  This addition step can be performed by serial composition (i.e. running one circuit after another using the same output qubit).

The monomial $ax^b$ for odd $b$ can be approximated using \GB, PAR and addition in a similar manner.  Since $b$ is odd, we can always express $b=2k+1$ and hence the linearity property of PAR and~\eq{leadingapprox} then imply that for any $|a|\le 1$
\begin{equation}
{\rm PAR}(x, {\rm \GB}(\arcsin(\sqrt{a}),x,\ldots,x)) = ax^{b} + O(x)^{b+2}.\label{eq:pargb}
\end{equation}
Thus odd Taylor series can be formed by nesting $\PAR$ and $\GB$ circuits (such terms can also be generated directly using $\PAR$ circuits).

In order to use time slicing ideas to make the error less than $\epsilon$ for any $\epsilon>0$ we will need to convert~\eq{pargb} into an RUS circuit.
This can be achieved by using oblivious amplitude amplification to convert PAR into a repeat until success circuit as per~\lem{PAR}.  Once the circuits have been converted to RUS circuits we can write a large rotation as a sum of small rotations because repeat until success circuits can be reliably corrected when errors occur and addition can be performed deterministically using serial composition (i.e. $e^{-iaX}e^{-ibX} =e^{-i(a+b)X}$).  Since the error in small rotations shrinks faster than the size of the rotation, this process allows arbitrarily small errors to be achieved in exact analogy to the use of Trotter--Suzuki algorithms in quantum simulation~\cite{BACS07}.  To see that slicing can make the error arbitrarily small, let $r>0$ be a parameter that is the $b^{\rm th}$ root of the number of slices used and note that (for odd $b$)
\begin{align}
r^{b}{\rm PAR}(x/r, {\rm \GB}(\arcsin(\sqrt{a}),x/r,\ldots,x/r)) &= r^{b}\left(a(x/r)^{b} + O(x/r)^{b+2}\right)\nonumber\\
& = a x^{b} + O(x^{b+2}/r^2).
\end{align}
Thus for any $x$, the error can be made less than $\epsilon$ for any $\epsilon>0$ by choosing $r\in \Theta(x^{{1+b/2}}/\sqrt{\epsilon})$.  Because $x$ is taken from a compact domain, it is possible to maximize over the values of $r$ to find a value that uniformly makes the error at most $\epsilon$.
The case for even $b$ is similar except since \GB~gives a repeat until success circuit oblivious amplitude amplification is not needed in such cases.  This implies that the $L_2$ distance between $ax^b$ and its approximant can be made less than $\epsilon$ for all $\epsilon>0$  under these assumptions.
\end{proof}

\begin{theorem}\label{thm:analthm}
Any function that is analytic on a compact domain $U\subset \mathbb{R}$ can be implemented within error $\epsilon$, as measured by the two--norm, using \PAR~and \GB~and addition as an RUS circuit.
\end{theorem}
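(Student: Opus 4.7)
The plan is to reduce the approximation of an arbitrary analytic $f$ to polynomial approximation and then apply the time-slicing lemma (\lem{polynomial}) monomial-by-monomial, combining the resulting sub-circuits by serial composition on a single target qubit.

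First, since $f$ is analytic (hence continuous) on the compact set $U$, the Stone--Weierstrass theorem yields a polynomial $P(x)=\sum_{j=0}^{M} a_j x^j$ with real coefficients such that $\sup_{x \in U}|f(x)-P(x)| < \epsilon/2$. A single Taylor expansion cannot in general be used directly, because the radius of convergence may be smaller than the diameter of $U$; one could alternatively patch together finitely many Taylor expansions by compactness, but Stone--Weierstrass short-circuits that complication. Without loss of generality assume $M\ge 1$.

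Second, for each $j \in \{0,1,\ldots,M\}$ invoke \lem{polynomial} to construct an RUS sub-circuit $C_j$, built from \GB, \PAR, and serial addition, that implements a rotation $e^{-i\tilde v_j(x)X}$ with $\sup_{x \in U}|\tilde v_j(x)-a_j x^j| < \epsilon/(2(M+1))$. Coefficients with $|a_j|>1$ and negative coefficients are handled by the binary-scaling and sign-flipping tricks already established in the proof of that lemma, and the oblivious amplitude amplification conversion (\lem{PAR}) guarantees that each $C_j$ is in fact an RUS circuit rather than merely a probabilistic one.

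Third, compose the $C_j$ by applying them serially to a single target qubit, each stage consuming its own fresh copies of the input resource states $e^{-ixX}\ket{0}$. Because each stage outputs a rotation about $X$ and $e^{-i\alpha X}e^{-i\beta X}=e^{-i(\alpha+\beta)X}$, upon joint success the composite circuit implements $e^{-i\left(\sum_j \tilde v_j(x)\right)X}$. Since every $C_j$ is RUS, a failure in any single stage leaves the target qubit in a Clifford-correctable state and the stage can be retried independently without disturbing the accumulated angle, so the overall composition inherits RUS status. Two applications of the triangle inequality then yield
\begin{equation*}
\Bigl|\sum_{j=0}^M \tilde v_j(x)-f(x)\Bigr| \le \sum_{j=0}^M |\tilde v_j(x)-a_j x^j|+|P(x)-f(x)| < (M+1)\cdot \frac{\epsilon}{2(M+1)}+\frac{\epsilon}{2}=\epsilon
\end{equation*}
uniformly on $U$, and since $\|e^{-i\alpha X}-e^{-i\beta X}\|\le |\alpha-\beta|$ the two-norm error between the realized unitary and the target $e^{-if(x)X}$ is also at most $\epsilon$.

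The main obstacle is the first step, polynomial approximation uniformly on $U$: a naive Taylor truncation may diverge on parts of $U$, so one must appeal to Stone--Weierstrass (or a Runge-style gluing argument over a finite cover of $U$ by disks of convergence). The other steps are essentially bookkeeping, with the only minor subtlety being to confirm that serial composition preserves RUS status, which it does because individual-stage failures are independently correctable and do not corrupt the accumulated rotation on the target register.
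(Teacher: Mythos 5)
Your proof is correct and structurally the same as the paper's: uniform polynomial approximation of $f$ on $U$, then \lem{polynomial} applied term by term with the error budget split evenly across the monomials, then serial composition on a single target qubit and two applications of the triangle inequality. The one genuine divergence is the source of the polynomial. The paper invokes analyticity directly, asserting ``for any $\epsilon>0$ \emph{and} $x$'' the existence of a truncated Taylor series within $\epsilon/2$ --- a pointwise statement that, as you observe, does not by itself yield a single polynomial valid uniformly on all of $U$, since a Taylor expansion about one point may have radius of convergence smaller than the diameter of $U$ (the paper only brings in Weierstrass later, for the piecewise-continuous corollary). Your appeal to Stone--Weierstrass closes that uniformity gap cleanly, at the harmless cost of abandoning the Taylor structure, which \lem{polynomial} never needed anyway --- it only requires monomials with real coefficients. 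You also make two small tightenings the paper leaves loose: your per-term budget $\epsilon/(2(M+1))$ over $M+1$ terms sums to exactly $\epsilon/2$, whereas the paper's $\epsilon/(2p)$ over the $p+1$ terms $j=0,\ldots,p$ technically overshoots; and you make explicit the passage from angle error to operator two-norm error via $\left\|e^{-i\alpha X}-e^{-i\beta X}\right\|\le|\alpha-\beta|$, which the paper takes for granted. One pedantic caveat: the constant term $j=0$ falls outside the hypotheses of \lem{polynomial} (which requires the exponent $b$ to be a positive integer), but it is a fixed rotation $e^{-ia_0X}$ and can be applied directly, so nothing breaks.
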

\begin{proof}
Let $f$ be analytic on $\mathbb{R}$ then by definition there exists, for any $\epsilon>0$ and $x$, $\{a_j\}$ and $p$ such that
\begin{equation}
\left|f(x) - \sum_{j=0}^p a_j x^j\right|\le \epsilon/2.\label{eq:eps1}
\end{equation}
\lem{polynomial} shows that for any $\epsilon>0$ an RUS circuit can be constructed that approximates $\tilde f_j(x)$ using \GB, PAR and addition such that
\begin{equation}
|\tilde f_j(x) -  a_j x^j|<\epsilon/(2p).\label{eq:eps2}
\end{equation}
Then two uses of the triangle inequality and~\eq{eps1} and~\eq{eps2} gives us
\begin{equation}
\left|\sum_{j=0}^p \tilde f_j(x) - f(x)\right|\le \sum_{j=0}^p\left| \tilde f_j(x) - a_j x^j\right|+ \left|f(x) - \sum_{j=0}^p a_j x^j\right|\le \epsilon.
\end{equation}
\end{proof}
A natural generalization of this theorem is given below
\begin{corollary}
Any piecewise continuous function $f$ on a compact domain $U\subset\mathbb{R}$ can be approximated to within arbitrarily small error in the $2$--norm using an RUS circuit formed using \PAR, \GB~and addition.
\end{corollary}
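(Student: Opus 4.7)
The plan is to chain together three approximations, each of which absorbs at most $\epsilon/3$ of the total error budget: first smooth $f$ into a continuous function, then approximate that continuous function uniformly by a polynomial, and finally implement the polynomial with \GB, \PAR, and addition via \lem{polynomial} and the argument of \thm{analthm}. The key point is that the $2$-norm tolerates the modification of a function on sets of arbitrarily small measure, which is exactly what is needed to bridge the gap between the piecewise-continuous hypothesis and the analytic hypothesis already handled.

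First, since $U$ is compact and $f$ is piecewise continuous, $f$ is bounded (say $|f|\le M$ on $U$) and has only finitely many discontinuity points $x_1,\ldots,x_N$. For any $\delta>0$, I would define a continuous function $g_\delta$ that agrees with $f$ outside the union of the open intervals $(x_i-\delta,x_i+\delta)$ and linearly interpolates the one-sided limits of $f$ inside each such interval. Since $|f - g_\delta|\le 2M$ pointwise and the support of $f-g_\delta$ has measure at most $2N\delta$, we get
\begin{equation}
\|f - g_\delta\|_2 \le 2M\sqrt{2N\delta},
\end{equation}
which can be made smaller than $\epsilon/3$ by choosing $\delta$ small enough. (A mollification via convolution with a smooth bump would work equally well and in fact produce a $C^\infty$ function.)

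Next, $g_\delta$ is continuous on the compact set $U$, so by the Weierstrass approximation theorem there exists a polynomial $p$ with $\|g_\delta - p\|_\infty \le \epsilon/(3\sqrt{|U|})$, whence $\|g_\delta - p\|_2 \le \epsilon/3$. Polynomials are analytic, so I could invoke \thm{analthm} as a black box, but even more directly $p$ is a finite sum of monomials $a_j x^j$, and \lem{polynomial} produces, for each $j$, an RUS circuit built from \GB, \PAR, and addition that implements $a_j x^j$ with $L_2$-error at most $\epsilon/(3(\deg p + 1))$; serial composition of these sub-circuits implements $p$ with $L_2$-error at most $\epsilon/3$. Summing the three errors via the triangle inequality yields $\|f - \tilde p\|_2 \le \epsilon$, where $\tilde p$ is the rotation angle produced by the RUS circuit.

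The main obstacle is the first step: ensuring that the continuous replacement $g_\delta$ really can be made arbitrarily close to $f$ in $2$-norm while still being continuous on all of $U$. This is standard once one is careful about the boundary of $U$ and about discontinuity points that coincide with endpoints of $U$, but it is worth stating explicitly because it is the only place where the weaker (non-analytic, non-continuous) hypothesis enters the argument; every other step is either an invocation of a classical approximation theorem or of the tools already assembled in \sec{complete}. Note that the resulting circuit is genuinely RUS because \lem{polynomial} already produces an RUS circuit for each monomial and addition is implemented deterministically by serial composition.
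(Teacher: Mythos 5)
Your proposal is correct and takes essentially the same route as the paper, whose entire proof is the single sentence that the result ``trivially follows from~\thm{analthm} and the Weierstrass approximation theorem.'' The only difference is that you make explicit the step the paper leaves implicit---first approximating the piecewise continuous $f$ in the $2$--norm by a continuous $g_\delta$ (via interpolation or mollification near the finitely many discontinuities) before invoking Weierstrass, which is indeed necessary since Weierstrass applies only to continuous functions---so yours is a carefully filled-in version of the paper's argument rather than a different one.
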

\begin{proof}
Proof trivially follows from~\thm{analthm} and the Weierstrass approximation theorem.
\end{proof}

Any function that is smooth and computable can therefore be synthesized using repeat until success circuits, but one important point that has not been discussed yet is how many repetitions of these circuits are needed before a success is observed with high probability.  Such estimates are not necessarily easy to find because the execution and error correction steps that are needed in a tree--like repeat until success circuit, such as $\GB^{\circ k}(\phi)$, are difficult to compute.  Recursive expressions for the mean and variance of the number of rotations needed for a composed RUS circuit are provided in~\app{meanvar}.  The mean and the variance allow one to use Chebyshev's inequality to construct a confidence interval for the number of rotations required before a successful result is observed.  We use these results below where we apply the ideas in this section to implement multiplication and reciprocals.  We refer to arithmetic performed in this fashion as RUS arithmetic.

\section{Multiplication}\label{sec:multiplication}


Multiplication is perhaps the most important application for RUS function synthesis  because of its ubiquity in both quantum algorithms and numerical approximations to other functions such as reciprocals.
The key result that we will show in this section is that repeat until success circuits can be used to implement a form of multiplication that requires a constant number of ancilla qubits.  In contrast, most methods that have been proposed thus far for implementing multiplication require a number of qubits that scales at least logarithmically in the number of bits of precision needed.
As per the ideas of the previous section, we do not calculate the value of the product into a qubit string but instead we provide methods for approximately implementing
\begin{equation}
\ket{\phi_1}\ket{\phi_2}\ket{0} \mapsto \ket{\phi_1}\ket{\phi_2}e^{-i \phi_1 \phi_2 X}\ket{0}.
\end{equation}
It is worth noting that in some cases, such as linear systems algorithms or general purpose quantum simulation algorithms, it is actually desirable to have the result applied directly as a rotation.  Thus our form of RUS arithmetic directly outputs the desired rotation, rather than outputting a qubit string that then can be used to implement the rotation using a sequence of controlled rotation gates.

For simplicity, we will first assume that the user can perform a set of rotations, provided as black boxes, that can implement the single--qubit rotations $\{e^{-i \phi_j X}: j=1,\ldots, k\}$ and also has access to a circuit based quantum computer.  Later, we consider specific applications we estimate the total number of $T$--gates  required by the algorithm when these oracles are replaced by Clifford + $T$ circuits.


The $\PAR$ circuit provides simplest approximation to multiplication within our framework but is only accurate to $O(\phi^4)$.  The relative error will be minuscule for cases where the input angles are small; however, if $\phi\approx 0.5$ then the relative error in the circuit is $16\%$.  This means that more accurate multiplication formulas will be needed for multiplying modestly large numbers if the error tolerance is small.  The following lemma shows that a high--order multiplication formula can be constructed that utilizes a small number of qubits.
\begin{lemma}
Let $\max\{|\phi_1|,|\phi_2|\}=x$, then $e^{-i \phi_1 \phi_2 X}$ can be approximated to within error $O(x^{q+2})$  for any even integer $q\ge 2$ on a compact domain $U\subset \mathbb{R}$. At most $O(\log q)$ qubits are needed to perform the multiplication.\label{lem:costmult}
\end{lemma}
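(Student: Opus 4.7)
The plan is to build the approximation term by term as a truncated Taylor series in $\phi_1,\phi_2$, exactly along the lines of \alg{apprx} and \lem{polynomial}, and then to argue that each individual term only needs $O(\log q)$ qubits and that these qubits can be recycled across terms via serial (additive) composition.

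First, I would expand $\arctan(\tan(\phi_1)\tan(\phi_2))$ as a bivariate Taylor series around the origin; by the non--linearity property of $\PAR$ in \tab{PAR} this equals $\phi_1\phi_2 + R(\phi_1,\phi_2)$, where $R$ is an analytic function whose lowest order term is of order $x^4$. Writing the target as $\phi_1\phi_2 = \PAR(\phi_1,\phi_2) - R(\phi_1,\phi_2)$ and truncating $R$ at total degree $q+1$ gives a polynomial correction $\tilde R(\phi_1,\phi_2)$ of total degree at most $q+1$ such that $|\phi_1\phi_2 - \PAR(\phi_1,\phi_2) + \tilde R(\phi_1,\phi_2)| = O(x^{q+2})$ on the compact domain $U$. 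Thus it suffices to implement $\PAR(\phi_1,\phi_2)$ (which is an RUS circuit by \lem{PAR}) together with an RUS circuit for $\tilde R$, and combine them additively via serial composition exactly as in the proof of \lem{polynomial}.

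Next I would bound the qubit cost of each monomial $c\,\phi_1^a\phi_2^b$ with $a+b\le q+1$ that appears in $\tilde R$. The key observation is the composition identity in \tab{GB}: $\GB^{\circ k}(\phi)=\arctan(\tan^{2^k}\phi)$, which produces the monomial $\phi^{2^k}$ to leading order using only a nested chain of $k$ two--input $\GB$ blocks sharing a single target qubit, for a total of $k+1$ qubits. An arbitrary even power $\phi^{2m}$ is then obtained by using the binary expansion of $m$ and chaining these nested $\GB$'s through $\PAR$, exactly as in \eq{composedcirc}; an arbitrary odd power picks up one further $\PAR$ with $\phi$ itself, as in \eq{pargb}. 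Running the same construction for each of $\phi_1$ and $\phi_2$ and combining via one final $\PAR$ as in the proof of \lem{polynomial} yields the monomial $c\,\phi_1^a\phi_2^b$ using $O(\log(a+b)) = O(\log q)$ qubits, with $|c|\le 1$ handled by the rescaling trick involving $\arcsin(\sqrt{|c|/2^{\lceil\log_2|c|\rceil}})$ from the proof of \lem{polynomial}.

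Finally, addition of these corrections is done by serial composition on the same output qubit (since $e^{-iaX}e^{-ibX}=e^{-i(a+b)X}$), which does not increase the qubit count. Because each summand is an RUS circuit (obtained via oblivious amplitude amplification and \lem{PAR} when $\PAR$ appears), all ancilla qubits used inside one summand can be reset and reused for the next summand, so the overall qubit count is the maximum over summands, namely $O(\log q)$. Invoking the time--slicing portion of \lem{polynomial} to drive the truncation remainder below any desired $\epsilon$ then completes the argument. I expect the main obstacle to be the qubit accounting in step two: one must check carefully that the oblivious amplitude amplification wrapper used to promote $\PAR$ to an RUS circuit, together with the nested $\GB$ chains realizing $\phi^{2^k}$, can indeed share a target qubit without incurring more than a constant number of additional scratch qubits beyond the $\log q$ already counted.
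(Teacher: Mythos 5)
Your toolkit matches the paper's---$\PAR(\phi_1,\phi_2)$ as the base multiplier, monomial gadgets built from $\GB^{\circ k}$ via binary expansion of the exponents with ancilla recycling, and additive serial composition---but your central error-accounting step fails. You write $\phi_1\phi_2 = \PAR(\phi_1,\phi_2) - R(\phi_1,\phi_2)$, truncate $R$ to a polynomial $\tilde R$, and claim it suffices to implement $\PAR(\phi_1,\phi_2)$ together with an RUS circuit for $\tilde R$. The problem is that your gadgets do not implement the monomials of $\tilde R$ exactly: a $\GB$/$\PAR$ gadget targeting a monomial of total degree $d$ realizes an arctan--tan composition whose Taylor series agrees with $c\,\phi_1^a\phi_2^b$ only up to an intrinsic error of order $x^{d+2}$. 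The lowest-degree terms of $\tilde R$ have $d=4$ (namely $\phi_1^3\phi_2$ and $\phi_1\phi_2^3$), so your composite circuit carries a residual of order $x^{6}$, which is not $O(x^{q+2})$ once $q\ge 6$. Time-slicing cannot rescue the order claim: by \lem{polynomial} the sliced error is $O(x^{d+2}/r^2)$, so any fixed number of slices shrinks only the constant, not the power of $x$; no fixed $r$ makes an $x^6$ term $O(x^{q+2})$ as $x\to 0$. (Your closing worry about the qubit accounting is misplaced---that part of your argument is essentially the paper's and is sound.)

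The paper closes exactly this hole with the induction you gesture at (via \alg{apprx}) but do not carry out. It defines $M_{q+2}$ from $M_q$ by Taylor-expanding the residual $\phi_1\phi_2 - M_q$ of the \emph{actual, inexact} circuit---including the spurious terms generated by the previous round's gadgets---and uses parity ($\GB$ even, $\PAR$ odd in each argument) to show this residual always has the form $\phi_1\phi_2\sum_{j+j'=q-2} c_{j,j'}\phi_1^{j}\phi_2^{j'} + O(x^{q+2})$ with $j,j'$ even, so each correction term is implementable as $c_{j,j'}\,\PAR(\phi_1,\phi_2,\GB(\phi_1^{j/2},\phi_2^{j'/2}))$ and the fresh gadget errors are pushed two orders higher, to be cancelled at the next round. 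In other words, the correction polynomial cannot be read off from a single truncation of $R$; it must be constructed order by order as a triangular solve against the residuals of the circuits already built. Your construction becomes correct once restated this way, and your $O(\log q)$ qubit count survives unchanged, since the induction uses the same monomial gadgets you describe.
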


\begin{proof}
Our proof proceeds inductively.  \lem{polynomial} shows that an approximate multiplication circuit $M_4(\phi_1,\phi_2)$ can be performed to within error $O(x^4)$.   We then see from Taylor's theorem that
\begin{equation}
\phi_1\phi_2 - M_4(\phi_1,\phi_2) = \phi_1\phi_2\sum_{j'+j=2} c_{j,j'} \phi_1^{j} \phi_2^{j'} + O(x^6),\label{eq:indhyp}
\end{equation}
where $c_{j,j'}\in \R$ are some constants. In~\eq{indhyp} we use the property that the error is of order $O(x^6)$ rather than $O(x^5)$ as $M_4(x)$ is
an even power series in $\phi_1$ and $\phi_2$.

Now, let us assume that for some $q\ge 4$ and some $c_{j,j'}$
\begin{equation}
\phi_1\phi_2 - M_q(\phi_1,\phi_2) = \phi_1\phi_2\sum_{j+j'=q-2} c_{j,j'} \phi_1^{j} \phi_2^{j'} + O(x^{q+2}).
\end{equation}
\lem{polynomial} shows that each term in the above power series can be implemented within error $O(x^{q+2})$.  In particular,
\begin{equation}
c_{j,j'}\phi_1^{j+1}\phi_2^{j'+1}=c_{j,j'} \times \PAR(\phi_1,\phi_2,\GB(\phi_1^{j/2},\phi_2^{j'/2}))+O(x^{q+2}).
\end{equation}
It then follows that
\begin{align}
\phi_1\phi_2 - M_q(\phi_1,\phi_2)-\sum_{j+j'=q-2}c_{j,j'}\times \PAR(\phi_1,\phi_2,\GB(\phi_1^{j/2},\phi_2^{j'/2}))&=O(x^{q+2}).
\end{align}
This process can be iterated in order to implement each $\phi_1^j$ and $\phi_2^{j'}$ using $\GB$ and $\PAR$.  
Now let us define $M_{q+2}(\phi_1,\phi_2)$ to be the approximant formed in this manner.  Note that because $\GB$ is an even function and $\PAR$ is an odd function, we have that there exist $c'_{j,j'}$ such that
\begin{align}
\phi_1\phi_2 - M_{q+2}(\phi_1,\phi_2)&=\phi_1\phi_2\sum_{j+j'=q} c'_{j,j'} \phi_1^{j} \phi_2^{j'}+O(x^{q+4}).
\end{align}
This demonstrates the induction step of our proof.  Now using~\eq{indhyp} as our induction hypothesis, we arrive at the conclusion that we can approximate multiplication to within error $O(x^{q+2})$ for any $q\ge 2$.

The composed gearbox circuit $\GB^{\circ k}(\phi)$ yields a rotation angle of $\phi^{2^k}+O(\phi^{2^k+2})$ using $k$ recursive applications of the circuit. Each application of the circuit requires $1$ additional qubit.  Therefore it is trivial to see by induction that $\phi^{2^k}$ can be approximated using at most $k+1$ qubits since $\phi^2$ requires $2$ qubits.
If we consider a binary expansion of $j$ then we see that at most $\lceil \log_2 (q/2-1)\rceil$ bits are needed to encode $j/2$.  Since the qubits used each of the $\phi^{2^k}$ in the decomposition $\phi^{j/2}=\phi^2\phi^4\cdots$ can be recycled and used to implement the other $\phi^{2^{k'}}$ terms, the number of qubits required to perform each of these terms is at most the number of qubits required for the most memory intensive calculation. This means $\lceil \log_2 (q/2-1)\rceil$ qubits will suffice, if we exclude the output qubits.  Since there are at most $\lceil \log_2 (q/2-1)\rceil$ terms in the decomposition, the total number of output qubits is $O(\log q)$ as well.  An additional $5$ qubits are needed to store the remaining arguments to the function and implement the multiply controlled Toffoli gate inside the gearbox circuit.  Thus the space total space requirements for this circuit scale as $O(\log q)$.
 

\end{proof}

\lem{costmult} gives a procedure that can be used to construct a multiplication formula that has error that has arbitrarily high order error scaling using a number of qubits that scales logarithmically with the order of the multiplication formula.  This allows us to trade off space usage and time--complexity for any multiplication because the number of time--slices needed to achieve error $\epsilon$ for any such multiplier scales as $O(1/\epsilon^{2/q})$ for any fixed $q\ge 4$.  This suggests that if $c_{j,j'}\le 1$ for all $q$ then sub--polynomial scaling with $\epsilon$ can be achieved by choosing $q$ to be a function of $\epsilon$.  
 We leave rigorously demonstrating sub--polynomial scaling for higher--order multiplication formulas as an open problem. 

An important remaining issue is that many angles that appear naturally in problems, such as reciprocal calculation using the binomial method, will be approximately $1$.  These rotations can be implemented using time slicing as per~\lem{polynomial} but the cost of doing so may be prohibitive.  Instead, it makes sense to use a Taylor series expansion centered around $x=1$ rather than $x=0$.  We formally state this in the following corollary.
\begin{corollary}\label{cor:bigangle}
Assume that $\phi_1\approx \phi_2 \approx 1$ then $\phi_1 \phi_2$ can be implemented within error $O(\max\{|1-\phi_1|,|1-\phi_2| \}^{q+2})$ using at most $O(\log q)$ qubits.  Similarly, if $\phi_1\approx 0$ and $\phi_2 \approx 1$ then $\phi_1 \phi_2$ can be implemented within error $O(\max\{|\phi_1|,|1-\phi_2| \}^{q+2})$ also using at most $O(\log q)$ qubits.
\end{corollary}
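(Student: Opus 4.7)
The plan is to reduce Corollary~\ref{cor:bigangle} to Lemma~\ref{lem:costmult} through a linear change of variables, exploiting the fact that all $X$-rotations commute so that serial composition of circuits implements addition of rotation angles. For the first case, I would set $\delta_1 = 1-\phi_1$ and $\delta_2 = 1-\phi_2$, each of which is small by hypothesis, and expand
\begin{equation}
\phi_1 \phi_2 = 1 - \delta_1 - \delta_2 + \delta_1 \delta_2,
\end{equation}
so that $e^{-i\phi_1\phi_2 X} = e^{-iX}\, e^{i\delta_1 X}\, e^{i\delta_2 X}\, e^{-i\delta_1\delta_2 X}$, the four factors commuting because they are all generated by $X$.

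The next step is to verify that each of these four factors can be implemented cheaply. The constant $e^{-iX}$ is a fixed single-qubit rotation that may be pre-synthesized to any desired precision using off-the-shelf (non-RUS) Clifford$+T$ approximation, with no ancilla overhead and no dependence on $q$. Each linear factor $e^{i\delta_j X}$ is obtained from the black-box inputs via $e^{i\delta_j X} = e^{iX}\,(e^{-i\phi_j X})^{-1}$, again requiring no additional ancillas beyond the output qubit. Finally, the bilinear factor $e^{-i\delta_1 \delta_2 X}$ is precisely the multiplication primitive addressed by Lemma~\ref{lem:costmult}, applied to the pair $(\delta_1,\delta_2)$ whose magnitudes are both $O(\max\{|1-\phi_1|,|1-\phi_2|\})$. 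That lemma then yields an approximation within error $O(\max\{|\delta_1|,|\delta_2|\}^{q+2}) = O(\max\{|1-\phi_1|,|1-\phi_2|\}^{q+2})$ using at most $O(\log q)$ ancilla qubits. Composing the four commuting rotations in series gives the claimed implementation, with total qubit count dominated by the bilinear step.

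For the second case $\phi_1\approx 0$, $\phi_2\approx 1$, the identical strategy works with the simpler expansion $\phi_1 \phi_2 = \phi_1 - \phi_1\delta_2$, where $\delta_2 = 1-\phi_2$. The linear piece $e^{-i\phi_1 X}$ is just the oracle, and the correction $e^{i\phi_1\delta_2 X}$ is supplied by Lemma~\ref{lem:costmult} applied to $(\phi_1,\delta_2)$, both small, producing the error $O(\max\{|\phi_1|,|1-\phi_2|\}^{q+2})$ within the stated $O(\log q)$ qubits.

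The main obstacle I anticipate is bookkeeping rather than a conceptual hurdle: one must check that re-centering the Taylor expansion about $(\phi_1,\phi_2) = (1,1)$ (respectively $(0,1)$) does not hide lower-order error terms and that all factors truly commute as written, so that serial composition is exact. One must also confirm that preparing the constant $e^{-iX}$ does not inflate the qubit budget, which it does not because a fixed single-qubit unitary is synthesized once with constant ancilla. Once the algebraic identity and the commutativity of $X$-rotations are in place, the corollary follows essentially immediately from Lemma~\ref{lem:costmult}.
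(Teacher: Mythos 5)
Your proposal is correct and takes essentially the same route as the paper: your identity $\phi_1\phi_2 = 1-\delta_1-\delta_2+\delta_1\delta_2$ is exactly the paper's $\phi_1\phi_2 = -1+\phi_1+\phi_2+(1-\phi_1)(1-\phi_2)$ with the cross term supplied by $M_q$ from \lem{costmult} and the linear terms added by serial composition of commuting $X$-rotations, and your second case matches the paper's $\phi_1\phi_2 = \phi_1 - M_q(\phi_1,1-\phi_2)$ verbatim. One cosmetic slip worth fixing: $e^{i\delta_j X} = e^{iX}e^{-i\phi_j X}$, i.e., the oracle itself rather than its inverse $\left(e^{-i\phi_j X}\right)^{-1}$, though the inverse is in any case obtainable by $Z$-conjugation since $Ze^{-i\phi X}Z = e^{i\phi X}$.
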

\begin{proof}
Assume that $\phi_1\approx \phi_2 \approx 1$ then from~\lem{costmult} we have that
\begin{align}
\phi_1 \phi_2 &= -1 +\phi_1 +\phi_2 + (1-\phi_1)(1-\phi_2)\nonumber,\\
&= -1 +\phi_1 +\phi_2 + M_q(1-\phi_1,1-\phi_2)+O(\max\{|1-\phi_1|,|1-\phi_2|\}^{q+2}),
\end{align}
and this operation can clearly be implemented using at most $O(\log q)$ qubits.

\begin{table}[t!]
\begin{tabular}{c@{\quad}c@{\quad}c@{\quad}c}
\hline\\
Name & Multiplication formula & Error & Qubits (RUS)\\[1.5ex]
\hline\\
$M_4$ & $\PAR(\phi_1,\phi_2)$ & $O(x^4)$ &$4$\\[1.5ex]
$M_6$ & $\PAR(\phi_1,\phi_2,\frac{\pi}{4}- \GB(\gamma_2,\phi_1)-\GB(\gamma_2,\phi_2))$ & $O(x^6)$&$5$\\[1.5ex]
$M_8$ & $\PAR(\phi_1,\phi_2,\frac{\pi}{4}- \GB(\gamma_2,\phi_1)-\GB(\gamma_2,\phi_2),\frac{\pi}{4}-\GB(\gamma_3,\phi_1)-\GB(\gamma_3,\phi_2)+\GB(\gamma_2,\phi_1,\phi_2))$ & $O(x^8)$&$7$\\[1ex]
\hline
\end{tabular}
\caption{Lowest three orders of multiplication formula designed using the method of~\lem{costmult} but optimized for execution as an RUS circuit when acting upon $\ket{0}$. We use the constants $\gamma_2 = \arcsin(1/\sqrt{6})$ and $\gamma_3=\arcsin(1/\sqrt{15})$. Circuits are optimized for width and are meant to be executed from right to left in the \PAR~to allow the left most qubits to be used as ancillas for the \GB~operations appearing to their right.\label{tab:multform}}
\end{table}

\begin{table}[t]
\[
\begin{tabular}{c@{\qquad}c@{\qquad}c@{\qquad}c@{\qquad}c}
\hline\\
$x$ & $|x^2|$& $|M_4(x,x)-x^2|$&$|M_6(x,x)-x^2|$&$|M_{8}(x,x)-x^2|$\\[1.5ex]
\hline\\
$0.01$ &$1\times 10^{-4}$& $6.7\times 10^{-9}$ &$6.6\times 10^{-14}$ & $5.5\times 10^{-17}$\\[1.5ex]
$0.05$ &$2.5\times 10^{-3}$& $4.2\times 10^{-7}$ &$1.0\times 10^{-9}$ & $2.2\times 10^{-11}$\\[1.5ex]
$0.10$ &$1\times 10^{-2}$& $6.7\times 10^{-5}$ &$6.6\times 10^{-8}$ & $5.5\times 10^{-9}$\\[1.5ex]
$0.5$ &$2.5\times 10^{-1}$& $4.0\times 10^{-2}$ &$9.4\times 10^{-4}$ & $1.9\times 10^{-3}$\\[1.5ex]
$1.0$&$1.0$ &$0.18$& $0.054$ &$0.088$\\[1ex]
\hline
\end{tabular}
\]
\caption{Errors in the first three orders of multiplication formulas as a function of input angles.\label{tab:multerror}}
\end{table}

Now let us assume that $\phi_1 \approx 0$ and $\phi_2 \approx 1$.  We can then use similar reasoning to show that
\begin{align}
\phi_1 \phi_2 &= \phi_1 - \phi_1(1-\phi_2)\nonumber,\\
&= \phi_1 - M_q(\phi_1,1-\phi_2)+O(\max\{|\phi_1|,|1-\phi_2|\}^{q+2}),
\end{align}
and again the resultant rotation can be implemented using $O(\log q)$ qubits.
\end{proof}


\subsection{Sixth--order multiplication formulas}

As an example, we will show how to derive a sixth--order multiplication formula, $M_6$, from a fourth--order multiplication formula $M_4$, which we take to be the output of the PAR circuit.  If $x$ is a small parameter then we can evaluate the behavior of the function for two small inputs by examining the Taylor series of $\PAR(ax,bx)$.  By Taylor expanding the function in powers of $x$ (i.e. using $\arctan(x) = x -x^3/3 +\cdots$ and $\tan(x)=x+x^3/3+\cdots$ ) we find
\begin{equation}
\PAR(ax,bx) = \arctan(\tan(ax)\tan(bx)) = abx^2 + \frac{1}{3}\left(ab^3x^4 +ba^3x^4 \right) + O(x^6).\label{eq:PARerr}
\end{equation}
\eq{PARerr} shows that $M_4$ behaves as an ideal multiplication circuit but with $O(x^4)$ error.  These error terms could be canceled by using the fact that $\GB(ax)=a^2x^2 +O(x^4)$ and then applying two more $\PAR$ circuits in series and Taylor expanding the result:
\begin{equation}
\PAR(ax,bx) - \PAR(ax,bx,\GB(ax),\arctan(1/3))-\PAR(ax,bx,\GB(bx),\arctan(1/3)) = abx^2+O(x^6).\label{eq:PARexamp}
\end{equation}
This process can then be repeated to make the $O(x^6)$ terms zero and so forth.  

We do not use~\eq{PARexamp} in practice because it uses qubits too greedily and is the sum of three different $\PAR$ circuits.  Because $\PAR$ is only an RUS circuit when it acts on $\ket{0}$, the sum of three outputs from $\PAR$ is not an RUS circuit since the three rotations that constitute it are applied in series and hence cannot possibly all act on $\ket{0}$ in the limit of small $x$ (unless $ab=0$).  Thus oblivious amplitude estimation will be needed to convert two of the three generalized PAR circuits into repeat until success circuits.  \lem{PAR} shows that this involves roughly tripling the cost of the circuit and so it is desirable to optimize the circuits to avoid this when possible.  $M_6$ uses one particular strategy to address the problem.

\begin{figure}[t!]
\begin{minipage}{0.45\linewidth}
\includegraphics[width=\linewidth]{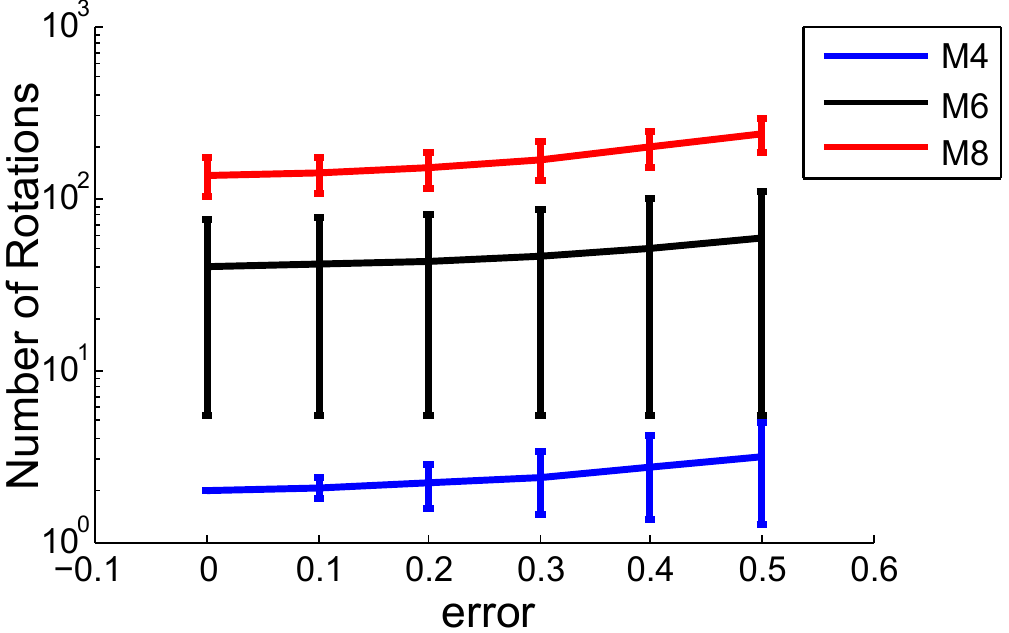}
\caption{Distribution for number of applications of $\phi_1$ and $\phi_2$ needed to approximate $e^{-i\phi_1\phi_2 X}$ using $M_4, M_6$ and $M_8$ for different values of $x=\max\{|\phi_1|,|\phi_2|\}$.\label{fig:costmult}}
\end{minipage}
\hspace{0.5mm}
\begin{minipage}{0.45\linewidth}
\includegraphics[width=\linewidth]{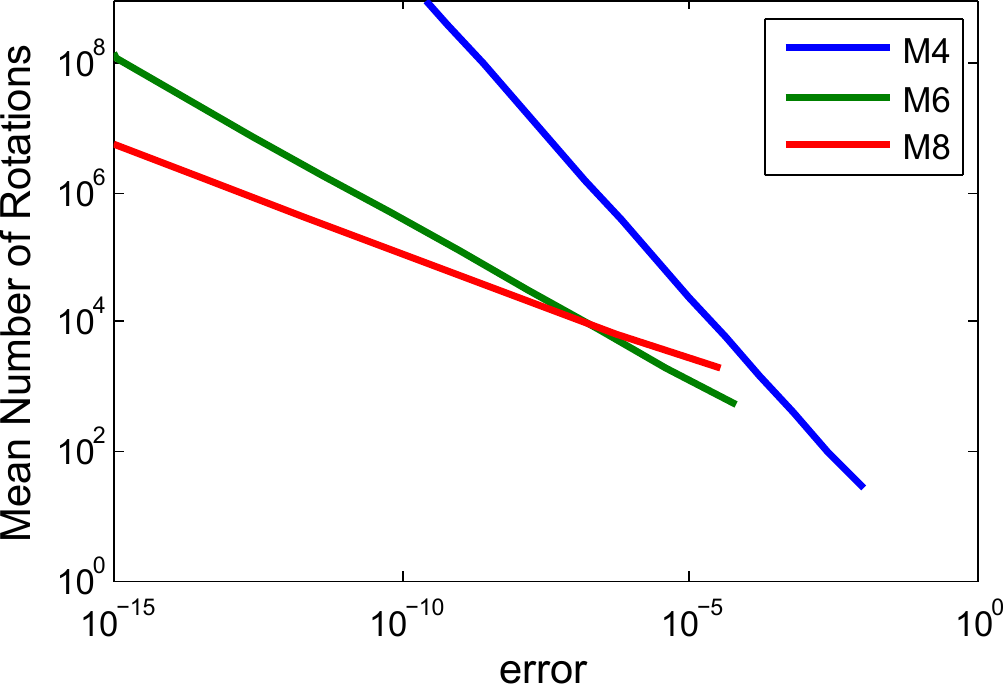}
\caption{Mean number of rotations (i.e. number of queries to an oracle that yields $e^{-i \theta X} \ket{0}$) needed to implement $e^{-i \theta^2 X}$ for $\theta=0.5$ using $M_4, M_6$ and $M_8$ versus approximation error for a number of time steps, $r^2$, ranging over $2^2, 2^4, \ldots, 2^{20}$.  \label{fig:spliterror}}
\end{minipage}
\end{figure}

The inspiration behind $M_6$  comes from noting that each term in~\eq{PARerr} consists of at least one $a$ and $b$.  This means that we can multiply $(1-a^2x^2/3-b^2x^2/3)$ by $\PAR(ax,bx)$ to achieve the desired result.  An efficient way to do this is to note that
\begin{equation}
\tan(x+\pi/4) = 1+2x +O(x^2).\label{eq:shift}
\end{equation}
Thus for any analytic function $f(x)$ we have that
\begin{align}
\arctan(\tan(ax)\tan(bx) \tan(f(x)+\pi/4))&= \arctan(\tan(ax)\tan(bx) + 2f(x)\tan(ax)\tan(bx) +O(f(x))^2).
\end{align}
The choice of $f(x)$ used in $M_6$ is $f(x) = -\GB(ax,\arcsin(\sqrt{1/6}))-\GB(bx,\arcsin(\sqrt{1/6}))= -a^2x^2/6-b^2x^2/6+O(x^4)$ which then, along with~\eq{PARerr}, gives us that
\begin{align}
\arctan(\tan(ax)\tan(bx) \tan(f(x)+\pi/4))&= \arctan(\tan(ax)\tan(bx) -a^3bx^4/3-b^3ax^4/3 +O(x^6)).\nonumber\\
&=\arctan(\tan(ax)\tan(bx)) -a^3bx^4/3-b^3ax^4/3 +O(x^6)\nonumber\\
&= abx^2+O(x^6).
\end{align}
$M_6$ therefore gives a sixth order approximation to the product of two numbers.  If only one time slice is needed to achieve the accuracy threshold for the problem then these circuits will be inexpensive; whereas if more than one time slice is needed then oblivious amplitude amplification will be needed to convert the $\PAR$ into an RUS circuit.  This causes the costs of these circuits to jump substantially during the transition from one to two slices.  

There are several ways in which the circuits could be optimized further for execution in cases where multiple time slices are needed.
One of the issues that arises stems from the fact that shifting the argument of these functions by $\pi/4$ (as per~\eq{shift}) comes at a steep price: it reduces the success probability of the circuit by roughly a factor of two.  This means that it should be used sparingly.  In cases where one slice is needed, its use results in an increase of a factor of $2$ in the expected cost of the circuit, which is superior (for small arguments) to the three--fold increase that would result from using amplitude amplification to allow the $\PAR$ circuits to be run in series as RUS circuits.  In cases where two or more slices will be needed, this trick becomes unnecessarily costly and the resulting circuits can be further optimized by opting instead for a strategy that is closer to~\eq{PARexamp}.

\subsection{Performance of Multiplication Circuits}

How well do these multipliers work for concrete inputs and concrete errors? We address this question by providing a table of elementary multiplication formulas for small rotation angles in~\tab{multform}.  We focus on formulas that are accurate for $x\approx 0$, but formulas adapted for $x\approx 1$ can be derived from these using the approach of~\cor{bigangle}.  The formulas $M_4, M_6$ and $M_8$ are highly accurate if $\max\{\phi_1,\phi_2\}\le 0.1$, but fail to accurately approximate multiplication for large rotations.  This is particularly noticeable with $M_8$, which is actually less accurate than $M_6$ for $\max\{\phi_1,\phi_2\}\ge 0.5$.  This is because the convergence of the Taylor series for $M_8$ is slowed due to the presence of large coefficients on the $O(x^8)$ terms that are introduced in this construction.

The cost of implementing the multiplication in terms of the number of times that the angles $\phi_1$ and $\phi_2$ need to be used is given in~\fig{costmult}.  There we see that the costs of implementing a multiplication using these formulas is minimal for small angles but increases with $x$ because of the costs incurred by the probability of failure in the gearbox and $\PAR$ circuits.  These costs are given in~\cor{cost}.  We ignore the costs of the Toffoli gates needed to perform these rotations because we assume that the cost of implementing the rotation will be substantially higher than that of the Toffoli gate.  

\fig{costmult} shows that $M_4$ costs approximately $2$ rotations, $M_6$ costs $40$ rotations and $M_8$ costs roughly $120$ rotations for $x\le 0.1$.  If time--slicing is used then all these circuits must be converted to genuine RUS circuits using~\lem{PAR}, which triples the cost of all rotations.  Regardless, we can use $r=5$ for $M_4$ at roughly the same cost as a single iteration of $M_6$ and $r=8$ for the same cost as an iteration of $M_8$.  The question remaining is, under what circumstances will using $M_4$ be preferable to using its higher order brethren. We see from~\fig{spliterror} that each of these methods for multiplying two rotation angles works best in a different regime.  $M_4$ is preferable for low accuracy rotations; whereas $M_6$ and then $M_8$ become methods of choice as the error tolerance shrinks.

\begin{table}[t!]
\begin{tabular}{c@{\qquad}c@{\quad}c@{\qquad}c@{\quad}c@{\qquad}c@{\quad}c@{\qquad}c@{\quad}c}
\hline\\
Multiplier & \multicolumn{2}{c}{$n=2$\phantom{111}} & \multicolumn{2}{c}{$n=4$\phantom{111}} & \multicolumn{2}{c}{$n=8$\phantom{111}} & \multicolumn{2}{c}{$n=16$} \\[0.5ex]
method     & $T$--count & qubits & $T$--count & qubits & $T$--count & qubits & $T$--count & qubits \\[1.5ex]
\hline\\
Carry-ripple & 2.34E+02 & 4 & 7.84E+02 & 8 & 2.80E+03 & 16 & 1.06E+04  &32\\[1.5ex]
Table-lookup & 3.38E+03 & 3 & 3.26E+06 & 3 & 3.98E+09 & 3 & 1.13E+13& 3\\[1.5ex]
$M_4$ &6.11E+01 &3 &1.97E+03 & 4 &4.64E+04 &4 &3.00E+07 &4 \\[1.5ex] 
$M_6$ &7.71E+02 &4 &1.67E+03 &4 &3.82E+03 &4 & 5.21E+05&5 \\[1.5ex] 
\hline
\end{tabular}
\caption{\label{tab:multiplier} A comparison of the resources required for space efficient multiplication on a quantum computer. Shown are circuit size (number of $T$-gates) and number of required qubits for two $n=2, 4, 8$ and $16$ bit numbers.  $M_8$ is not given because it is strictly more expensive than $M_6$ for this data set.  RUS synthesis was used to implement single--qubit rotations and a Toffoli construction that uses $7$ $T$--gates was used.  Extra qubits required for controlled Toffoli gates in $M_6$ are assumed to be recycled from prior steps.  All operations are assumed to be performed sequentially, costs for $M_4$ and $M_6$ fall substantially if parallel execution is permitted.}
\end{table}

\subsection{Comparison with classical methods for integer multiplication} \label{sec:compMult}

In this section we compare the resources required for RUS circuits for approximate multiplication with the traditional approaches of implementing multiplication by means of reversible circuits. Comparing classical approaches to implement a function $f(\phi_1, \ldots, \phi_m)$ of several inputs $\phi_i$, $i=1\,\ldots, m$---which are all assumed to be integers with the same precision, i.e., they are given by bit-strings of length $n$---with the ones described in this paper is not entirely straightforward: our methods assume that the inputs are given in form of rotations, whereas in classical approaches usually the inputs are given in a bit-string that encodes a basis state. To make the models comparable, we force the input and output types to be the same, i.e., a bit-string for the inputs and rotations for the outputs: for implementations based on classical circuits this means to encode the output from a reversible implementation $U_f$ of $f$ into a rotation. For this we use the circuit $Enc$ shown in \fig{enc}. Overall, we get a unitary circuit as shown in part (a) to compute the function by way of a classical reversible implementation and an RUS circuit as shon in part (b) to compute the function in Repeat-Until-Success style using measurements. In the remainder of this section we give estimates on the (expected) required resources for both cases (a) and (b), where we instantiate the function $f$ to be a multiplier of two $n$-bit integers $\phi_1$ and $\phi_2$. Later in \sec{reciprocal} we perform a similar analysis for the case where the function $f$ is the reciprocal function applied to $n$-bit integers. Out cost estimates for circuit size are based on the total number of $T$ gates used. Our cost estimates for required number of qubits do not include the qubits required to encode the inputs but only those required for everything else, i.e., the output qubits and any ancilla qubits that might be needed in the computation.

\subsubsection{Comparison with RUS methods}
We compare the two methods by choosing a problem for which both the inputs and outputs are well defined.  The problem that we use to benchmark these algorithms is one where two input numbers are provided as qubit strings: $\ket{\phi_1}\ket{\phi_2}$ and from these qubit strings we wish to implement the rotation $e^{-i \phi_1 \phi_2 X}$ within error at most $2^{-(n+1)}$, meaning that we have $n$--digits of precision in the output rotation.  We further constrain all algorithms to use gates only from the Clifford and $T$ library and take the cost to be the number of $T$--gates used.  The RUS synthesis method of~\cite{BRS14} is used to implement the rotations required in the inputs of $M_4$ and $M_6$ and the outputs of the carry-ripple and table-lookup multipliers.

A comparison between the resources required for the carry-ripple and the table-lookup multipliers with the methods $M_4$ and $M_6$ (which are given in~\sec{multiplication}) can be found in~\tab{multiplier}.  We see that  the number of qubits needed is substantially lower than those required to obtain comparable accuracy using the carry-ripple multiplier, although the $T$--count required for these implementations of RUS arithmetic are several times higher (except for $M_4$ at two bits of precision).  In contrast, lookup tables also require a constant number of qubits since but the $T$--count required by them is prohibitive for $n> 4$.  We see from this data that $M_4$ and $M_6$ give viable alternatives to performing multiplication using traditional methods on a fault tolerant quantum computer.  Perhaps most significantly, both methods require fewer than $5$ qubits to implement which means that they can be performed on existing quantum computers unlike carry-ripple multiplication.

As our RUS implementations are highly space efficient---recall that for instance $M_4$ requires only $4$ additional qubits to approximately compute the product of two $n$-bit numbers---we focus on classical implementations that optimize circuit width. Specifically, we consider the straightforward way of implementing a multiplication of two $n$ bit numbers using the na\"ive method of reducing the problem to $n$ additions. The advantage of this method is that it requires only $O(\log n)$ space in addition to the input and output registers. The size of the resulting circuit scales as $O(n^2)$. More advanced algorithms that achieve asymptotically better scaling in terms of total circuit size such as Karatsuba-Ofman or FFT-based methods seem to require a significant higher amount of space~\cite{K95}, so we do not consider them in the comparison. 

\begin{figure}[t]
\begin{tabular}{c@{\quad}c}
\raisebox{0.8cm}{\includegraphics[height=3.8cm]{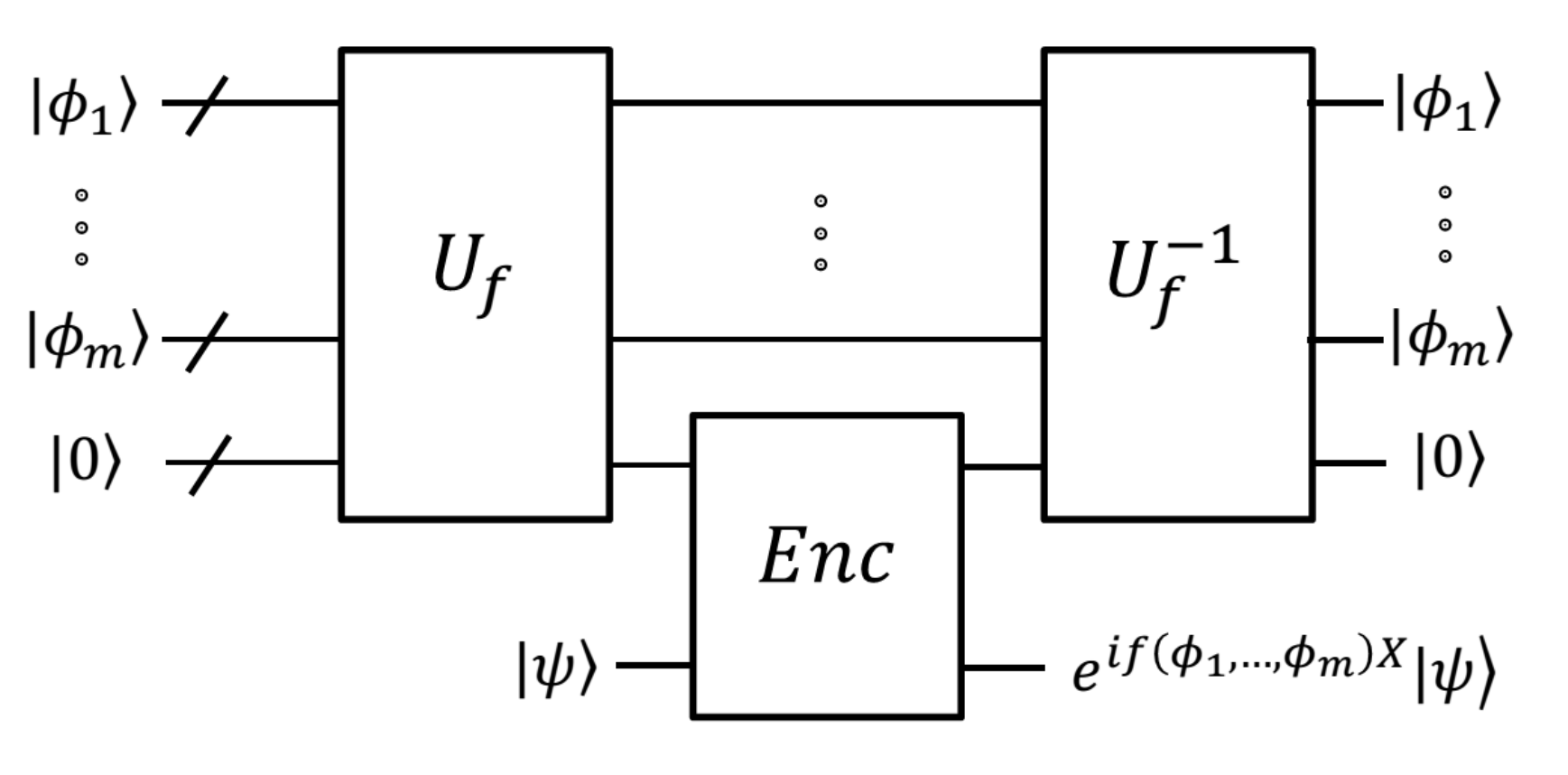}} & 
\includegraphics[height=5.8cm]{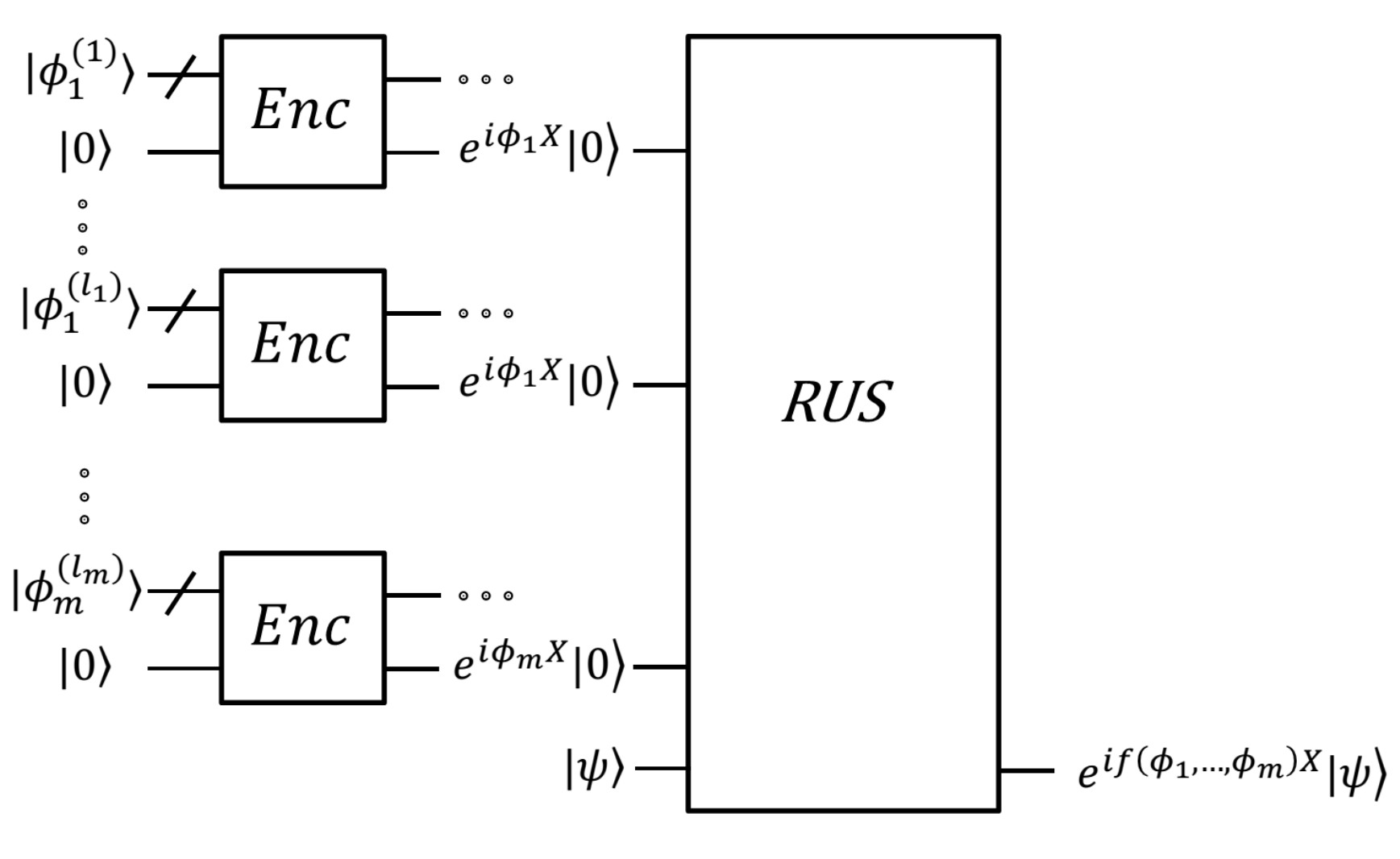}\\
(a) & (b)
\end{tabular}
\caption{\label{fig:comparison} Comparison of implementing a function $f$ as a rotation of a single target qubit $\ket{\phi}$. Shown are two implementations:  (a) via a classical, reversible circuit $U_f$ that first computes $f$ and then encodes the resulting bit-string $f(\phi_1, \ldots, \phi_m)$ as a rotation using a phase encoding circuit $Enc$. The implementation of $Enc$ in turn is shown in Figure \ref{fig:enc}. And (b) via an RUS circuit as in the methods presented in this paper. Note that in the RUS case, the input bit strings are encoded directly into rotatations which are then consumed by the RUS circuit. In contrast to the classical case, several rotations might be required to implement the target rotation: as shown in the figure $l_i\geq 1$ copies of the rotation corresponding to angle $\phi_i$, $i=1, \ldots, m$ are used, denoted by  $\ket{\phi_i^{(1)}},\ldots, \ket{\phi_i^{(l_i)}}$.  
}
\end{figure}

\begin{figure}[hbt]
\includegraphics[height=4.5cm]{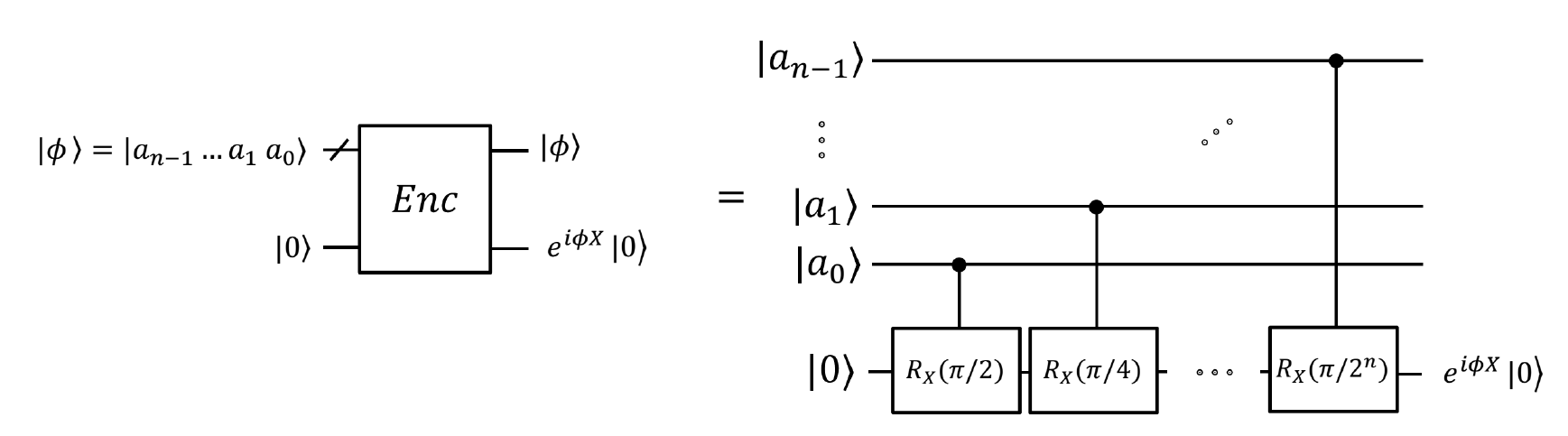}
\caption{\label{fig:enc} Implementation of the phase encoding $Enc$ of an $n$-bit string $\phi = a_{n-1}\ldots a_1 a_0$ using controlled rotations. The cost for each controlled rotation depends on the overall target error $\epsilon$ and scales as $O(\log 1/\epsilon)$ and is assumed to be the same for all rotations shown in the figure.}
\end{figure}

As our methods require only a constant amount of space, arguably the most meaningful comparison is to compare them with classical multipliers that only use a constant amount of memory to do so.
 While there is work on the complexity of computing output bits of the product in the context of space-bounded computation models such as OBDD or branching programs---e.g., on the middle bit \cite{WW:2007} or the most significant bit \cite{BK:2011}--we are unaware of work that addresses the space-bounded complexity for the {\em approximate} multiplication problem. In the absence of such results, we instead consider an extremal case of table-lookup computations, i.e., computations that can be implemented with a constant number of additional qubits but which implements the function in a brute force way that does not exploit any features that the function might possess. 

\subsubsection{Carry-ripple multipliers} The na\"ive method implements multiplication of two $n$-bit numbers $x=\sum_{i=0}^{n-1} x_i 2^i$ and $y=\sum_{i=0}^{n-1} y_i 2^i$ by performing $n-1$ additions where the summands are shifted versions of $x$. Overall, this requires scratch space that is logarithmic in the number of input bits. Moreover, carry-save techniques can be applied to keep circuit depth of the additions small \cite{PS:2013}, i.e., the first of the $n-2$ additions can be performed in constant depth using a suitable data structure. As we are interested in optimizing space, we chose a different path and consider a carry ripple adder \cite{DKR+06} which we then use $n-1$ times to produce the desired output. From the quantum circuits for additions that have been studied in the literature \cite{DKR+06,CDKM:2004,vMI:2005}, we pick one that has the property that a controlled adder can be implemented at relatively low overhead. 

While classically the na\"ive method needs only $\log n$ additional space, in the quantum case we have to store the intermediate results because the overall computation must be reversible. We obtain an upper bound of $3n$ qubits for the total space required, including input and output qubits as we can implement the multiplication using $n$ controlled adders that add shifted copies of the input $x$ to the output register, where the controls depend on the bits of $y$. The resultant $T$--count can be upper bounded by $n$ times the cost for an in-place adder that is controlled on a single qubit. 

To keep space as well as circuit size small, we choose the in-place adder presented in \cite[Figure 5]{DKR+06} as this has the particular useful feature that a {\em controlled} adder can be derived directly from it without using many additional control gates, an observation also used in \cite[Section V]{SM:2013}. Explicitly, by counting the number of gates we obtain that  
at most $12n$ Toffoli gates are required. Using the fact that a Toffoli gate can be implemented using $7$ $T$-gates \cite{NC00,AMMR:2013} we obtain an upper bound of $84n$ $T$-gates for a controlled adder. We now use $n$ controlled adders (of input size $n$, $n+1$, \ldots $2n$ bits) conditioned on the bits of $y$ for an overall $T$--count of $\sum_{k=n}^{2n} 12k = 18n^2+18n$  for the carry-ripple multiplier, i.e., for the implementation of $U_f$ as in \fig{comparison}. The cost for implementing $Enc$ can be estimated as follows: to be comparable with the RUS-based multipliers that produce $n+1$-bit approximations of the rotations, not all $2n-1$ bits of the output of $f$ have to participate in the controlled rotations as in \fig{enc}. Indeed, it is enough it the highest $n+2$-bits participate in order to get an $n+1$-bit approximation. As we then have $n+2$ rotations and we have the target error $\epsilon_{target} = 2^{-(n+2)}$, we obtain that we need at most $1.15 \log_2((n+2)/2^{-(n+2)})$ many $T$-gates per each rotation in $Enc$ where we choose to distribute errors uniformly, i.e., $\epsilon = \epsilon_{target}/(n+2)$ and we used the upper bound \cite{BRS14} for RUS-based single qubit decompositions. Putting everything together, we get an overall cost of $2 \cdot (18n^2+18n) + 1.15 (n+2) \log_2((n+2)/2^{(n+2)})$ many $T$-gates, where the leading factor of $2$ is due to the cleanup of the ancillas. For small values of $n$, the resulting upper bound estimates on the number of $T$-gates are shown in \tab{multiplier}. Note that the space bound on the number of qubits for the carry-ripple adders are given by $2n-1$ for storing the output of $f$ plus $1$ qubit for the finally resulting rotation.

\subsubsection{Table-lookup multipliers} The problem of computing the product of two $n$-bit numbers $x$ and $y$ can be restated as a problem of computing $2n-1$ Boolean functions $f_0(x_0, \ldots, x_{n-1}, y_0, \ldots, y_{n-1}), \ldots, f_{2n-2}(x_0, \ldots, x_{n-1}, y_0, \ldots, y_{n-1})$, i.e., one Boolean function $f_i : \{0,1\}^{2n} \rightarrow \{0,1\}$ for each output bit.  This function can be stored as a lookup table wherein the individual bits yielded by the Boolean functions are stored in an array.  We now consider the complexity of implementing all these functions via lookup tables. As we are interested in uniform families of circuits (as opposed to non-uniform circuit models in which lookup tables can be implemented in $O(n)$ time and $O(n)$ space) we are therefore looking for a quantum circuit that can implement a lookup-table with $2n$ inputs and $n+2$ outputs. Note that we need only the highest order $(n+2)$ of the result to be comparable to the RUS-based implementation, so we do not have to synthesize all $2n-1$ output functions. 
One simple way to upper bound the cost for implementing such a lookup-table is to assume that each output bit is implemented via a sequence of $2n$-fold controlled NOT gates $\Lambda_{2n}({\rm NOT})$, where $\Lambda_k(U)$ is defined as the operation $\Lambda_k(\ket{x_0, \ldots, x_{k-1}}\ket{\psi}) = \ket{x_0, \ldots, x_{k-1}}\ket{\psi}$ if $(x_0, \ldots, x_{k-1}) \not= (1, \ldots, 1)$ and $\Lambda_k(\ket{1,\ldots, 1}\ket{\psi}) = \ket{1, \ldots, 1} U \ket{\psi}$. 

From \cite{BBC+:95} follows that we can implement a $k$-fold controlled NOT (i.e., the case $U={\rm NOT}$) using at most $8k-24$ many Toffoli gates, provided that $k\geq 5$. For small values of $k$, a case analysis shows that $k=2$ requires $1$ Toffoli gate, $k=3$ requires at most $4$, and $k=4$ at most $10$. 

We now break these Toffoli circuits further down over the Clifford+T gate set, specifially, we count the number of $T$ gates. Using known implementations \cite{NC00,AMMR:2013} of the Toffoli gate over Clifford$+T$ it can be shown that its cost in terms of $T$-gates is given by $7$. Note, however, that often it is useful to consider a Toffoli up to a diagonal phase. It is known that this leads to savings in the $T$-count, specifially, an implementation of Toffoli up to a phase is known that requires 4 $T$ gates only \cite{BBC+:95,Selinger:13}. 

An analysis of the Toffoli network for the $k$-fold NOT given in \cite{BBC+:95}, while using as much as possible the cheaper Toffoli up to phase whenever phase cancellations are possible, reveals that for $k\geq 5$ the $T$-count can be upper bounded by $32k-84$. For small values of $k$, again a case analysis can be done which shows that $k=2$ requires at most $7$ $T$-gates, $k=3$ requires at most $22$ $T$-gates, and $k=4$ at most $52$. 

Each output bit is a Boolean function of $2n$ inputs and each non-zero line of the truth table is implemented by a $\Lambda_{2n}({\rm NOT})$ gate, up to Clifford gates (NOTs). The space overhead of this implementation is constant, namely we need $1$ additional qubit in order to implement the decomposition as in \cite{BBC+:95}. 
Note that as above we have to only implement the leading $n+2$ bits of the product. 
We make make the conservative worst case assumption that the Hamming weight of each output bit could potentially be as high as $2^{2n}$ and we have $n+2$ output bits, i.e., we get an upper bound of $(n+2) (2^{2n} (32 (2n)-84) = 2^{2n} (64n^2+44n-168)$  $T$-gates are required for computing the $(n+2)$ highest order bits of the product of two $n$-bit numbers using a reversible cicuit $U_f$. Here we used $1$ additional ancilla qubit to enable the linear time factorization of the $2n$-fold controlled NOT gates. 

As above we have an overhead of $1.15 (n+2) \log_2((n+2)/2^{(n+2)})$ $T$ gates to implement the $Enc$ gates. 
Putting everything together we get an overall cost of $2 \cdot 2^{2n}(64 n^2+44n-168) + 1.15 (n+2) \log_2((n+2)/2^{(n+2)})$ many $T$-gates, where the leading factor of $2$ is due to the cleanup of the ancillas. For small values of $n$, the resulting upper bound estimates on the number of $T$-gates are shown in \tab{multiplier}. Note that for the case $n=2$ we cannot use the formula as it falls within one of the special cases of small number of controls. In this case we obtain a bound of the $T$ gates arising from the reversible part of the circuit as $6,656$ gates to which the cost for $Enc$ has to be added.

Note that the space bound on the number of qubits for the table lookup implementation is given by $1$ qubit for the result of each output function, $1$ qubit as an ancilla, and $1$ qubit to store the final rotation, i.e., a total of $3$ qubits. 

It seems possible that this crude upper bound on the number of $T$ gates can be improved by reusing intermediate results and output bits \cite{MMD:2003,SSP:2013} or by applying synthesis techniques based on the Reed-Muller transform \cite{LJ:2014}, however, an exponential lower bound for $f_{n}$ (the ``middle bit'') of $\Omega(2^{n/2})$ is known for branching programs that are allowed to read the inputs a constant number of times \cite{WW:2007} and also for the most significant bit an exponential lower bound of $\Omega(2^{n/720})$ is known (for OBDDs which are a special case of branching programs) \cite{BK:2011}, hence even after optimization, the circuit complexity will be exponential in case there is only a constant amount of memory available. 

\section{Reciprocals}\label{sec:reciprocal}
An important gap in the application of the Harrow, Hassidim and Lloyd quantum algorithm for solving linear systems~\cite{HHL09} is the fact that a rotation of the form $e^{-i X/a}$ must be performed for some superposition over the values of $a$ stored in a quantum state.  The conventional approach to solving this problem is to provide a classical reversible circuit for the reciprocal and use it to compute $1/a$ into a qubit string stored in a tensor product with $a$.  With this value in hand, $e^{-i X/a}$ can be performed using a series of controlled rotations.  This procedure is discussed in detail in~\cite{CPP+13} and also in~\sec{newton}.  A major drawback of this approach is that many qubits are required to store $1/a$.   A method that goes directly from $\ket{a}$ to $\ket{a} e^{-i X/a}\ket{0}$ without needing to compute $1/a$ in a qubit register would therefore be quite useful.

Newton's method is perhaps the most commonly prescribed method for computing the reciprocal.  The reason for its popularity stems from the fact that (i) Newton's method converges quadratically for a good initial guess and (ii) the approach only requires multiplication and addition.  In particular, if $x_n$ is an approximation to the value of the reciprocal then Newton's method provides a new approximation
\begin{equation}
x_{n+1} = 2x_n - a x_n^2.\label{eq:newtrecur}
\end{equation}
This process begins with a reasonable guess for the value of the reciprocal, such as $x_1 = 2^{-\lceil\log_2 a \rceil}$, and~\eq{newtrecur} is then iterated until the error is sufficiently small.  Since the error shrinks quadratically, $n\in \Theta(\log\log 1/\epsilon)$ iterations suffice to reduce the error to at most $\epsilon$.
A direct application of Newton's method is not well suited for calculating the reciprocal using RUS arithmetic because each iteration requires four copies of $x_n$ and hence the total number of rotations required scales as $O(4^{n})$; making a direct application of this method costly.  This approach can be made more viable by unrolling the recurrence relation into a polynomial and then approximating the polynomial using the methods of~\sec{complete}, but the cost of implementing the resulting polynomial using RUS arithmetic can be prohibitive because the coefficients in the polynomial diverge exponentially.  Caching methods, described in~\sec{cache}, can also be used to reduce the cost of implementing Newton's method using RUS arithmetic at the price of increased circuit width.

We focus on two other methods for computing the reciprocal using RUS circuits.  First we discuss directly implementing a Chebyshev approximant to the reciprocal and then consider implementing the binomial method (also known as the IBM method) for implementing the reciprocal.  Both approaches yield practical methods for approximating $1/a$ using RUS arithmetic.
The first step in both of these methods involves rescaling $a$.
This step is important because it circumvents the problem of exponentially diverging coefficients that appears in a direct application of Newton's method.  This rescaling can be expressed as
\begin{equation}
\frac{1}{a} = 2^{-\lceil \log_2 a \rceil}\left(\frac{1}{2^{-\lceil \log_2 a \rceil} a }\right),
\end{equation}
where $2^{-\lceil \log_2 a \rceil} a \in [1/2,1]$.  We can therefore introduce a new variable
\begin{equation}
y= 1-2^{-\lceil \log_2 a \rceil}a,
\end{equation}
where $y\in [0,1/2]$.  In other words, we seek to find a power series approximation in powers of $y$ to
\begin{equation}
2^{-\lceil \log_2 a \rceil}\left(\frac{1}{1-y}\right).
\end{equation}
Three natural methods then arise for implementing this: Taylor series, Chebyshev polynomials and the binomial division algorithm.  Taylor series tend to provide poor accuracy for this application because of the slow convergence of the series near $y=1/2$ (as alluded to previously).  For this reason, we focus our attention on the remaining two methods.

\subsection{Chebyshev Polynomials}
Chebyshev polynomials are a complete set of orthogonal polynomials that can be used to represent any piecewise continuous function, such that the infinity norm  of the difference between the approximation and the original function is minimized.  Taylor series approximations (such as those used in~\sec{multiplication}) provide extremely accurate local approximations to a function but tend not to provide approximations that are accurate throughout the domain of the function.  Thus Chebyshev polynomials are often the preferred method for obtaining a polynomial approximation to a function on an interval.  The properties of these polynomials are well studied and discussed in detail in~\cite{AS12}.  
The key point behind this approach is that by doing a Chebyshev polynomial expansion, we can reduce the problem of finding the reciprocal to that of implementing a polynomial.  This can be achieved by using the multiplication formulas provided in~\sec{multiplication}.  The three lowest--order Chebyshev approximants to the rescaled reciprocal, $1/(1-y)$, are given in~\tab{cheb}.

\begin{table}[t!]
\begin{tabular}{c@{\qquad}c@{\qquad}c}
\hline\\
Gadget name & Formula & Maximum error\\[1.5ex]
\hline\\
$R_2$&$1.012194+.608948 y+2.664355 y^2$&$1.6\times 10^{-2}$\\[1.5ex]
$R_4$&$1.000359+.966359 y+1.490195 y^2-1.362554 y^3+5.019604 y^4$&$5.1\times 10^{-4}$\\[1.5ex]
$R_6$&$1.000012+.9980208 y+1.059785 y^2+.336629 y^3+4.386547 y^4-7.295458 y^5+9.456853 y^6$ & $1.2\times 10^{-5}$\\[1ex]
\hline
\end{tabular}
\caption{Chebyshev approximants to $1/(1-y)$ on $y=0\ldots 1/2$ where $y=1-2^{-\lceil\log_2 a \rceil}a$.\label{tab:cheb}}
\end{table}

\begin{figure}[t]
\begin{minipage}{0.45\linewidth}
\includegraphics[width=\linewidth]{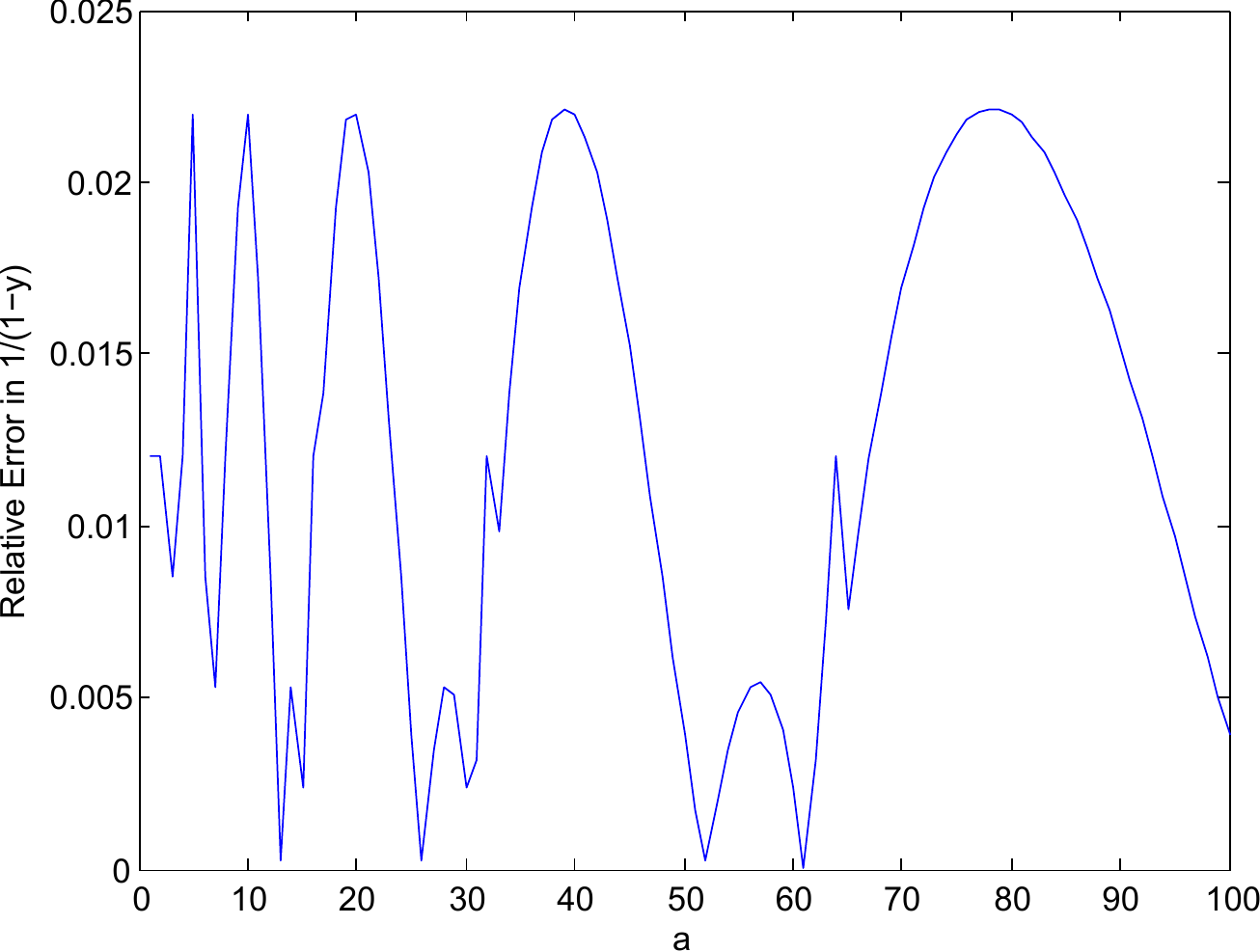}
\caption{Error in inversion using $R_2$ implemented using~\eq{notalpha} and~\eq{alpha} as a function of $a$ as discussed below.}
\end{minipage}
\hspace{1mm}
\begin{minipage}{0.45\linewidth}
\includegraphics[width=\linewidth]{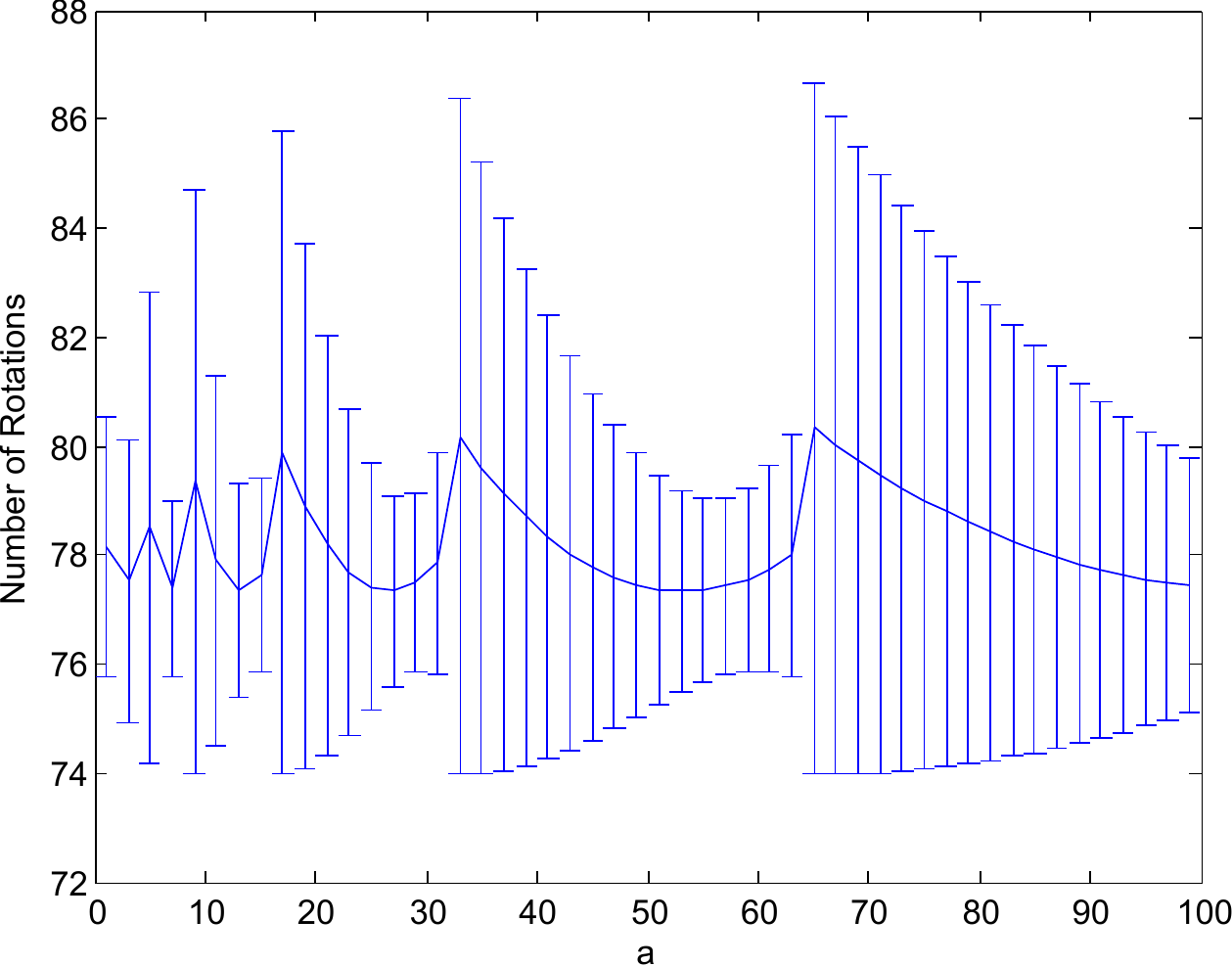}
\caption{Cost of implementing $R_2$ using~\eq{notalpha} and~\eq{alpha} as a function of $a$ as discussed below.  Error bars give the standard deviation.}
\end{minipage}
\end{figure}
We could directly implement these formulas for the (rescaled) reciprocal using the multiplication circuits discussed in~\sec{multiplication} for implementing $y^2$ but instead using gearbox circuits to implement an approximate squaring circuit.  We do this because $\GB$ is naturally an RUS circuit so it is much less costly to construct an approximate squaring circuit using these components then it is to use $\PAR$ circuits and oblivious amplitude amplification to convert them into RUS circuits.  

We also use another optimization that exploits the fact that $y\in[0,1/2]$.  This means that the worst--case errors can be minimized by using a Taylor--series expansion about $y=1/2$ rather than one centered about $y=0$.  In other words, we express $y^2$ as 
The second trick uses the fact that
\begin{equation}
y^2 = (y-1/4)^2 +\frac{y}{2} -\frac{1}{16},
\end{equation}
and use $\GB(y-1/4) \approx (y-1/4)^2$.

A further optimization that we consider is using high--order formulas for computing $(y-1/4)^2$.  The methods of~\sec{complete} can be used to show that
\begin{equation}
x^2 = \GB(x)- \GB(x,x,\arcsin(\sqrt{2/3})) -\GB(x,x,x,\arcsin(\sqrt{22/45})) + O(x^8).\label{eq:notalpha}
\end{equation}
If we cost all such inputs at one rotation, this approximant requires a minimum of $16$ rotations. Similarly, for any constant $\alpha\ge 2/\sqrt{15}\approx 0.52$.
\begin{equation}
\alpha x^2 = \GB(x,\arcsin(\sqrt{\alpha}))- \GB(x,x,\arcsin(\sqrt{\alpha^2 - \alpha/3})) -\GB(x,x,x,\arcsin(\sqrt{2\alpha^3/3-8\alpha/45})) + O(x^8).\label{eq:alpha}
\end{equation}
Thus $0.664355 (y-1/4)^2$ can be approximated to eighth order using  this approach using a minimum of $18$ rotations.  Since the requisite rotations are nearly half the size of those required for $y^2$, we do not need to expand about the midpoint to obtain sufficient accuracy for this rotation.  The approximant $R_2$ can then be realized by combining these ideas and noting that $\alpha y$ can be directly prepared from $\ket{a}$ for any constant $\alpha$ using standard synthesis methods.

Expansion into Chebyshev polynomials provides an excellent way to represent $1/(1-y)$ as a power series in $y$ that minimizes the max--norm of the difference between the approximant and the actual function.  Although such expansions can be practical and highly space efficient, we see that the coefficients in $R_2$, $R_4$ and $R_6$ do not remain small as the order of the polynomial approximation increases.  Although the Chebyshev approximation theorem clearly shows that such errors can be made arbitrarily small by increasing the order of the polynomial, the coefficients may increase with the approximation order.  This makes the complexity analysis much challenging since the cost of RUS arithmetic depends on these coefficients.  We will see below that such problems do not occur when the binomial method is used for division.

\subsection{The Binomial Method}
The binomial method is an alternative to Newton's method for computing the reciprocal that also has the property that it converges quadratically.  Here quadratic convergence means that the error drops doubly exponentially with the number of iterations used in the method.  At its heart, the binomial method is simply a re--grouping of the terms in the Taylor series of $(1-y)^{-1}$.  It reads
\begin{equation}
\frac{1}{1-y}  \approx (1+y)(1+y^2)\cdots (1+y^{{2^{n-1}}}),
\end{equation}
and the error in this approximation is at most $2^{-2^n}$.  The resultant series can be implemented as a series of multiplications of a form that is similar to that in~\cor{bigangle}.  

Quadratic convergence does not occur for direct implementations of the binomial method using RUS arithmetic.  This is because the product $(1+X)(1+Y)$ requires $O(1/(\epsilon^{1/(2k)}))$ copies of $X$ and $Y$ if we want to implement the multiplication within error $\epsilon$.  By iterating this process, it is then clear that $(1+y)(1+y^2)\cdots (1+y^{{2^{n-1}}})$ requires $e^{O(1/\epsilon^{1/(2k)})}$ copies of $y,y^2,\ldots,y^{2^{n-1}}$.  Thus linear, rather than quadratic convergence, is expected if the operations in the product are performed sequentially using RUS arithmetic.  Quadratic convergence can be recovered, however, in architectures that can prepare $e^{O(n)} = O({\rm polylog}(1/\epsilon))$ copies of the requisite states in parallel or by using caching strategies (which we discuss in~\sec{cache}).  In either case, the price of recovering quadratic convergence is a substantial increase in the width of the resulting circuits.

The following corollary shows that the binomial method can be used to compute the reciprocal in a remarkably space efficient manner.
\begin{corollary}
Given a qubit string $\ket{a}\in \mathbb{C}^{2^m}$ the rotation $\exp(-iX/a)$ can be approximated to within distance $\epsilon$ as measured by the $2$--norm using $\lceil\log_2( m+1 )\rceil+O(\log\log(1/\epsilon))$ additional qubits.
\end{corollary}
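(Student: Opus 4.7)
The plan is to combine the binomial identity for the reciprocal with the low-width multiplication formulas of \lem{costmult} and \cor{bigangle}.  First I would rescale by writing $1/a = 2^{-k}/(1-y)$ with $k=\lceil \log_2 a\rceil$ and $y=1-2^{-k}a\in[0,1/2]$.  Since $a$ is an $m$-bit integer, $k$ takes at most $m+1$ distinct values, so computing $k$ reversibly from $\ket{a}$ into a fresh register costs exactly $\lceil \log_2(m+1)\rceil$ qubits, which is the first term in the claimed bound.  From $\ket{a}\ket{k}$ the resource rotation $e^{-iyX}$ can then be produced on a single target qubit by a sequence of controlled rotations, one per bit of $a$, whose angles depend on $k$; this does not require further ancilla workspace.

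Next I would apply the identity
\[
\frac{1}{1-y} = \prod_{j=0}^{n-1} (1+y^{2^j}) + \frac{y^{2^n}}{1-y},
\]
and since $y\le 1/2$ the truncation error is bounded by $2^{-2^n+1}$, so choosing $n=\lceil\log_2\log_2(4/\epsilon)\rceil\in O(\log\log(1/\epsilon))$ suffices to force the analytic error below $\epsilon/2$.  Each factor $(1+y^{2^j})$ is close to $1$, so by \cor{bigangle} the running product can be folded into the target rotation angle by the multiplication formulas $M_q$ centered near unity.  Setting $q=\Theta(\log(1/\epsilon))$ makes each multiplier's truncation error $O(\epsilon/n)$.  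By \lem{costmult} each such $M_q$ uses $O(\log q)=O(\log\log(1/\epsilon))$ ancilla qubits, and the intermediate powers $y^{2^j}$ are produced by $j$-fold recursive \GB~squaring, which is the most space-intensive sub-step but still fits within the same $O(\log\log(1/\epsilon))$ budget.  The overall factor $2^{-k}$ is absorbed by weakening the initial resource rotation, implementable by a single additional rotation controlled on $\ket{k}$.

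The key to matching the stated qubit bound is strict ancilla recycling: I would compute each factor $(1+y^{2^j})$, multiply it into the running product carried on the target qubit, and then uncompute all generating ancillas before the next factor is built, so that the multiplication workspace is reused across all $n$ factors rather than growing linearly with $n$.  The RUS structure of \GB, together with the conversion of \PAR~to RUS via \lem{PAR}, guarantees that every intermediate stage has correctable failures, so the whole computation can be serialized on this minimal workspace without loss.

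The hard part will be the error bookkeeping: because the $M_q$ multiplications compose, perturbations in earlier factors are amplified by later ones, and one has to argue that a product of $n$ near-unity factors, each within $O(\epsilon/n)$ of its ideal value, stays within $\epsilon/2$ of the true product.  This reduces to the standard perturbative estimate
\[
\Bigl| \prod_i (1+a_i) - \prod_i (1+b_i) \Bigr| \le e^{\sum_i |a_i|}\sum_i |a_i-b_i|,
\]
together with $\sum_i|a_i|=O(1)$ on our range of $y$, which shows that the $O(\log(1/\epsilon))$ precision demanded of each multiplier yields $O(\epsilon)$ overall error while keeping the ancilla count at the claimed $\lceil\log_2(m+1)\rceil+O(\log\log(1/\epsilon))$.
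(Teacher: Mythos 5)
Your route is essentially the paper's: rescale $1/a = 2^{-\lceil\log_2 a\rceil}/(1-y)$ with the exponent computed reversibly into a $\lceil\log_2(m+1)\rceil$-qubit register, truncate the binomial product after $n\in O(\log\log(1/\epsilon))$ factors, build the powers $y^{2^j}$ by recursive $\GB$ squaring on $O(\log\log(1/\epsilon))$ recycled qubits, and fold the near-unity factors into the target rotation via \cor{bigangle}. Your explicit error propagation through the product of near-unity factors is a welcome addition, since the paper only asserts that per-factor error $O(\epsilon/n)$ suffices. (A cosmetic point: the paper notes one extra qubit is needed in the reversible exponent circuit, so "exactly $\lceil\log_2(m+1)\rceil$" is slightly optimistic, though it does not affect the scaling.)

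There is, however, one genuine gap: the step ``setting $q=\Theta(\log(1/\epsilon))$ makes each multiplier's truncation error $O(\epsilon/n)$.'' \lem{costmult} only gives error $O(x^{q+2})$ with coefficients $c_{j,j'}$ that are \emph{not} shown to be bounded uniformly in $q$; the paper explicitly leaves sub-polynomial scaling of the high-order formulas $M_q$ as an open problem, and \tab{multerror} shows the concern is real --- $M_8$ is already less accurate than $M_6$ at $x=0.5$, which is precisely the size of $y$ in your first factor $(1+y)$. So you cannot conclude $2^{-\Theta(q)}$-type error for the early multiplications simply by increasing $q$. The paper's proof sidesteps this entirely: it fixes a low-order multiplier ($M_2$ in its notation, via \cor{bigangle}, using only $2$ extra qubits) and drives the per-multiplication error to $O(\epsilon/n)$ by time-slicing. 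Because the construction is $\GB$-based and hence a genuine RUS circuit, the slicing needs neither oblivious amplitude amplification (so your invocation of \lem{PAR} is unnecessary here) nor additional qubits: the error per slice shrinks faster than the rotation angle, so accuracy is purchased with repetitions in time while the ancilla count stays at the claimed $\lceil\log_2(m+1)\rceil+O(\log\log(1/\epsilon))$. Substituting that mechanism for your high-order-$M_q$ step repairs the argument; the remainder of your proposal, including the recycling discipline, matches the paper.
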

\begin{proof}
The error in the binomial approximation to the reciprocal is at most $2^{-2^n}$ for any $n$, hence if we wish the error to be $\epsilon$ then it suffices to take $n \in \Omega(\log \log(1/\epsilon))$.  This also suffices to make the error $O(\epsilon/n)$, which is sufficient to guarantee that the overall error is at most $O(\epsilon)$.  All of the monomials present in the binomial expansion can either be constructed using $\GB^{\circ p}(y)$ for $p=1,\ldots,n$ or directly implemented from the qubit representation of $y$.  This construction requires at most $n$ qubits.   Since this is an RUS circuit, we can make the error arbitrarily small using time--slicing without resorting to oblivious amplitude amplification, which substantially reduces the cost of method.
\cor{bigangle} can then be used to perform the multiplications using $M_2$, which requires only $2$ qubits within error at most $O(\epsilon/n)$.  Therefore the total error can be made at most $O(\epsilon)$ using the claimed number of qubits.

The remaining issue is that of constructing $y= 1-2^{-\lceil \log a \rceil}a$ from $\ket{a}$.  This problem is equivalent to bit shifting $a$ to the right until a number in the range $[1/2,1]$ is attained.  This, in effect, becomes the problem of preparing the state $\ket{a}\ket{\lceil \log_2 a \rceil}$.  This can be implemented using a simple reversible circuit, as illustrated below for the case of $4$ input qubits, where $\Qcircuit @R 1em @C 1.5em{&\gate{+j} &\qw}$ represents an adder circuit that increments a register by $j$.

\[
\Qcircuit @R 1em @C 1.5em {
&\qw & \ctrl{4}		&\qw		&\qw			&\qw		&\qw			&\qw		&\qw			&\qw		&\ctrl{1}	&\ctrlo{1}	&\ctrlo{1}	&\ctrlo{1}	&\qw			&\qw\\
&\qw &\qw			&\qw 		&\ctrl{3}		&\qw		&\qw			&\qw		&\qw			&\qw		&\ctrlo{1}	&\ctrl{1}	&\ctrlo{1}	&\ctrlo{1}	&\qw			&\qw\\
&\qw &\qw			&\qw		&\qw			&\qw		&\ctrl{2}		&\qw		&\qw			&\qw		&\ctrlo{1}	&\ctrlo{1}	&\ctrl{1}	&\ctrlo{1}	&\qw			&\qw\\
&\qw &\qw			&\qw		&\qw			&\qw		&\qw			&\qw		&\ctrl{1}		&\qw		&\ctrlo{4}	&\ctrlo{4}	&\ctrlo{4}	&\ctrl{4}	&\qw			&\qw\\
\lstick{\ket{0}}&\qw &\multigate{2}{+4}  &\ctrl{1}	&\multigate{2}{+3}	&\ctrlo{1}	&\multigate{2}{+2}&\ctrlo{1}	&\multigate{2}{+1}&\ctrlo{1}	&\qw		&\qw		&\qw		&\qw		&\multigate{2}{-1}  &\qw\\
\lstick{\ket{0}}&\qw &\ghost{+4}		&\ctrlo{1}	&\ghost{+3}		&\ctrl{1}	&\ghost{+2}		&\ctrl{1}	&\ghost{+1}		&\ctrlo{1}	&\qw		&\qw		&\qw		&\qw		&\ghost{-1}	&\qw\\
\lstick{\ket{0}}&\qw &\ghost{+4}		&\ctrlo{1}	&\ghost{+3}		&\ctrl{1}	&\ghost{+2}		&\ctrlo{1}	&\ghost{+1}		&\ctrl{1}	&\qw		&\qw		&\qw		&\qw		&\ghost{-1}	&\qw\\
\lstick{\ket{0}}&\qw &\ctrlo{-1}		&\targ		&\ctrlo{-1}		&\targ		&\ctrlo{-1}		&\targ		&\ctrlo{-1}		&\targ		&\targ		&\targ		&\targ		&\targ		&\ctrlo{-1}		&\qw\\
}
\]
This requires $\lceil \log_2 m+1 \rceil$ additional qubits (for technical reasons one additional qubit is also used in the reversible circuit that implements this, but this does not change the scaling).  Given this state, the input bit string $a$ can be logically bit shifted so that $\exp(-i2^{-\lceil \log a \rceil}aX )$ can be performed  from $\ket{a}$ using a series of controlled rotations.
Thus the rotation can be implemented within error $\epsilon$ using $\lceil\log_2( m+1 )\rceil+O(\log\log(1/\epsilon))$ qubits, as claimed.
\end{proof}



\subsection{Comparison with classical methods for computing reciprocals}\label{sec:newton}

Similar to the case of multiplication, we provide a comparison between the resources required for RUS circuits for approximate computation of reciprocals with reversible circuits. We consider three different ways of implementing the computation of the reciprocal of $n$-bit integers: (i) a reversible implementation of the (extended) Euclidean algorithm following \cite{PZ:2003}, (ii) a reversible implementation of Newton's method following \cite{CPP+13}, and (iii) a table-lookup implementation. 

Similar to our comparison of conventional reversible circuits against RUS arithmetic for multiplication, we will choose a problem for which the inputs and outputs of both algorithms are comparable.  We take our problem to be one inspired by the linear systems algorithm~\cite{HHL09}.  A qubit string encoding a value $a$ is provided consisting of $n$ bits and we wish to use this to perform $e^{-i X/a}$.  Again we assume that all operations are decomposed into Clifford and $T$ gates and that the result is accurate to $n$ bits, meaning that the error in the resultant rotation is at most $2^{-(n+1)}$.

\subsubsection{Reciprocals via extended Euclidean algorithm} The basic idea is that for an input $x=\sum_{i=0}^{n-1} x_i 2^i$ the first $n$ digits of the reciprocal of $x$ can be computed by running the extended Euclidean algorithm for the computation of the greatest common divisor ${\rm GCD}(x, 2^n)$. By performing bit-shifts if necessary, we can assume that $x$ is odd, i.e., we are in the case where the GCD is equal to $1$. The extended Euclidean algorithm will then produce two integers $r$ and $s$ such that $rx + s 2^n = 1$. If $r = \sum_{i=0}^{n-1} r_i 2^i$, then the bit presentation of $x^{-1}$ is given by $\sum_{i=0}^{n-1} r_i 2^{-n+i}$. The computation of the extended Euclidean algorithm is highly non-trivial as the number of iterations of the basic reduction step depends on  the inputs. 

A reversible implementation of the extended Euclidean algorithm has been given that synchronizes the computation for any pairs of inputs so that it has the same number of steps is given in \cite{PZ:2003} and resources estimates are provided in \cite[Section 5.4.1]{PZ:2003}. The synchronization of the computation of the GCD of two $n$-bit number requires $4.5n$ repetitions of a fundamental cycle which in turn consists of the controlled application---depending on the content of a flag qubits---of $4$ adders, one swap, and one comparison circuit. Up to leading order, only the controlled adders matter and as a comparison can be reduced to an adder, we get an overall cost of $5$ times the cost for a controlled $n$-bit adder which we estimated earlier in  \sec{compMult} to be upper bounded by $84n$ in terms of $T$-gates. Hence we obtain $4.5n \cdot 5 \cdot 84n = 1,890 n^2$ as an upper bound for the number of $T$-gates. Putting everything together, we obtain 
the upper bound $2\cdot 1,890 n^2 + 1.15 (n+2) \log_2((n+2)/2^{(n+2)})$ on the number of $T$ gates. 
The space requirements for this method are bounded above by $5n+4\log_2(n)$ in \cite[Section 5.4.2]{PZ:2003} plus $1$ additional qubit for the output rotation.   

\subsubsection{Reciprocals via Newton's method} For a given $n$-bit integer $x$ one can obtain an approximation for $1/x$ by iterating the map $\mu: z \mapsto 2z - x z^2$ which quickly converges toward the fixed point of $\mu$ which is given by $\mu(1/x) = 1/x$. In order to achieve an output that has the first $n$ bits equal to those of of $1/x$, one needs at most $O(\log n)$ many iterations of $\mu$, when starting from an initial value that can be chosen to be any number between $2^{-n}$ and $2^{-n+1}$. A detailed analysis of the number of iterations to achieve a target precision of $n$ bits has been provided in \cite{CPP+13}: as the number of operations for the computation of $\mu$ in each iteration can be bounded by the circuit sizes for two multiplications (one for the square and one for multiplication with $x$; the multiplication by $2$ can be implemented as a bit shits), the number of iterations can be bounded by $2 \log_2(n)$ and there is an overhead of a factor of $2$ as the computation has to be reversed to disentangle the ancillae used in the algorithm. Overall, we obtain $4 \log_2(n)$ times the cost for one adder and two multipliers. We bound the $T$--count for an $n$-bit multiplier in \sec{compMult} above by $18n^2+18n$ and choose the adder to be the in-place adder from \cite[Table 1]{CDKM:2004} for which the cost in terms of Toffoli gates has been bounded by $2n-1$, i.e., we can bound the number of $T$-gates for this adder by $14n-7$. The total $T$--count is therefore the sum of all these contributions: $4 \log_2(n) \cdot (2(18n^2+18n) + (14n-7))= (144n^2+200n-28) \log_2(n)$ for the computation of $U_f$ as in \fig{comparison}

Putting everything together, we obtain 
$2\cdot (144n^2+200n-28) \log_2(n) +  1.15 (n+2) \log_2((n+2)/2^{(n+2)})$ many $T$-gates, where as before the leading factor of $2$ is due to the cleanup of the ancillas. For small values of $n$, the resulting upper bound estimates on the number of $T$-gates are shown in Table \tab{reciprocals}.

For the space requirements we obtain that in each iteration $n + 2(2n-1)$ new ancillas are required, so that in total we get an upper bound of $2 \log_2(n) (5n-2)=(10n-4) \log_2(n)$ for the computation of $U_f$. As in case of the multiplier we now have to compute the result into a rotation which leads to an additional overhead of $1$ qubit so that in total we need no more than $ \log_2(n)(10n-4)+1$ qubits. 

\subsubsection{Reciprocals via table-lookups} This leads to the same bound as in the case of the table-lookup implementation of the multipliers, with the only difference being that the input and output sizes are $n$. We get an upper bound of $2 \cdot (n 2^{n} (32n-84)) + 1.15 (n+2) \log_2((n+2)/2^{(n+2)})= 2^{n+1} (32n^2-84n) + 1.15 (n+2) \log_2((n+2)/2^{(n+2)})$ many $T$-gates that are required to implement the computation of the reciprocal of an $n$-bit numbers in constant space where $n\geq 5$. For $n=2$ we obtain the bound $8 \cdot 2 \cdot 7=112$ plus the cost for $Enc$ and for $n=4$ we obtain the bound $2^5 \cdot 5 \cdot 52=8320$ plus the cost for $Enc$. These results are summarized in \tab{reciprocals}.

As in case of the multiplier we obtain that the circuit can be implemented with an additional of $3$ qubits only. 

\subsubsection{Comparison with RUS methods}
A comparison between the resources required for reciprocals based on the Euclidean method, Newton's method, and the table-lookup method and $R_2$  is given in~\tab{reciprocals}.  The comparison was done using RUS synthesis to convert the rotations in $R_2$ into Clifford and $T$ circuits and using the same Toffoli gate used in previous steps.  $M_4$ and $M_6$ were used for $n=2$ and $n=4$ respectively to rescale the reciprocal after $1/(1-y)$ was computed using $R_2$, and in both cases $99\%$ of the error tolerance was used in the synthesis steps on the rotations in $R_2$ because these constitute the inner loop of the algorithm.  Making these rotations as inexpensive as possible helps reduce the cost of the overall algorithm because they are repeated many more times than the rescaling operation used after $1/(1-y)$ is computed.

  The first feature to note is that the number of ancilla qubits used in a reversible implementation of Euclid's algorithm and Newton's method are quite daunting.  Our estimates suggest that dozens to hundreds of qubits will be needed to perform a computationally useful quantum linear systems algorithm.  The Chebyshev approximation implemented using RUS synthesis methods, $R_2$, requires $6$ qubits in contrast to these results and requires a number of $T$--gates that is comparable to Euclid's method but substantially greater than Newton's method.  Unfortunately, because Chebyshev approximants have fixed accuracy, it is impossible to provide more than $4$ bits of precision.  Perhaps surprisingly, for the circuit sizes considered, table-lookup proved to be a viable approach because the table is only one dimensional.  We see that it is clearly a method of choice for small inversion problems, but its poor asymptotic scaling will make higher--order variants of $R_2$ much less expensive for more inversion problems with more stringent requirements on the error tolerance.  These results show that, unless further optimizations are used for division algorithms, the qubit requirements involved in performing an inversion are far beyond the capabilities of existing quantum computers and may exceed those of even near--future quantum computers.

\begin{table}[t!]
\begin{tabular}{c@{\qquad}c@{\quad}c@{\qquad}c@{\quad}c@{\qquad}c@{\quad}c@{\qquad}c@{\quad}c}
\hline\\
Reciprocal & \multicolumn{2}{c}{$n=2$\phantom{111}} & \multicolumn{2}{c}{$n=4$\phantom{111}} & \multicolumn{2}{c}{$n=8$\phantom{111}} & \multicolumn{2}{c}{$n=16$} \\[0.5ex]
method     & $T$--count & qubits & $T$--count & qubits & $T$--count & qubits & $T$--count & qubits \\[1.5ex]
\hline\\
Euclid & 1.51E+04 & 12 &  6.05E+04 & 23 & 2.42E+05 & 44 & 9.68E+05 & 85 \\[1.5ex]
Newton & 1.92E+03 & 17 & 6.21E+03 & 73 & 2.17E+04 & 229 & 8.05E+04& 625\\[1.5ex]
Table-lookup & 1.40E+02 & 3 & 8.38E+03 & 3 & 7.05E+05 & 3 & 8.98E+08 & 3 \\[1.5ex]
$R_2$ &3.17E+03& 6&1.53E+05 &6 &NA & &NA & \\[1.5ex] 
\hline
\end{tabular}
\caption{\label{tab:reciprocals} A comparison of the resources required for space efficient computation of reciprocals on a quantum computer. Shown are mean circuit sizes (number of $T$-gates) and number of required qubits for $n=2, 4, 8, 16$ bit numbers.  We assume $a\in (1,2]$ for simplicity in the case of $R_2$.}
\end{table}

\section{Caching Strategies}\label{sec:cache}
One of the drawbacks of the approaches we outline above is that arithmetic elements designed using the previous methods do not necessarily compose nicely.  For example, if function $f_1$ requires $N_1$ rotations to implement and $f_2$ requires $N_2$ rotations then $f_1 \circ f_2$ requires $N_1N_2$ rotations.  It may be in principle natural to unravel the recursion and approximate the resulting function using RUS arithmetic, but in practice methods such as Newton's method or the Binomial method that explicitly use recursion and avoiding an exponential slowdown at the price of increased circuit width may be desirable in such cases.  Such exponential slowdowns can be avoided by using phase estimation to ``cache'' the result of the $f_2(x)$ in a register before using that result in $f_1(x)$.  These results are also valuable for cases where is is useful to output the result of the RUS arithmetic as a qubit string.

\begin{lemma}\label{lem:cache}
Let $f_1$ be a differentiable function that requires $N_1$ rotations to implement and satisfies $|\partial_\phi f_1(\phi)|\le \kappa$ and let $f_2$ be a piecewise continuous function that requires $N_2$ rotations to implement.  Then $e^{-i(f_1( f_2(\phi))X)}$ can be implemented within error $\epsilon$ with probability at least $1-\delta$ using $$\lceil\log_2(\kappa/\epsilon) \rceil + \lceil \log_2(2+1/(2\delta)) \rceil$$ 
additional qubits and 
$$O\left(\frac{N_2 \kappa }{\epsilon \delta} + N_1 \right)$$
additional single qubit rotations.
\end{lemma}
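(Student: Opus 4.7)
The plan is to use phase estimation to cache the inner function's value $f_2(\phi)$ in a classical register, then feed it into an outer RUS circuit for $f_1$. This replaces a multiplicatively expensive nested composition by an additive one, at the cost of extra qubits to hold the intermediate result.

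More concretely, first invoke the RUS circuit for $f_2$ to construct a unitary $W$ acting as $e^{-i f_2(\phi) X}$; each application of $W$ costs $N_2$ rotations. Next, run phase estimation on $W$ with an $X$-eigenstate as the phase register target to extract an $m$-bit approximation $\tilde y \approx f_2(\phi)$. Finally, use the controlled-rotation circuit $Enc$ of \fig{enc} to convert $\tilde y$ back into a rotation $e^{-i \tilde y X}$, and supply that rotation as input to the RUS circuit for $f_1$, which itself consumes $N_1$ rotations.

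To choose $m$, observe that the bound $|\partial_\phi f_1|\le \kappa$ implies
\[
\left|f_1(\tilde y) - f_1(f_2(\phi))\right|\le \kappa\, |\tilde y - f_2(\phi)|,
\]
so an input precision of $\epsilon/\kappa$ suffices, giving $m=\lceil \log_2(\kappa/\epsilon)\rceil$. Standard phase estimation~\cite{NC00} then achieves this precision with failure probability at most $\delta$ using a phase register of $m + \lceil \log_2(2 + 1/(2\delta))\rceil$ qubits, matching the claimed qubit count, and invoking $W$ at most $O(2^m/\delta) = O(\kappa/(\epsilon\delta))$ times. Each such invocation contributes $N_2$ rotations, so the caching step costs $O(N_2 \kappa/(\epsilon\delta))$ rotations, to which we add the $N_1$ rotations of the subsequent $f_1$ call to obtain the stated total (the $O(\log(\kappa/\epsilon))$ controlled rotations inside $Enc$ are absorbed).

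The main obstacle is disentangling the two error sources: the deterministic truncation error from using only $m$ bits in the phase register (controlled via the Lipschitz bound) and the probabilistic phase-estimation failure (controlled via the extra $\lceil\log_2(2+1/(2\delta))\rceil$ qubits and the associated $O(1/\delta)$ blowup in queries to $W$). A secondary subtlety is that $f_2$ is only assumed to be piecewise continuous rather than smooth; this is compatible with the argument because phase estimation reads off the rotation angle implemented by $W$ directly and does not require smoothness of $\phi \mapsto f_2(\phi)$—it is only the outer function $f_1$ whose regularity is used, precisely where the Lipschitz hypothesis appears.
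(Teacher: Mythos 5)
Your proposal is correct and follows essentially the same route as the paper's proof: phase estimation on the unitary $e^{-i f_2(\phi) X}$ (applied to an $X$-eigenstate, equivalent to the paper's Hadamard conjugation) with precision $\epsilon/\kappa$ fixed by the Lipschitz bound on $f_1$, the standard $t = \lceil\log_2(\kappa/\epsilon)\rceil + \lceil\log_2(2+1/(2\delta))\rceil$ qubit count, $O(2^t)$ invocations of the $N_2$-rotation circuit, and then a single $N_1$-rotation evaluation of $f_1$ with the cached value costed as one rotation. Your added remarks on absorbing the $Enc$ overhead and on why piecewise continuity of $f_2$ suffices are consistent with, and slightly more explicit than, the paper's argument.
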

\begin{proof}
In order to ensure that the total error is at most $\epsilon$ for the computation of $f_1(f_2(\phi))$ it suffices to guarantee that $f_2(\phi)$ is computed within error $\epsilon$ because Taylor's remainder theorem implies that
\begin{equation}
|f_1(f_2(\phi)) - f_1(f_2(\phi) + \epsilon/\kappa)| \le \frac{\epsilon}{\kappa}\max_x |f'(x)| \le \epsilon,
\end{equation}
by assumption.  

Instead of viewing $f_2$ as faulty, imagine instead that $f_2$ is given by a qubit string that stores the result of approximating its value using phase estimation.  Given a unitary that enacts $e^{-i f_2(\phi)X}$ using $N_1$ fundamental rotations, it then follows that the following unitary can be implemented using $N_1$ rotations
\begin{equation}
H e^{-i f_2(\phi )X} H\ket{0} = e^{-i f_2(\phi)}\ket{0}.
\end{equation}
Thus phase estimation can be used to estimate $f_2(\phi)$ within error $\epsilon/\kappa$ using $t$ qubits and success probability $1-\delta$ where $t$ is~\cite{NC00} 
\begin{equation}
t=\lceil\log_2(\kappa/\epsilon) \rceil + \lceil \log_2(2+1/(2\delta)) \rceil.\label{eq:71}
\end{equation}
Phase estimation requires $2^t$ applications of $e^{-if_2(\phi)X}$ in order to learn the state within the required tolerances.  This means that the cost of preparing $\ket{f_2(\phi)}$ is given by~\eq{71} to be $O(N_2 \kappa/(\epsilon \delta))$.  Now if we cost performing $e^{-i f_2(\phi)X}$ using this state as a single rotation then the cost of performing $e^{-i f_1(f_2(\phi)) X}$ is $N_1$.  The lemma then follows by summing both terms.
\end{proof}

\lem{cache} shows that caching can reduce the cost of using RUS arithmetic to approximate recursive functions from exponential to linear at the price of additional costs incurred by using phase estimation.  Such tradeoffs are asymptotically beneficial when $\frac{\epsilon\delta}{\kappa}\in o(N_1)$.  Since recursion can be quite expensive in terms of either time or space it may often be of practical interest to find alternatives to these approaches, such as that given below.

\section{Square Wave Synthesis}\label{sec:square}
An alternative approach to the method described above is to synthesize functions using the square wave property of the $\GB$ function.   The idea behind our approach builds upon existing classical results that use \emph{exact} square waves to approximate a function~\cite{SMM08}.  Assume a function $f(x)$ is provided and that the objective is to approximate it on a uniform mesh of points on $[x_{\rm min},x_{\rm max}]$ and define  $\{x_i:i=1,\ldots, N\}$ to be the midpoints of each segment of the mesh.  An approximation to the function of the form
\begin{equation}
f(x) \approx \sum_{j=1}^N a_j S_j(x),\label{eq:fapprox1}
\end{equation}
is then sought where each $S_j(x)$ is a square wave with period $T_j$.  We achieve this by demanding that the approximant equals $f(x)$ at each midpoint $x_i$
\begin{equation}
f(x_i) = \sum_{j=1}^N a_j S_j(x_i).
\end{equation}
The values of $a_j$ are then found by solving the resultant system of equations.  The error in this piecewise approximation is therefore at most 
\begin{equation}
\left|f(x) - \sum_{j=1}^N a_j S_j(x)\right|\le \max_x |f'(x)|(x_{\rm max}-x_{\rm min})/(2N).\label{eq:squareerror}
\end{equation}

Now let us consider a quantization of this problem.  Assume that we have a qubit string, $\ket{x}$, that encodes the number $x$ and we want to implement
\begin{equation}
\ket{x}\ket{0} \mapsto \ket{x}e^{-if(x) X}\ket{0}.
\end{equation}
As an intermediate step, note that if $\ket{x}$ is given then $\ket{x}e^{-i ax X}\ket{0}$ can be implemented for any fixed $a\in \mathbb{R}$ using traditional circuit synthesis methods and rescaling the rotations used in implementing the rotations that are conditioned on the value of $x$.  For example,   if $x$ is represented using three qubits then $\ket{x}e^{-i ax X}\ket{0}$ can be implemented as
$$
\Qcircuit @R 1em @C 1.5em {
\lstick{\ket{x_1}}	&\ctrl{3}		&\qw			&\qw			&\qw&\\
\lstick{\ket{x_2}}	&\qw			&\ctrl{2}		&\qw			&\qw&\\
\lstick{\ket{x_3}}	&\qw			&\qw			&\ctrl{1}		&\qw&\\
			&\gate{e^{-ia X}}	&\gate{e^{-i2aX}}	&\gate{e^{-i4aX}}	&\qw&\\
}
$$
This observation will be useful because it shows that we can rescale inputs by a constant factor without using RUS arithmetic.

There are a few slight complications that arise when using $\GB$ to approximate the square waves that appear in the synthesis method.  Firstly, instead of square waves we use
\begin{equation}
\frac{4}{\pi}(\GB^{\circ k}(x\pi +\pi/4)-\pi/4) \approx H_{\rm Heaviside}(x),
\end{equation}
and replace each term in~\eq{fapprox1} with
\begin{equation}
\GB\left(\arctan\left(\sqrt{\tan\left({2a_j}\right)}\right),\GB^{\circ k}(x\pi/T_j +\pi/4)\right)-a_j,\label{eq:nearlin}
\end{equation}
for $a_j \le \pi/8$.  This can be verified by observing that when $\GB^{\circ k}(x\pi +\pi/4)$ evaluates to$\pi/2$ the function evaluates to $a_j$ as required.  Larger values of $a_j$ can also be straight forwardly implemented by summing several gearbox circuits that use smaller $a_j$.
\begin{figure}[t!]
\includegraphics[width=0.7\linewidth]{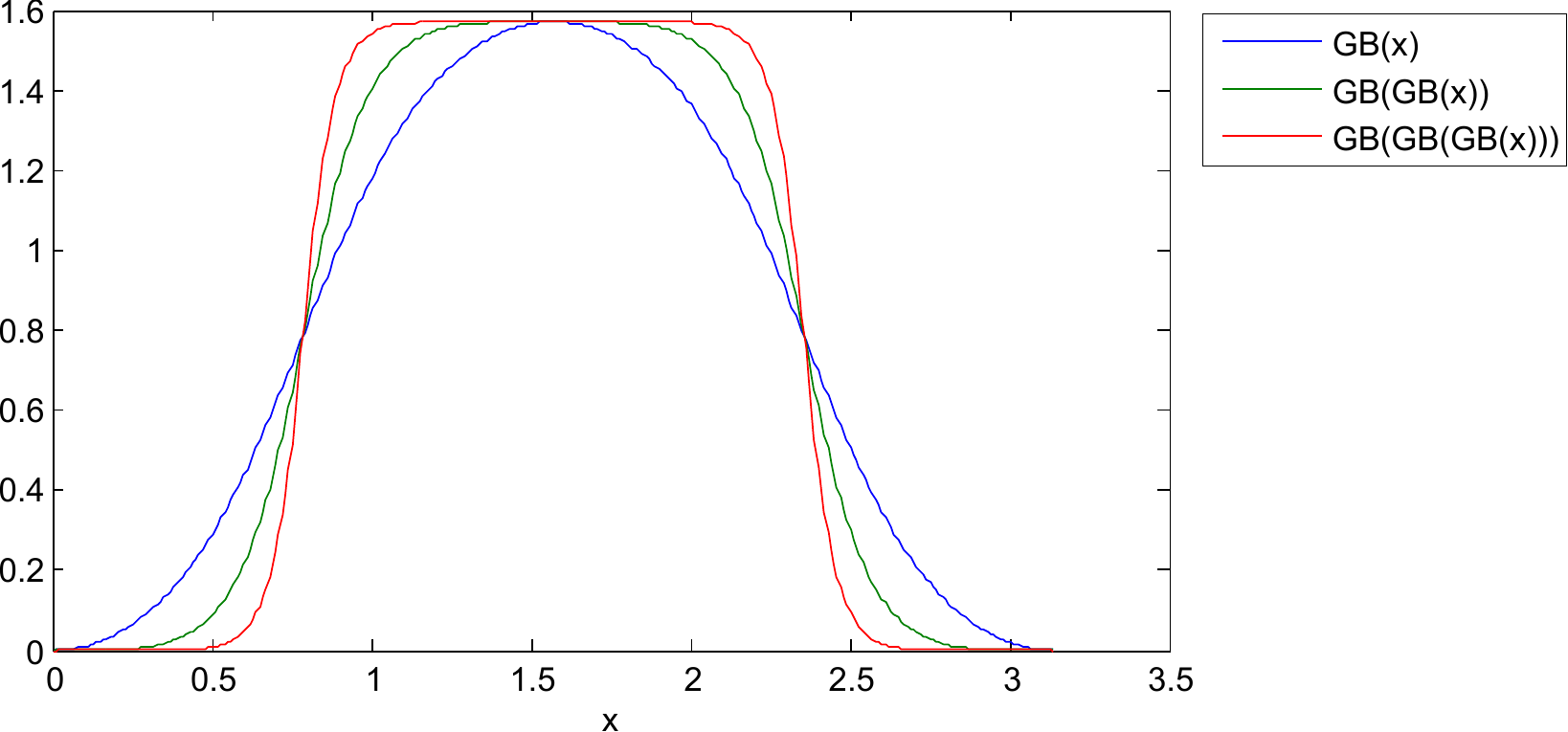}
\caption{$\GB^{\circ k}(x)$ for $k=1,2,$ and $3$. The functions approach a square wave as $k\rightarrow\infty$.}
\end{figure}

The approximation procedure then works as follows.  We first linearize the expressions and find $a_j$ such that for each midpoint $x_i$,
\begin{equation}
f(x_i) = \sum_{j=1}^N a_j \left(\frac{4}{\pi}(\GB^{\circ k}(x_i\pi/T_j +\pi/4)-\pi/4)\right).\label{eq:linear}
\end{equation}
The product of the $a_j$ with each of the square waves could be approximated using the multiplication circuits given previously.
A much better approach is to approximate the near--linear function of $a_j$ given in~\eq{nearlin} as
\begin{equation}
f(x) \approx \sum_{j=1}^N \GB\left(\arctan\left(\sqrt{\tan\left({2a_j}\right)}\right),\GB^{\circ k}(x\pi/T_j +\pi/4)\right)-a_j,
\end{equation}
where the periods for each square wave obey $T_j= 2(x_{\rm max} - x_{\rm min}) j/N$.
The resultant circuit does not require time slicing in order to increase the approximation accuracy, unlike the previous methods.  A formal description of the algorithm is given in~\alg{square}.
It is worth noting that square waves can also be generated using the analogous property of $\PAR$, but we favor $\GB$ in this construction because it is naturally an RUS circuit.

There are some issues that arise from this approximation.  The first is results of composed gearbox circuits do not approximate the square wave well near each discontinuity.  In particular, $\GB^{\circ k}(x+\pi/4) \approx \pi/4 +2^k x$ whereas $\GB^{\circ k}(x) \approx x^{2^k}$.  Therefore we expect the square wave to jump to its maximum value over a range of $2^{-k}$ radians whereas the gearbox circuit will be in near perfect agreement with the square wave near $x=0$.  This means that $k\ge 8$ may be needed to appreciably reduce the errors due to non--instantaneous ramp up near $\pi/4$.  Since the expected number of rotations for $\GB^{\circ k}$ is roughly $2^{k-1/2}$~\cite{WK13} this implies that these errors are can be reduced at linear cost.  

Another issue facing square wave synthesis is that each of the approximate square waves achieves a value of $0$ near $x=0$. If the function is continuous then there exists an $\epsilon$ neighborhood about $0$ such that for any finite $k$ the approximation error is $O(1)$, unless the function to be approximated also obeys $f(0)=0$.  Similar problems can occur at the end of the approximation interval.  Both such issues can be addressed by padding the approximation interval from $[x_{\rm min},x_{\rm max}]$ to $[x_{\rm min}-\Delta,x_{\rm max}+\Delta]$ for some $\Delta >0$.

\begin{algorithm*}[t!]
\caption{\label{alg:square} Square Wave Synthesis Algorithm.}
\rule{\linewidth}{1pt}
\begin{algorithmic}
\Require Function $f$, endpoints for interpolation range $x_{\rm min}$ and $x_{\rm max}$, number of square waves used in the synthesis $N$, $k$ recursion order for Gearbox circuit.
\Ensure  $\{a_j:j=1,\ldots,N\}$ such that $f(x) \approx \sum_j a_j \frac{4}{\pi}(\GB^{\circ k}(xN\pi/(2j(x_{\rm max}-x_{\rm min})) +\pi/4)-\pi/4)$.

\vskip0.2em
\hrule
\vskip0.2em

\Function{squareWaveApprox}{$f$, $x_{\rm min}$, $x_{\rm max}$, $N$}
\State $x_j \gets x_{\rm min} + (x_{\rm max}-x_{\rm min})(j-1/2)/N$.
\State $T_j \gets 2(x_{\rm max}-x_{\rm min})j/N$.
\State $A_{j,k} = \frac{4}{\pi}(\GB^{\circ k}(x_j\pi/T_k +\pi/4)-\pi/4)$.
\State $\vec y \gets f(x_j)$.\inlinecomment{Compute function on mesh}
  \State \Return $A^{-1} \vec y$.\inlinecomment{Solve system of equations, $A\vec a = \vec y$, for coefficients $a_j$}
\EndFunction
\end{algorithmic}
\rule{\linewidth}{1pt}
\end{algorithm*}

We will now illustrate these ideas by using approximate square waves to implement $1/(1-y)$ on $y\in [-0.1,0.6]$.  We use $k=8$ and take $N=71$ and find from the data in~\fig{sqwavesynth} that the function can be approximated on the subinterval $y\in [0,0.5]$ to within a maximum error of $2.1\%$ and a mean error of $0.36\%$.  These spikes in the error occur because  if $y$ is in the interval where the function is ramping up from $-1$ to $1$ then the midpoint approximation will fail to be very accurate.  The probability of such events can be decreased by increasing $k$ at the price of increasing the circuit size.  

The cost of this synthesis process is $O(2^k N\log N)$.  However, a major advantage of this method is that the rotation--depth of the resulting circuit is substantially smaller: $O(kN\log N)$~\cite{WK13}, meaning that the overwhelming majority of the rotations used can be performed simultaneously in an architecture that allows parallel execution.

As a final note, the square wave method should not just be seen as a method for implementing a piecewise constant function.  A broader view is to see each square wave as a logical function that thresholds the rotation angles input into the gearbox circuit.  These circuits therefore give a way to approximate logical operations on rotation angles and hence square wave synthesis can be viewed as a special case of logical operations being used to implement a function via a lookup table on $N$ points with piecewise constant interpolation between them.  

\section{Entangled Inputs}\label{sec:entangle}
We focused in the previous examples on cases where the input is stored as a single qubit state.  More generally, the qubits fed into the gearbox circuit could be entangled with another register.    For example, rather than taking the qubit state $\ket{\phi}$ as input to a gearbox could consider input $\sum_a \alpha_a\ket{a}\ket{\phi_a}$ for some sets of states $\{\phi_a\}$ and amplitudes $\{\alpha_a\}$.  Such cases are not pathological: they are representative of typical use cases in linear systems algorithms or quantum simulation algorithms.  We will see that the analysis of such cases is much more complicated than the case of unentangled inputs considered in~\sec{complete}.

It is tempting to imagine that our methods will run without modification given entangled inputs.  After all, if a failure occurs then the correction operation is a fixed Clifford gate regardless of the value of $\phi_a$.  Thus regardless of whether the input is entangled or not, the operation $e^{-if(\phi_a) X}$ is applied to the target qubit $\ket{\psi}$.  The problem with this reasoning is that the probability of success or failure is intimately linked with the value of $\phi_a$.  This means that if success is measured then amplitudes of all states that lead to low success probability will be suppressed; whereas if failure is measured then terms with large $\alpha_a$ will drop in amplitude.  This is because of the information disturbance property of quantum mechanics.  Thus if we want to apply our RUS circuits to such states without disturbing the distribution we must guarantee that success or failure is insufficiently informative for the coefficients to be significantly altered by the measurements used in RUS circuits.

As an example, let us focus on the case of applying a gearbox circuit given a set of entangled inputs.  The gearbox circuit has success probability $\cos(\theta)^4 +\sin(\theta)^4$ given input $e^{-i\theta X}$.  This means that if entangled inputs are provided then a state proportional to
\begin{equation}
\sum_a \alpha_a \GB(\phi_a) \ket{\psi} = \sum_a \alpha_a \sqrt{\cos(\phi_a)^4 + \sin(\phi_a)^4} \alpha_a e^{-i\tan^{-1} (\tan^2(\theta_a))X}\ket{\psi},
\end{equation}
is prepared upon success.  

In cases where almost all of these probabilities are equal, the multiplicative factor due to the success probability drops out in normalization.  Although this may seem like a trivial case, there are important examples where this actually happens.  For example, the circuit $\PAR(\phi_a)$, which is the original PAR circuits of~\cite{JWM+12}, satisfies this and hence works for entangled inputs without modification.  Thus balancing all probabilities to be nearly equivalent will ensure that the RUS circuits will not have a meaningful impact on the result.  In other cases the probability distribution may be naturally be flat enough such that these errors can be neglected.

The problem of converting $\GB$ and $\PAR$ to circuits that leak negligible information about the $\phi_a$ can be solved by using product formulas, oblivious amplitude amplification and time--slicing.  We will focus on $\GB$ first.  For small input angles, the success probability scales as $\cos^4(x)+\sin^4(x)=1-2x^2+(8/3)x^4+O(x^6)$. These probabilities approach $1$ as $x$ approaches zero.  Unfortunately, we cannot implement a squaring circuit using time--slicing from this alone because the angles implemented also scales as $O(x^2)$.  In order to make this approach work we will have to construct a new function that is flatter.  Consider the following function

\begin{equation}
\GB\left(\frac{x}{2}+\pi/4\right) \GB\left( i(\frac{x}{2} +\pi/4)\right)\GB(x)\equiv\GB(x),
\end{equation}
where $\GB(ix)$ denotes the gearbox circuit with a controlled $iX$ gate rather than a controlled $-iX$ gate.  This is equivalent to the gearbox circuit because the two left most operations invoke equal and opposite rotations upon success.  The probability of all three of these operations succeeding on the first attempt is
\begin{equation}
(\cos(x/2+\pi/4)^4+\sin(x/2+\pi/4)^4)^2(\cos(x)^4+\sin(x)^4)=\frac{1}{4}-\frac{1}{4}x^4 +O(x^6).
\end{equation}
For $x\le 0.1$, the deviation from success probablity $1/4$ is roughly $10^{-5}$.  Since the deviation from uniformity scales like $O(x^4)$ here, we can take $x\rightarrow x/r$ and use $r^2$ time slices to reduce the scaling of the skewness of the distribution to  $O(x^4/r^2)$ while retaining the initial rotation angle (and at the same time improving its fidelity with an ideal squaring operation) if all such measurements succeed.

If we forgo measurement and mark all states that would correspond to three successful measurements as a ``good'' state then it is straight forward to see from~\cite{BCC+13} that a single iteration of oblivious amplitude amplification will make the probability of success $1-O(x^4/r^2)$.  Thus by slicing we can make this probability of success arbitrarily high while minimizing the information leaked.

A similar expansion also exists for $\PAR$:
$$\GB(a/2+\pi/4)\GB(b/2+\pi/4)\PAR(a,b),$$
which leads to success probability $1/4 + (a^2b^2-b^4-a^4)/4 +O(x^4)$ where $x\ge \max\{|a|,|b|\}$. The two gearbox circuits used here do not produce rotations that cancel each other out.  Instead, by choosing an even number of timesteps, these rotations can be canceled in subsequent steps; thereby removing any residual information about the values of $a$ and $b$ while retaining $25\%$ success probability.

Thus $\GB$ and $\PAR$ can in principle be applied in cases where entangled inputs are used.  This raises the possibility that these constructions might be able to substantially reduce the space requirements for linear systems algorithms.  Formally analyzing the time complexity of this application would, however, require carefully trading off both errors in the amplitudes as well as the errors in implementing $\ket{a}e^{-iX/a}\ket{0}$ for each $\ket{a}$.  We leave analyzing such applications for subsequent work.

\section{Conclusion}
Our main contribution is a fundamentally quantum approach to arithmetic that stores both inputs and outputs not as qubit strings but as amplitudes.  More specifically, our method represents the value $\phi$ as $e^{-i\phi X}\ket{0}$ and uses measurement as an active participant in the approximations.  This strategy is profitable because of the use of repeat until success circuits, which removes the possibility of a faulty measurement irreparably corrupting an entire multi--step computation.  In a deep sense, our work can be thought of as providing a generalization of existing circuit synthesis ideas wherein axial rotations of the form $e^{-i f(\phi_1,\ldots, \phi_k) X}$ are implemented instead of a constant function of the form $e^{-i f X}$.  

Optimizing the performance of our circuits therefore requires a different flavor of numerical analysis that takes into account the ability of measurement to non--deterministically apply non--linear transformations on the input data.  We provide a first foray into this field of quantum numerical analysis by providing Taylor series based methods for finding arbitrarily high--order approximations to smooth functions using repeat until success circuits.  Specific examples are given for multiplication and division wherein we find that we can implement both operations using substantially fewer operations than the most popular methods currently proposed.  In particular, our methods can provide an arbitrarily accurate multiplier using a constant number of ancilla qubits.  Reciprocals can also be implemented in this manner using a number of qubits that scales logarithmically in the number of bits of precision required.  This is significant because reversible circuits for division or multiplication can require hundreds of logical qubits, making algorithms like linear systems algorithms outside of the reach of small quantum computers.  Our ideas promise to enable such applications of quantum computers.

We also introduce another method for synthesizing functions using square waves.  This approach does not take advantage of the smoothness of the function but instead uses the fact that our repeat until success circuits approach a square wave upon being recursed many times.  This naturally leads to a piecewise approximation to a function.  We show that this method also yields a very space efficient approximation to a circuit and further leads to approximations that behave well even in the presence of discontinuities.

An important consideration that we have not discussed here is the issues that emerge due to implementing RUS arithmetic on a faulty quantum computer as opposed to an ideal one.  In fault tolerant constructions, more physical qubits will be needed to encode increasingly accurate approximations to ideal RUS arithmetic.  This means that RUS arithmetic cannot be performed within arbitarily small error using a fixed number of physical qubits in such cases.  On the other hand, reversible circuits can require an increasing number of \emph{logical} qubits to perform elementary arithmetic operations.  Each logical qubit must also be formed using many physical qubits and hence the number of physical qubits required for accurate computation will increase rapidly with the desired error tolerance.  Although this tradeoff may typically favor RUS approaches, more work will be needed to fully assess the costs of RUS arithmetic and function synthesis within the context of fault tolerant quantum computation.

Looking forward, there are several potential further avenues of inquiry revealed by this work.  There may exist other classes of repeat until success circuits that are yet to be found that could enable new ways of synthesizing functions.  This may lead to more efficient, or more natural, ways of implementing arithmetic in our paradigm.  In a similar vein, synthesizing functions by way of a Taylor series expansion is perhaps not the most natural way to implement these rotations since multiplication is not natural for large angles in our framework.  The gearbox and generalized PAR circuits  proposed here may also have applications beyond function synthesis.  Also the question of whether sub--polynomial scaling with $\epsilon^{-1}$ can be obtained with this method also remains open. Finally, a natural extension of our work on square wave synthesis is to consider a decomposition of $f$ into a sum of approximate Walsh functions, which would constitute a natural way to approximate arbitrary functions on a quantum computer without using a polynomial expansion.  This list is by no means exhaustive and we suspect that the ideas presented here may stimulate the development of new and innovative methods for performing arithmetic, function approximation or state distillation on fault--tolerant quantum computers.
\begin{figure}[t!]
\includegraphics[width=0.6\linewidth]{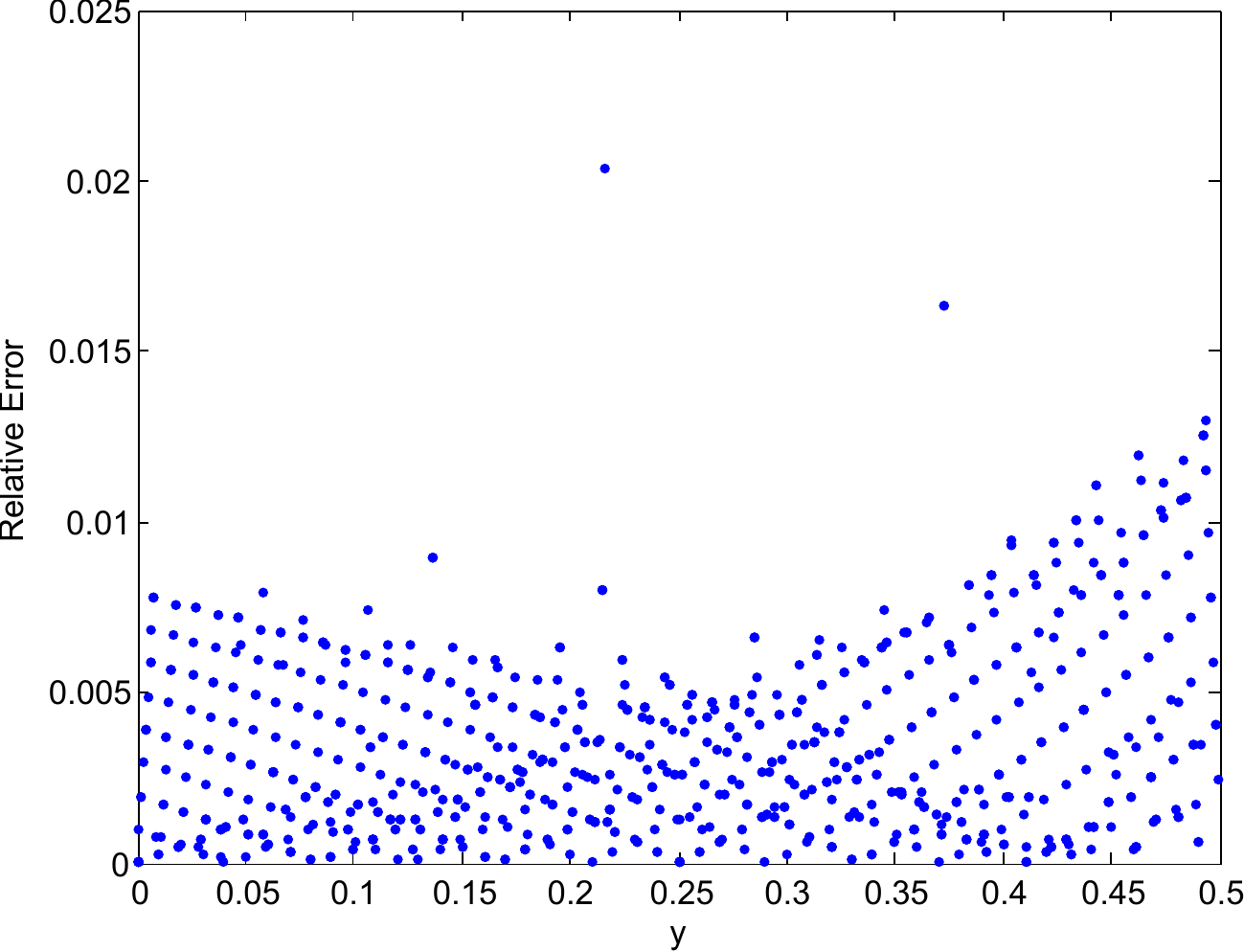}
\caption{Plot of the relative error in square wave synthesis of $1/(1-y)$ for $k=8$ and $N=71$ for values of $y$ corresponding to $a\in [1,1024]$.\label{fig:sqwavesynth}}
\end{figure}

\appendix
\section{Proof of~\lem{PAR}}\label{app:Lemma2}
The proof of~\lem{PAR} requires the use of oblivious amplitude amplification~\cite{BCC+13}, which is a technique that allows amplitude amplification to be performed without knowing the initial state that the system is prepared in.  We provide the result, proven in~\cite{BCC+13}, in the following lemma.
\begin{lemma}(Berry, Childs, Cleve et al.)\label{lem:BCC+}
Let $U$ and $V$ be unitary matrices on $n+1$ qubits and $n$ qubits, respectively, and let $\theta\in (0,\pi/2)$.  Suppose that for any $n$-qubit state $\ket{\psi}$ 

$$U\ket{0}\ket{\psi} = \sin(\theta) \ket{0} V\ket{\phi} +\cos(\theta)\ket{1}\ket{\phi}, $$
where $\ket{\phi}$ is an $n$-qubit state.  Let $R:= (\ketbra{0}{0} - \ketbra{1}{1})\otimes \openone$ and $S:=-URU^\dagger R$.  Then for any $t\in \mathbb{Z}$,
$$
S^t U \ket{0} \ket{\psi} = \sin((2t+1)\theta)\ket{0}V\ket{\psi} + \cos((2t+1)\theta)\ket{1}\ket{\phi}.
$$
\end{lemma}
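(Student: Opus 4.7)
The plan is to verify that $S = -URU^\dagger R$ preserves a particular two-dimensional subspace $\mathcal{H}_\psi$ and acts on it as a planar rotation by angle $-2\theta$; then $t$ iterations applied to $\ket{\Psi_0} := U\ket{0}\ket{\psi}$ produce the claimed amplitudes by the angle-addition formulae. Fix $\ket{\psi}$ and set $\ket{G_\psi} := \ket{0}V\ket{\psi}$ and $\ket{B_\psi} := \ket{1}\ket{\phi_\psi}$, reading the auxiliary state $\ket{\phi}$ in the hypothesis as $\psi$-dependent. These two vectors are orthonormal (different values on the first qubit), $\ket{\Psi_0} = \sin\theta \ket{G_\psi} + \cos\theta \ket{B_\psi}$ lies in their two-dimensional span $\mathcal{H}_\psi$, and I write $\ket{\Psi_0^\perp} := \cos\theta \ket{G_\psi} - \sin\theta \ket{B_\psi}$ for the orthogonal unit vector in this plane.

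The first substantive step is a linearity/isometry property of $\ket{\psi} \mapsto \ket{\phi_\psi}$: applying the hypothesis to $a\ket{\psi_1} + b\ket{\psi_2}$ and equating with $aU\ket{0}\ket{\psi_1} + bU\ket{0}\ket{\psi_2}$ shows the map is linear, and unitarity of $V$ then forces $\braket{\phi_\chi}{\phi_\psi} = \braket{\chi}{\psi}$ for all $\chi,\psi$. With this in hand I would verify that $\ket{\Psi_0}$ and $\ket{\Psi_0^\perp}$ are eigenvectors of $URU^\dagger$ with eigenvalues $+1$ and $-1$ respectively. The $+1$ direction is immediate since $U^\dagger\ket{\Psi_0} = \ket{0}\ket{\psi}$ and $R$ fixes $\ket{0}\otimes\mathcal{H}_n$. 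For the $-1$ direction I would compute, for arbitrary $\ket{\chi}$,
\begin{equation*}
\braket{0\chi}{U^\dagger\Psi_0^\perp} = \sin\theta\cos\theta\bigl(\braket{\chi}{\psi} - \braket{\phi_\chi}{\phi_\psi}\bigr) = 0,
\end{equation*}
so $U^\dagger\ket{\Psi_0^\perp}$ lies in $\ket{1}\otimes\mathcal{H}_n$, on which $R$ acts by $-1$.

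Having characterized $URU^\dagger$ as the reflection $2\ketbra{\Psi_0}{\Psi_0} - I$ restricted to $\mathcal{H}_\psi$, a direct change of basis gives, in the ordered basis $\{\ket{G_\psi},\ket{B_\psi}\}$,
\begin{equation*}
URU^\dagger = \begin{pmatrix} -\cos 2\theta & \sin 2\theta \\ \sin 2\theta & \cos 2\theta \end{pmatrix}, \qquad R = \begin{pmatrix} 1 & 0 \\ 0 & -1 \end{pmatrix},
\end{equation*}
so $S = -URU^\dagger R$ is the planar rotation matrix with angle $-2\theta$. Iterating, $S^t$ sends the coordinate vector $(\sin\theta,\cos\theta)^\T$ of $\ket{\Psi_0}$ to $(\sin((2t+1)\theta),\cos((2t+1)\theta))^\T$ by standard angle-addition, which is exactly the claim.

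The crux of the argument---and the one step that has no analogue in ordinary (non-oblivious) amplitude amplification---is the isometry property of $\ket{\psi}\mapsto\ket{\phi_\psi}$. Without it, $\ket{\Psi_0^\perp}$ could fail to lie in $U(\ket{1}\otimes\mathcal{H}_n)$, so $URU^\dagger$ would not stabilize $\mathcal{H}_\psi$ and the clean $2\times 2$ rotation picture would break. It is precisely the uniformity of the hypothesis across all $\ket{\psi}$, with a single $\theta$ and the same $V$, that makes this step go through.
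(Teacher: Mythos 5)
Your proof is correct. There is, however, no in-paper argument to compare it against: the paper states this lemma as an imported result, quoting it from Berry, Childs, Cleve, Kothari and Somma with the proof explicitly deferred to that reference (``We provide the result, proven in~[BCC+13], in the following lemma''), so you have supplied a proof where the paper has none. What you wrote is, in substance, the original argument of that reference: the crux---which you correctly isolate---is that the uniformity of the hypothesis over all $\ket{\psi}$, with a single $\theta$ and a single $V$, forces $\ket{\psi}\mapsto\ket{\phi_\psi}$ to be a linear isometry, equivalently that the post-selected block $(\bra{0}\otimes\openone)U(\ket{0}\otimes\openone)=\sin(\theta)\,V$ is proportional to a unitary; this is exactly what makes the plane $\mathrm{span}\{\ket{0}V\ket{\psi},\,\ket{1}\ket{\phi_\psi}\}$ invariant under $URU^\dagger$ (and trivially under $R$), after which $S=-URU^\dagger R$ restricts to a rotation by $-2\theta$ and the conclusion is angle addition. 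Your individual steps check out: the $+1$ eigenvector claim is immediate from $U^\dagger\ket{\Psi_0}=\ket{0}\ket{\psi}$; the $-1$ claim needs the isometry for all pairs $(\chi,\psi)$, which your polarization-and-unitarity argument supplies (note $\cos\theta\neq 0$ since $\theta\in(0,\pi/2)$, so $\ket{\phi_\psi}=\cos(\theta)^{-1}(\bra{1}\otimes\openone)U\ket{0}\ket{\psi}$ is even manifestly linear, a slight shortcut past your two-state superposition argument); the $2\times 2$ matrices are right; and the claim for negative $t$ holds as you implicitly use it, since the restriction of the unitary $S$ to the invariant plane is an invertible rotation, so $S^{t}$ acts as rotation by $-2t\theta$ for every $t\in\mathbb{Z}$. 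You were also right to read the hypothesis's $\ket{0}V\ket{\phi}$ as a typo for $\ket{0}V\ket{\psi}$ with $\ket{\phi}$ depending on $\ket{\psi}$; the lemma's conclusion only makes sense under that reading.
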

In our context the $R$ operation in the above lemma can simply be taken to be the $Z$ gate.  This result naturally leads to the following corollary:

\begin{corollary}\label{cor:RUS}
Let $U: \ket{0}\ket{\psi} \mapsto \frac{\ket{0}V\ket{\psi}+ \ket{1}\ket{\phi}}{\sqrt 2}$ then the state $V\ket{\psi}$ can be performed deterministically using three applications of $U$ and a constant sized Clifford circuit.
\end{corollary}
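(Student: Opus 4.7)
The plan is to obtain this corollary as an immediate specialization of \lem{BCC+} at $t=1$. First I would read off the angle $\theta$ implicit in the hypothesis by matching the amplitudes in the decomposition of $U\ket{0}\ket{\psi}$ against the canonical form $\sin\theta\,\ket{0}V\ket{\psi}+\cos\theta\,\ket{1}\ket{\phi}$ used in \lem{BCC+}. With this $\theta$ in hand, a single round of oblivious amplitude amplification gives
\begin{equation}
SU\ket{0}\ket{\psi} = \sin(3\theta)\ket{0}V\ket{\psi} + \cos(3\theta)\ket{1}\ket{\phi},
\end{equation}
where $S=-URU^\dagger R$ and $R=Z\otimes\openone$ is a single-qubit Clifford reflection on the ancilla.

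Next I would verify that the $\theta$ identified from the hypothesis satisfies $\sin(3\theta)=\pm 1$, so that the bad branch drops out of the right-hand side above and a computational-basis measurement of the ancilla deterministically returns $\ket{0}$, leaving $V\ket{\psi}$ on the data register.

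Finally I would conclude with a counting argument: the iterate $S$ uses one copy of $U$ and one of $U^\dagger$, so together with the leading $U$ the procedure calls $U$ (or its adjoint) exactly three times, while the two reflections $R$ and the overall sign form a constant-sized Clifford wrapper. I expect the only real obstacle to be the numerical check in the second step---matching the amplitude in the hypothesis to an angle $\theta$ for which a single amplification round is perfect; the remainder of the argument is a direct invocation of \lem{BCC+} together with a routine gate count.
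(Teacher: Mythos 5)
There is a genuine gap, and it sits precisely at the step you flagged as a ``routine numerical check'': the check fails. Matching the hypothesis against the canonical form of \lem{BCC+} forces $\sin\theta=\cos\theta=1/\sqrt{2}$, i.e.\ $\theta=\pi/4$, and then $\sin(3\theta)=\sin(3\pi/4)=1/\sqrt{2}\neq\pm 1$. Worse, no number of rounds helps: $\sin((2t+1)\pi/4)\in\{\pm 1/\sqrt{2}\}$ for every integer $t$, so $\theta=\pi/4$ is a fixed point of oblivious amplitude amplification and your procedure leaves the success probability stuck at $1/2$. A direct invocation of \lem{BCC+} on $U$ itself therefore cannot close the argument.

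The missing idea, and the route the paper actually takes, is to deliberately \emph{lower} the success amplitude before amplifying. Adjoin one extra ancilla and set $\tilde U\ket{0}^{2}\ket{\psi} := (H\ket{0})\otimes(U\ket{0}\ket{\psi})$, declaring success only when \emph{both} flag qubits read $0$. The good branch now carries amplitude $\frac{1}{\sqrt 2}\cdot\frac{1}{\sqrt 2}=\frac{1}{2}$, i.e.\ $\theta=\arcsin(1/2)=\pi/6$, and a single amplification round is exact since $\sin(3\pi/6)=1$:
\begin{equation}
-\tilde U R \tilde U^\dagger R\, \tilde U\,\ket{0}^{2}\ket{\psi} = \ket{0}^{2}\, V\ket{\psi}
\end{equation}
deterministically, where $R$ is the Clifford reflection associated with the two-qubit flag. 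Each use of $\tilde U$ or $\tilde U^\dagger$ costs one use of $U$ or $U^\dagger$, giving the claimed three applications plus a constant-sized Clifford wrapper (the Hadamards and reflections). Your gate-counting paragraph survives verbatim once $\tilde U$ replaces $U$; the rest of your outline is fine, but without this dilution trick the proof does not go through.
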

\begin{proof}
Oblivious amplitude estimation cannot be used directly to see this result because $\theta=\pi/4$ in this case.  Since the amplitudes of the ``good'' state vary like $\sin((2t+1)\theta)$, it is clear that amplitude is not transferred to the good solution by repeatedly using oblivious amplitude amplification.  Instead we poison the success probability by defining a new success condition that would happen with probability $1/4$ if amplitude amplification was not used.

\begin{equation}
\tilde{U}\ket{0}^2\ket{\psi}:= H\ket{0} \otimes (U\ket{0}\ket{\psi})
\end{equation}
If the top qubit in the above circuit were to be measured then it would yield the desired transformation on the bottom qubit with probability $1/4$.  This means that in essence, these transformations allow us to implement a transformation that logically has the same action upon success but with half the original success probability.  

This is all that we need in order to make the application of $V$ to $\ket{\psi}$ deterministic.
For $\ket{\tilde{\psi}}:= \ket{0}^2\ket{\psi}$ we see that we can express 
\begin{equation}
\tilde U\ket{0}\ket{\tilde\psi} = \frac{1}{2} \ket{0}\left(\ket{0}{V}\ket{{\psi}}\right)+ \frac{\sqrt{3}}{{2}}\ket{1}\ket{\tilde{\phi}}.
\end{equation}
for a $n+1$ qubit state $\ket{\tilde{\phi}}$.  We then define a successful outcome to be any measurement where the first two qubits are $0$, similar to the discussion in~\cite{BHM+00}.  Specifically, after applying possible simplifications, the amplitude amplification circuit takes the form:
\[
\tilde UR\tilde U^\dagger R\tilde U \ket{0}^2 \ket{\psi}:= \qquad\qquad\Qcircuit @R 1em @C 1.5em {
\lstick{\ket{0}}		&\gate{H}			&\qw			&\gate{-Z}		&\gate{H}&\gate{-Z}&\gate{H}&\qw\\
\lstick{\ket{0}}		&\multigate{1}{U} 	&\qw			&\ctrlo{-1}		&\multigate{1}{U^\dagger}&\ctrlo{-1}&\multigate{1}{U}&\qw\\
\lstick{\ket{\psi}}	&\ghost{U}		&\qw 		&\qw			&\ghost{U^\dagger}&\qw&\ghost{U}&\qw
}
\]
This is exactly in the form required by~\lem{BCC+} with $\theta= \arcsin(1/2)=\pi/6$.  Thus we see from \cite[Lemma 3.6]{BCC+13} that
\begin{align}
-\tilde UR\tilde U^\dagger R\tilde U \ket{0}^2 \ket{\psi}&= \sin((2+1)\pi/6)\ket{0}\left(\ket{0}{V}\ket{{\psi}}\right)+\cos((2+1)\pi/6)\frac{\sqrt{3}}{{2}}\ket{1}\ket{\tilde{\phi}}\nonumber\\
&=\ket{0}^2{V}\ket{{\psi}}.
\end{align}
Thus the success probability can be boosted to $100\%$ using three calls to $\tilde{U}$ and a constant number of Clifford operations, which is equivalent to three calls to $U$ and a constant number of Clifford gates as claimed.
\end{proof}
This corollary is significant because it allows us to make any such circuit, such as those used in PAR circuits or non--deterministic circuit synthesis methods \cite{PS13,BRS14}. 
In particular, \cor{RUS} leads directly to a proof of~\lem{PAR}.
\begin{proofof}{\lem{PAR}}
Consider $\PAR(\theta_1,\ldots,\theta_k)$.  Eq.~\eq{GHZeq} shows that the generalized~$\PAR$ circuits prepare the state

\begin{align}
\PAR(\theta_1,\ldots, \theta_k): &\ket{0}\ket{0}^{k-1}\ket{\psi} \nonumber\\
&\mapsto {\rm GHZ}^{-1}\left(e^{-i\theta_1 X} \otimes \cdots \otimes e^{-i\theta_k X}\ket{0}^{k} \ket{\psi} - \prod_{j=1}^k \cos(\theta_j)\ket{0^k}\ket{\psi} -(-i)^k \prod_{j=1}^k \sin(\theta_j) \ket{1^{k}}\ket{\psi}\right) \nonumber\\
&\qquad+ \frac{1}{\sqrt 2}\ket{0^k}(\cos(\theta_1)\cdots \cos(\theta_k) \openone -i \sin(\theta_1)\cdots \sin(\theta_k)X)\ket{\psi}\nonumber\\
&\qquad+ \frac{1}{\sqrt 2}\ket{1}\ket{0^{k-1}}(\cos(\theta_1)\cdots \cos(\theta_k) \openone +i \sin(\theta_1)\cdots \sin(\theta_k)X)\ket{\psi}.\label{eq:betaeq}
\end{align}
For notational simplicity in the following we will represent a number $v$ as a bitstring $v_1\cdots v_k$, where products within the ket of the form $\ket{v_1\cdots v_k}$ are interpreted to be the concatenation of the bits so that $\ket{v_1\cdots v_k}\equiv\ket{v}$.  Similarly we define exponentiation such that $\ket{v_j^2}\equiv\ket{v_jv_j}$.  Then applying the $k-1$ controlled not gates to the state and then the Hadamard on the transform first qubit we find that we can write
\begin{align}
{\rm GHZ}^{-1}&\left(e^{-i\theta_1 X} \otimes \cdots \otimes e^{-i\theta_k X}\ket{0}^{k} \ket{\psi} - \prod_{j=1}^k \cos(\theta_j)\ket{0^k}\ket{\psi} -(-i)^k \prod_{j=1}^k \sin(\theta_j) \ket{1^{k}}\ket{\psi}\right) \nonumber\\
&= (H\otimes \openone)\sum_{0^k\ne v \ne 1^k} \prod_{j=1}^k \cos(\theta_j)^{1-v_j}\sin(\theta_j)^{v_j} (-i)^{\sum_{q=1}^k v_q}\ket{v \oplus 0(v_1)^{k-1}}\ket{\psi}\nonumber\\
&=\frac{\sum_{0^k\ne v \ne 1^k} \prod_{j=1}^k \cos(\theta_j)^{1-v_j}\sin(\theta_j)^{v_j} (-i)^{\sum_{q=1}^k v_q}(\ket{v \oplus (v_1)^{k}}+(-1)^{v_1}\ket{v\oplus (v_1+1)v_1^{k-1}})}{\sqrt{2}}\ket{\psi}\nonumber\\
&=\frac{\sum_{0^k\ne v \ne 1^k} \prod_{j=1}^k \cos(\theta_j)^{1-v_j}\sin(\theta_j)^{v_j} (-i)^{\sum_{q=1}^k v_q}(\ket{0}\ket{v_2\cdots v_k \oplus (v_1)^{k-1}}+(-1)^{v_1}\ket{1}\ket{v_2\cdots v_k\oplus v_1^{k-1}})}{\sqrt{2}}\ket{\psi}\nonumber\\
&:=\frac{\sqrt{1-\beta^2}}{\sqrt{2}} \left(\ket{0}\ket{\chi_0} + \ket{1}\ket{\chi_1} \right)\ket{\psi}\label{eq:GHZtrans},
\end{align}
for some $\beta$ that is a function of $\theta_1,\ldots,\theta_k$.  The quantity $\beta$ is included in order to ensure that the resulting state has less than unit length.
Since $v_2\cdots v_k \ne (v_1)^{k-1}$ for any of the states in~\eq{GHZtrans}, ${\rm GHZ}^{-1} \ket{\phi}$ is orthogonal to ${\rm span}(\ket{0^k}, \ket{1}\ket{0^{k-1}})$, which means that we can use~\eq{GHZtrans} to write for $V_0= \exp(-i\tan^{-1}(\prod_j \tan(\theta_j))X)$
\begin{align}
\PAR(\theta_1,\ldots, \theta_k):\ket{0}^k\ket{\psi}\mapsto \frac{\ket{0}\left(\beta\ket{0}^{k-1} V_0 \ket{\psi} + \sqrt{1-|\beta|^2}\ket{\chi_0}\ket{\psi} \right)+\ket{1}\left(\beta\ket{0}^{k-1} V_0^\dagger \ket{\psi} + \sqrt{1-\beta^2}\ket{\chi_1}\ket{\psi} \right)}{\sqrt{2}}.\label{eq:PARfinal}
\end{align}
Since the overall state must be normalized to unit length, we can infer from~\eq{betaeq} that \begin{equation}
|\beta|=\sqrt{\prod_{j=1}^k \cos^2(\theta_j) + \prod_{j=1}^k \sin^2(\theta_j)}.
\end{equation}

Equation \eq{PARfinal} is of the form required by~\cor{RUS}, which implies that we can use oblivious amplitude amplification to deterministically prepare the state
$$
\beta\ket{0}^{k-1} V_0 \ket{\psi} + \sqrt{1-\beta^2}\ket{\chi_0}\ket{\psi},
$$
using $3$ PAR rotations and a constant sized Clifford circuit.  If we then measure the first $k-1$ qubits of this state to be $0^{k-1}$ then $V_0\ket{\psi}$ will be performed as required; whereas if we measure any other outcome the identity gate will be applied to $\ket{\psi}$.  The probability of implementing $V_0$ is therefore $\beta^2 = \prod_{j=1}^k \cos^2(\theta_j) + \prod_{j=1}^k \sin^2(\theta_j)$.  The circuit is therefore a Repeat-Until-Success circuit with the success probability claimed by~\lem{PAR}.
\end{proofof}

\section{Bounds on Mean and Variance of Cost of RUS Circuits}\label{app:meanvar}
Of course, if we nest several PAR and Gearbox circuits then we have to be able to bound the number of repetitions needed for success to occur with high probability.  This can be done using Chebyshev's inequality given the expectation value and the variance of the number of repetitions of the RUS circuits used.  These properties are summarized in the following lemma.

\begin{lemma}\label{lem:RUS}
Let $V$ be an RUS circuit composed of $k$ RUS circuits whose success probabilities are independent.  Let $x_1,\ldots,x_k$ be random variables that describe the number of attempts of each of the $k$ RUS circuits that are needed in order to achieve a successful measurement of each of the $k$ RUS circuits and let $N$ be a random variable describing total number of repetitions of $V$ needed for success.  The total number of repetitions of $x_1,\ldots,x_k$, $R_x$, that are needed obeys
\begin{align}
\mathbb{E} (R_x) &= \mathbb{E}(N) \sum_{j=1}^k \mathbb{E}(x_j)\\
\mathbb{V} (R_x) &= \mathbb{E}(N)\sum_{j=1}^k \mathbb{V}(x_j)+\mathbb{V}(N)(\sum_{j=1}^k \mathbb{E}(x_j))^2
\end{align}
\end{lemma}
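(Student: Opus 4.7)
The plan is to recognise $R_x$ as a compound random variable (a random sum) and apply Wald's identity for the mean and the law of total variance for the variance. First I would fix notation by letting $Y_i = \sum_{j=1}^k x_j^{(i)}$ denote the total number of sub-rotations consumed during the $i$-th attempt of the outer circuit $V$, where $x_j^{(i)}$ is an independent copy of $x_j$ (the number of attempts that the $j$-th inner RUS needs on the $i$-th outer attempt). Then by construction
\begin{equation}
R_x = \sum_{i=1}^{N} Y_i.
\end{equation}

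The central structural observation is that $N$ is independent of the family $\{Y_i\}_{i\ge 1}$. This uses the defining property of an RUS circuit: each inner circuit eventually succeeds and, upon success, deterministically produces the intended intermediate state regardless of how many attempts were needed. Hence the indicator of whether attempt $i$ of $V$ succeeds has a fixed probability that does not depend on $(x_1^{(i)},\dots,x_k^{(i)})$, and the stopping time $N$ (the index of the first successful outer attempt) is independent of the cost variables of every attempt. The hypothesis that the success probabilities of the constituent RUS circuits are independent is precisely what is needed to pull this through.

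Given this, Wald's identity immediately yields $\mathbb{E}(R_x) = \mathbb{E}(N)\,\mathbb{E}(Y)$, and since the $x_j^{(i)}$ are independent, $\mathbb{E}(Y) = \sum_{j=1}^k \mathbb{E}(x_j)$, which matches the claim. For the variance I would condition on $N$ and use the law of total variance:
\begin{equation}
\mathbb{V}(R_x) \;=\; \mathbb{E}\!\left(\mathbb{V}(R_x\mid N)\right) + \mathbb{V}\!\left(\mathbb{E}(R_x\mid N)\right).
\end{equation}
Conditional on $N$, $R_x$ is a sum of $N$ i.i.d.\ copies of $Y$, so $\mathbb{E}(R_x\mid N) = N\,\mathbb{E}(Y)$ and $\mathbb{V}(R_x\mid N) = N\,\mathbb{V}(Y)$. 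Taking expectations gives $\mathbb{V}(R_x) = \mathbb{E}(N)\,\mathbb{V}(Y) + \mathbb{V}(N)\,\mathbb{E}(Y)^2$, and independence of the $x_j$ then yields $\mathbb{V}(Y) = \sum_{j=1}^k \mathbb{V}(x_j)$, matching the stated formula.

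The main (and only real) obstacle is justifying the independence of $N$ from the $\{Y_i\}$; once that is in place the rest is a textbook random-sum calculation. I would therefore spend the bulk of the written proof carefully articulating why the RUS property, together with the stated independence of the constituent success probabilities, makes the outer success event measurable with respect to a sigma-algebra that is independent of the sub-attempt counts. Everything else is bookkeeping.
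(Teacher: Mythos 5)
Your proof is correct, but it takes a genuinely different route from the paper's. The paper proves the lemma by a direct moment computation: it writes $R_x = \sum_{i=1}^\infty \sum_{j=1}^k (\chi_j + \mathbb{E}(x_j))\,\openone_{i\le N}$, where $\chi_j = x_j - \mathbb{E}(x_j)$ are centered variables and $\openone_{i\le N}$ is an indicator, then expands $\mathbb{E}(R_x)$ and $\mathbb{E}(R_x^2)$ term by term, using the independence of $N$ and the $\chi_j$ to kill all cross terms that are not of the form $\chi_j^2$, and identifying $\mathbb{E}\bigl(\sum_i \openone_{i\le N}\bigr) = \mathbb{E}(N)$ and the corresponding second moment with $\mathbb{E}(N^2)$. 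You instead package $R_x = \sum_{i=1}^N Y_i$ as a compound random variable and invoke Wald's identity for the mean and the law of total variance for the variance. Your route is more modular and arguably cleaner --- the two displayed formulas drop out of standard identities once independence of $N$ from the per-attempt costs $\{Y_i\}$ is established --- whereas the paper's expansion is fully self-contained and makes the cancellation structure explicit without citing external probability facts. Both arguments hinge on exactly the same structural assumption, namely that the stopping variable $N$ is independent of the sub-attempt counts; the paper simply asserts this (``The random variables $N$ and $\chi_j$ are independent''), while you devote the bulk of your argument to justifying it from the RUS property (inner circuits always eventually succeed and deliver the intended state, so the outer success probability cannot depend on the inner attempt counts). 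That extra care is a genuine improvement in rigor over the paper's treatment, since without that independence the conditional step $\mathbb{V}(R_x \mid N) = N\,\mathbb{V}(Y)$ --- and equally the paper's factorization of expectations --- would fail.
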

\begin{proof}
Expanding $R_x$,
\begin{equation}
R_x := \sum_{i=1}^N \sum_{j=1}^k x_j = \sum_{i=1}^\infty \sum_{j=1}^k x_j \openone_{i\le N}= \sum_{i=1}^\infty \sum_{j=1}^k (\chi_j+\mathbb{E}(x_j))\openone_{i\le N},
\end{equation}
where each $\chi_j$ is a zero--mean random variable such that $x_j = \chi_j +\mathbb{E}(x_j)$ and $\openone_{i\le N}$ is the indicator function that is $1$ for $i\le N$ and $0$ otherwise.

The random variables $N$ and $\chi_j$ are independent and  each $\chi_j$ has zero--mean, which means that
\begin{equation}
\mathbb{E}(R_x) = \mathbb{E}( \sum_{i=1}^\infty \sum_{j=1}^k (\chi_j+\mathbb{E}(x_j))\openone_{i\le N}) = \mathbb{E}(\sum_{i=1}^\infty \openone_{i\le N})\sum_{j=1}^k \mathbb{E}(x_j)= \mathbb{E}(N)\sum_{j=1}^k \mathbb{E}(x_j).
\end{equation}
The variance is a little more difficult to compute
\begin{equation}
\mathbb{V}(R_x) = \mathbb{E}(R_x^2) - \mathbb{E}(R_x)^2.
\end{equation}
Expanding this out we find
\begin{align}
\mathbb{V}(R_x) = \mathbb{E}\left(\left[\sum_{i=1}^\infty \sum_{j=1}^k (\chi_j +\mathbb{E} x_j)\openone_{i\le N}\right]\left[\sum_{i'=1}^\infty \sum_{j'=1}^k (\chi_{j'} +\mathbb{E} x_{j'})\openone_{{i'}\le N'}\right] \right) - \mathbb{E}(N)^2(\sum_{j=1}^k \mathbb{E}(x_j))^2.\label{eq:variance}
\end{align}
Since $\mathbb{E}(\chi_j)=0$, many of the terms in~\eq{variance} are zero.  In fact, any term in that is not of the form $\chi_{j}^2$ or $\mathbb{E}(\chi_j)\mathbb{E}(\chi_k)$ is zero.  After this observation we see using the independence of each random variable that
\begin{align}
\mathbb{V}(R_x)&=\mathbb{E}\left(\left[\sum_{i=1}^\infty \sum_{j=1}^k \chi_j^2 \openone_{i\le N}\right]+\left[\sum_{i=1}^\infty \sum_{j=1}^k \mathbb{E} x_j\openone_{i\le N}\right]\left[\sum_{i'=1}^\infty \sum_{j'=1}^k \mathbb{E} x_{j'}\openone_{{i'}\le N'}\right] \right) - \mathbb{E}(N)^2(\sum_{j=1}^k \mathbb{E}(x_j))^2\nonumber\\
&= \mathbb{V}(x_j)\mathbb{E}(N) + \mathbb{E}(N^2)(\sum_{j=1}^k\mathbb{E}(x_j))^2-\mathbb{E}(N)^2 (\sum_{j=1}^k\mathbb{E}(x_j))^2\nonumber\\
&= \mathbb{V}(x_j)\mathbb{E}(N) + \mathbb{V}(N) (\sum_{j=1}^k\mathbb{E}(x_j))^2
\end{align}
\end{proof}

This provides estimates of the mean and the variance of the number of times that the RUS circuits have to be applied in order to achieve the desired rotation.  It is easy to go from these results to find estimates for the number of $T$ gates needed to implement either PAR or \GB.  We focus on the number of $T$ gates required because of their importance in fault tolerant quantum computing wherein $T$ gates are often considered to be the most costly operations because they can require several rounds of magic state distillation.
\begin{corollary}\label{cor:cost}
Let $T_1,\ldots, T_k$ be the probability distributions that describe the number of $T$ gates needed to implement the unitaries $\phi_1,\ldots,\phi_k$, $T_{\PAR(\phi_1,\ldots,\phi_k)}$ be the cost of the PAR circuit (acting on $\ket{0}$) and $T_{\GB(\phi_1,\ldots,\phi_k)}$ be the number of $T$ gates needed to implement the gearbox circuit then
\begin{align}
\mathbb{E}(T_{\PAR(\phi_1,\ldots,\phi_k)})&=\frac{4(k-1)+\sum_{j=1}^k \mathbb{E}(T_j)}{P},\\
\mathbb{V}(T_{\PAR(\phi_1,\ldots,\phi_k)})&= \frac{(4(k-1)+\sum_{j=1}^k \mathbb{E}(T_j))^2(1-P)}{P^2}+\frac{\sum_{j=1}^k \mathbb{V}(T_j)}{P},
\end{align}
where $P=\prod_{j=1}^k \cos(\phi_j)^2 +\prod_{j=1}^k \sin(\phi_j)^2$.  Similarly,
\begin{align}
\mathbb{E}(T_{\GB(\phi_1,\ldots,\phi_k)})&=\frac{4(k-1)+2\sum_{j=1}^k \mathbb{E}(T_j)}{P},\\
\mathbb{V}(T_{\GB(\phi_1,\ldots,\phi_k)})&= \frac{(4(k-1)+2\sum_{j=1}^k \mathbb{E}(T_j))^2(1-P)}{P^2}+\frac{2\sum_{j=1}^k \mathbb{V}(T_j)}{P},
\end{align}
where $Q=\prod_{j=1}^k \cos(\phi_j)^4 +\prod_{j=1}^k \sin(\phi_j)^4$.
\end{corollary}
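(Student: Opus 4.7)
The plan is to invoke \lem{RUS} directly, with the rotations $\phi_1,\ldots,\phi_k$ playing the role of the $k$ constituent sub-procedures whose per-use costs are random variables, and with $N$ being the number of times the outer RUS wrapper (PAR or GB) must be repeated before a successful measurement outcome is recorded.

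First I would separate each per-attempt cost into a rotation piece and a deterministic Clifford$+T$ overhead. For PAR (\fig{PAR}), one attempt consists of a single application of each $\phi_j$ (contributing independent random $T$-counts $T_1,\ldots,T_k$), the multiply-controlled $(i)^{k-1}X$ gate (contributing a deterministic $4(k-1)$ $T$-gates using the Toffoli-up-to-phase constructions cited in \sec{compMult}), and an inverse-GHZ measurement which is pure Clifford and so contributes no $T$-gates. For the gearbox circuit (\fig{GB}) the structure is almost identical, except that each rotation appears twice (once as $\phi_j$ and once as $-\phi_j$), each realised by an independent synthesis round; summing two i.i.d.\ copies doubles both the mean and the variance, giving effective per-constituent mean $2\mathbb{E}(T_j)$ and variance $2\mathbb{V}(T_j)$.

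Next I would identify the distribution of $N$. For PAR acting on $\ket 0^{\otimes k}\ket\psi$, the $Z$-correction trick discussed just after \thm{PAR} lets us treat both GHZ outcomes as ``success,'' giving per-attempt success probability $P=\prod_j\cos^2(\phi_j)+\prod_j\sin^2(\phi_j)$; hence $N$ is geometric with $\mathbb{E}(N)=1/P$ and $\mathbb{V}(N)=(1-P)/P^2$. For gearbox, \thm{smallrot} yields per-attempt success probability $Q=\prod_j\cos^4(\phi_j)+\prod_j\sin^4(\phi_j)$, so $\mathbb{E}(N)=1/Q$ and $\mathbb{V}(N)=(1-Q)/Q^2$. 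Substituting these moments, together with the per-attempt rotation means and variances above, into the two identities of \lem{RUS} and treating the deterministic overhead $4(k-1)$ as an additional constant random variable (zero variance, which leaves the variance sums unchanged but adds $4(k-1)$ inside the expected per-attempt cost) produces the claimed formulas after elementary algebra: $\mathbb{E}(N)\bigl(4(k-1)+\sum_j\mathbb{E}(T_j)\bigr)$ for the expected total $T$-count of PAR, and $\mathbb{V}(N)\bigl(4(k-1)+\sum_j\mathbb{E}(T_j)\bigr)^2+\mathbb{E}(N)\sum_j\mathbb{V}(T_j)$ for its variance, with the $T_j\mapsto 2T_j$ and $P\mapsto Q$ replacements yielding the corresponding gearbox formulas.

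The main obstacle in making this argument rigorous is verifying the independence hypothesis of \lem{RUS}: that the per-attempt rotation costs $\{T_j\}$ are independent of the number $N$ of outer repetitions, and that successive attempts use fresh i.i.d.\ rotation syntheses. Independence across attempts is automatic if we assume each attempt draws a new synthesis from the distribution, and independence of $N$ from $\{T_j\}$ holds because the decision to repeat is driven solely by the final GHZ-basis measurement outcome on the ancilla register and is unaffected by the internal randomness of any single-qubit synthesis subroutine used to realise $\phi_j$. A minor bookkeeping point to address is that, in the gearbox case, the stated formulas list the symbol $P$ in denominators whereas the relevant success probability defined at the end of the corollary is $Q$; the derivation above makes plain that $Q$ is what should appear.
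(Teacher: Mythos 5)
Your proposal is correct and follows essentially the same route as the paper's own proof: substitute $T_j$ for $x_j$ in \lem{RUS}, take the number of outer repetitions $N$ to be geometric with per-attempt success probability $P$ (resp.\ $Q$) so that $\mathbb{E}(N)=1/P$ and $\mathbb{V}(N)=(1-P)/P^2$, add the deterministic $4(k-1)$ $T$-gates for the multiply-controlled gate as a zero-variance term, and account for the gearbox case by the fact that each $\phi_j$ appears twice (with independent syntheses, so both the mean and the variance double). Your closing observation is also right: the occurrences of $P$ in the gearbox formulas are a typo for the quantity $Q=\prod_{j=1}^k \cos^4(\phi_j)+\prod_{j=1}^k\sin^4(\phi_j)$, which the statement defines but never actually uses.
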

\begin{proof}
Let us first begin with PAR.  There are two sources of $T$ gates in PAR: $T$ gates from the $k$--controlled $X$ gate and $T$ gates from implementing the $k$ rotations:
\begin{equation}
\mathbb{E}(T_{\PAR(\phi_1,\ldots,\phi_k)})= 4(k-1) \mathbb{E}(N) +\mathbb{E}(\sum_{i=1}^N \sum_{j=1}^k T_j)= (4(k-1)+\sum_{j=1}^k \mathbb{E}(T_j))\mathbb{E}(N),
\end{equation}
here I have assumed that the Toffoli gate construction of~\cite{JWM+12} is used to implement the $k$--controlled $X$ gate. $N$ obeys a geometric distribution with $p=P$, which means that $\mathbb{E}(N)=1/P$ and hence we find by substituting $x_j\rightarrow T_j$ in~\lem{RUS} that
\begin{equation}
\mathbb{E}(T_{\PAR(\phi_1,\ldots,\phi_k)})=  \frac{4(k-1)+\sum_{j=1}^k \mathbb{E}(T_j)}{P}.
\end{equation}
Similarly since the variance of the geometric distribution is $(1-P)/P^2$,
\begin{align}
\mathbb{V}(T_{\PAR(\phi_1,\ldots,\phi_k)})&=   \mathbb{V}(\sum_{i=1}^N \sum_{j=1}^k T_j+4(k-1))\nonumber\\
&= \frac{(4(k-1)+\sum_{j=1}^k \mathbb{E}(T_j))^2(1-P)}{P^2}+\frac{\sum_{j=1}^k \mathbb{V}(T_j)}{P}.
\end{align}

The analogous formulas for $\GB$ can be found by noting that the main difference between the $T$ count of $\GB$ and $\PAR$ is that each $\phi_j$ is repeated twice in \GB, which leads to the costs claimed above.
\end{proof}

\section{Non--RUS gearbox circuits}\label{app:GB}
$\GB$~can be converted into a non--deterministic circuit that requires no online rotations as given below.  Such circuits are particularly important in cases where qubits are inexpensive and the architecture allows any parallel operations to be performed simultaneously.  This circuit may be preferable to $\PAR(\phi_1,\phi_1)$ for approximately squaring input rotations because the circuit requires only online Clifford operations, which are typically inexpensive in fault tolerant implementations.  Such circuits also provide a lower $T$--depth method for implementing $\GB^{\circ k}(x)$ than that used in~\sec{square} or~\cite{WK13} because the gearbox circuit is insensitive to the sign of the input rotation angle and also because most of the required operations act upon $\ket{0}$.  The latter case is significant because under--rotations can be corrected by applying a $Z$ gate if required.
\begin{lemma}
There exists a non--deterministic version of the gearbox circuit $\GB(\phi_1)$ wherein all of the rotations can be performed offline that has success probability $\frac{1}{2}(\cos^4(\phi_1)+ \sin^4(\phi_1))$ and upon failure either performs $e^{i \arctan(\tan^2(\phi_1)X)}$ or applies a Clifford operation.
\end{lemma}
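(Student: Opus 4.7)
The plan is to replace each of the two online rotations in the standard gearbox circuit with an offline-prepared resource state, and then verify by direct state-tracking that the result has the claimed structure. The first rotation $e^{-i\phi_1 X}$ acts on an ancilla initialized in $\ket{0}$, so it can be absorbed entirely into a single offline-prepared state $\ket{a}:=e^{-i\phi_1 X}\ket{0}$ on ancilla $A$. The second rotation $e^{+i\phi_1 X}$ is more delicate because it must act on $A$ after the controlled-$(-iX)$ has already entangled $A$ with the target. For this step I would borrow the programmable-ancilla-rotation trick: introduce a second ancilla $B$ pre-cached as $\ket{b}:=e^{+i\phi_1 X}\ket{0}$, apply a CNOT from $B$ to $A$, a Hadamard to $B$, and measure $B$ in the computational basis. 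This teleports $e^{\pm i\phi_1 X}$ onto $A$, with the sign determined by the outcome $b\in\{0,1\}$ and each sign occurring with probability $1/2$. All online gates are then Clifford, as the lemma demands.

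I would then verify the claimed outcome statistics by tracking the joint state of $(A,B,T)$ through each step. The key intermediate observation is that after the CNOT, Hadamard, and $B$-measurement, the $b=0$ branch reproduces exactly the state of the standard gearbox circuit after its online $e^{+i\phi_1 X}$ on $A$, while the $b=1$ branch reproduces the state of the same circuit with the sign of that rotation flipped to $e^{-i\phi_1 X}$. Applying~\thm{smallrot} to each branch then reads off the four joint outcomes $(b,a)$: outcome $(0,0)$ succeeds and applies $e^{-i\arctan(\tan^2\phi_1) X}$ with probability $\frac{1}{2}(\cos^4\phi_1+\sin^4\phi_1)$; outcomes $(0,1)$ and $(1,1)$ apply the Clifford rotations $e^{\pm i\pi/4\,X}$ with combined probability $2\sin^2\phi_1\cos^2\phi_1$; and outcome $(1,0)$ applies $e^{+i\arctan(\tan^2\phi_1) X}$ with probability $\frac{1}{2}(\cos^4\phi_1+\sin^4\phi_1)$. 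These probabilities sum to $(\cos^2\phi_1+\sin^2\phi_1)^2=1$, the success probability matches the claim, and every failure branch is either a Clifford or the stated non-Clifford $e^{+i\arctan(\tan^2\phi_1) X}$.

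The main obstacle is conceptual rather than computational: the PAR sign ambiguity cannot be undone by a Clifford when the circuit is applied to an arbitrary target state, so the $(1,0)$ branch leaves us with a non-Clifford rotation that is not cheaply correctable. This is precisely why the lemma describes the construction as \emph{non}-deterministic rather than repeat-until-success, and why its failure clause explicitly permits $e^{+i\arctan(\tan^2\phi_1) X}$. The only remaining technical item is to recognize the $a=1$ outcomes as Clifford by checking the identities $(iI\mp X)/\sqrt{2}=\pm i\,e^{\pm i\pi/4\,X}$ up to global phase.
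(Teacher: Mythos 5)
Your proposal is correct and matches the paper's own proof in all essentials: both replace the second online rotation of the gearbox with a PAR-style gate teleportation (CNOT from a cached ancilla, Hadamard, measurement), invoke the $k=1$ PAR behavior to get the teleported sign with probability $1/2$ each way, and then read off the four joint outcomes --- success $e^{-i\arctan(\tan^2\phi_1)X}$ with probability $\frac{1}{2}(\cos^4\phi_1+\sin^4\phi_1)$, the uncorrectable backwards rotation $e^{+i\arctan(\tan^2\phi_1)X}$ with equal probability, and the two Clifford branches $e^{\pm i\pi X/4}$. The only difference is cosmetic: you cache $e^{+i\phi_1 X}\ket{0}$ on the teleportation ancilla while the paper caches $e^{-i\phi_1 X}\ket{0}$, which merely swaps the labeling of which measurement outcome yields which sign.
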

\begin{proof}
The circuit is

\begin{align}
 &\Qcircuit @R 1em @C 1.5em {
\lstick{e^{-i X\phi_1}\ket{0}}	&\qw		&\ctrl{1}	&\gate{H}	&\meter\\
\lstick{e^{-i X\phi_1}\ket{0}}	&\ctrl{1}	&\targ		&\qw		&\meter\\
\lstick{\ket{\psi}}			&\gate{-iX}	&\qw		&\qw		&\qw\\
}\label{eq:composedcirc2}
\end{align}
We have from~\thm{PAR} that the operations affecting the topmost qubit simply act to teleport the rotation to the middle qubit upon success.  In this case, we want the rotation direction to flip which happens with probability $1/2$ for $\PAR(\phi_1)$.  If this operation is successful then the operations on the bottom two qubits are equivalent to those in $\GB(\phi_1)$, which succeeds with probability $(\cos^4(\phi_1)+ \sin^4(\phi_1))$~\cite{WK13}.  The claim regarding the overall success probability then follows from noting that the measurement success probabilities are independent and hence the total success probability is the product of the two probabilities.

If the first measurement yield $``1"$ then the circuit will deviate from the original gearbox circuit. This results in the following effective transformation for the bottom two qubits.
\begin{align}
e^{-i\phi_1 X}\ket{\psi} \mapsto \cos^2(\phi_1)\ket{0}\ket{\psi}-\sin^2(\phi_1) \ket{0}(-iX)\ket{\psi} -i \cos(\phi_1)\sin(\phi_1) (\ket{1}\ket{\psi}+ \ket{1}(-iX)\ket{\psi}).
\end{align}
If the middle qubit is measured to be $0$ then $e^{i\arctan(\tan^2(\phi_1))X}$ is applied on the bottom most qubit, which is a rotation in the opposite direction from what is intended.  Alternatively if the middle qubit is measured to be $1$ then a Clifford operation, $e^{-i\pi X/4}$, is applied to $\ket{\psi}$.
\end{proof}
The possibility of failure can also be removed from these circuits if the user has access to $\phi_1$ as well as $\{\phi_j:j=2,\ldots \infty\}$ wherein
\begin{equation}
\phi_{j+1} = \tan^{-1}\left(\sqrt{\tan(2\tan^{-1}(\tan^2 (\phi_j)))} \right).
\end{equation}
This is because $\GB(\phi_{j+1}) = 2\GB(\phi_{j})$ and thus, similar to~\cite{JWM+12}, if the direction of the rotation is opposite to that initially intended then $\GB(\phi_j)$, for any $j\ge 2$, will correct the sum of all failures due to prior attempts:
\begin{equation}
\GB(\phi_j) - \sum_{j=1}^{j-1} \GB(\phi_2) = \phi_1.
\end{equation}
Obviously, the cost of preparing such states using reversible logic or RUS synthesis may be quite high so we do not advocate this approach in general.  However, it is interesting to note that many of the costs of RUS synthesis can be reduced if we were to posit the existence of a more complex resource state factory.




\bibliographystyle{unsrt}

\end{document}